\PassOptionsToPackage{nosumlimits,nonamelimits}{amsmath}
\PassOptionsToPackage{colorlinks,linkcolor={blue},citecolor={blue},urlcolor={red},breaklinks=true,final}{hyperref}
\documentclass[a4paper,UKenglish,cleveref, Cref, thm-restate,numberwithinsect,final]{lipics-v2021}

\hideLIPIcs

\nolinenumbers

\numberwithin{equation}{section}

\usepackage{stel-common}

\usepackage{bm} %

\newcommand{\mybar}[3]{%
  \mathrlap{\hspace{#2}\overline{\scalebox{#1}[1]{\phantom{\ensuremath{#3}}}}}\ensuremath{#3}
}

\newcommand{\barB}{\mybar{0.6}{1.65pt}{B}}

\newcommand{\barSigmas}{\mybar{0.9}{0pt}{\Sigmas}}
\newcommand{\barrho}{\mybar{0.9}{1pt}{\rho}}

\newcommand{\barF}{\mybar{0.6}{2pt}{F}}

\newcommand{\barSigma}{\mybar{0.9}{1pt}{\Sigma}}

\newcommand{\tr}{\mathsf{tr}}

\newcommand{\Kl}{\mathbf{Kl}}

\newcommand{\DCPOb}{\mathbf{DCPO}_\bot}

\providecommand{\catname}{\mathbf} 
\providecommand{\clsname}{\mathcal}
\providecommand{\oname}[1]{{\mathop{\mathsf{#1}}\xspace}}

\def\defcatname#1{\expandafter\def\csname B#1\endcsname{\catname{#1}}}
\def\defcatnames#1{\ifx#1\defcatnames\else\defcatname#1\expandafter\defcatnames\fi}
\defcatnames ABCDEFGHIJKLMNOPQRSTUVWXYZ\defcatnames

\def\defclsname#1{\expandafter\def\csname C#1\endcsname{\clsname{#1}}}
\def\defclsnames#1{\ifx#1\defclsnames\else\defclsname#1\expandafter\defclsnames\fi}
\defclsnames ABCDEFGHIJKLMNOPQRSTUVWXYZ\defclsnames

\def\defbbname#1{\expandafter\def\csname BB#1\endcsname{{\mathbb{#1}}}}
\def\defbbnames#1{\ifx#1\defbbnames\else\defbbname#1\expandafter\defbbnames\fi}
\defbbnames ABCDEFGHIJKLMNOPQRSTUVWXYZ\defbbnames

\def\Set{\catname{Set}}

\DeclareOldFontCommand{\bf}{\normalfont\bfseries}{\mathbf}

\providecommand{\id}{\mathsf{id}}

\providecommand{\comp}{\mathbin{\circ}}

\providecommand{\bang}{\operatorname!}				             %

\providecommand{\xto}[1]{\,\xrightarrow{#1}\,}

\providecommand{\dar}{\kern-1.2pt\operatorname{\downarrow}}	
\providecommand{\uar}{\kern-1.2pt\operatorname{\uparrow}}

\providecommand{\fst}{\oname{fst}}

\providecommand{\brks}[1]{\langle #1\rangle}

\providecommand{\inl}{\oname{inl}}
\providecommand{\inr}{\oname{inr}}

\DeclareSymbolFont{Symbols}{OMS}{cmsy}{m}{n}
\DeclareMathSymbol{\iobj}{\mathord}{Symbols}{"3B}

\usepackage{stmaryrd}

\providecommand{\pacman}[1]{}					                     %

\newcommand{\undefine}[1]{\let #1\relax}					                       %

\providecommand{\mone}{{\text{\kern.5pt\rmfamily-}\mathsf{\kern-.5pt1}}}

\makeatletter
\def\mfix#1{\oname{#1}\@ifnextchar\bgroup\@mfix{}}	       %
\def\@mfix#1{#1\@ifnextchar\bgroup\mfix{}}			           %
\makeatother

\providecommand{\case}[3]{\mfix{case}{\mathbin{}#1}{of}{#2}{\kern-1pt;}{\mathbin{}#3}}

\DeclareMathSymbol{\mathinvertedexclamationmark}{\mathord}{operators}{'074}
\DeclareMathSymbol{\mathexclamationmark}{\mathord}{operators}{'041}
\makeatletter
\newcommand{\raisedmathinvertedexclamationmark}{%
  \mathord{\mathpalette\raised@mathinvertedexclamationmark\relax}%
}
\newcommand{\raised@mathinvertedexclamationmark}[2]{%
  \raisebox{\depth}{$\m@th#1\mathinvertedexclamationmark$}%
}
\makeatother

\newcommand{\R}{\mathcal{R}}

\newcommand{\Pow}{\mathcal{P}}

\newcommand{\hatini}{\hat{\ini}}

\newcommand{\qqand}{\qquad\text{and}\qquad}

\newcommand{\Sigmas}{\Sigma^{\star}}

\newcommand{\ar}{\mathsf{ar}}

\renewcommand{\S}{{\mathcal{S}}}

\newcommand{\seq}{\subseteq}
\newcommand{\ol}{\overline}

\newcommand{\outl}{\mathsf{outl}}
\newcommand{\outr}{\mathsf{outr}}

\newcommand{\beh}{{\mathsf{beh}}}
\providecommand{\B}{}
\providecommand{\C}{}
\providecommand{\D}{}

\renewcommand{\B}{{\mathcal{B}}}
\renewcommand{\C}{{\mathcal{C}}}
\renewcommand{\D}{{\mathcal{D}}}

\renewcommand{\id}{{\mathsf{id}}}
\newcommand{\Nat}{\mathds{N}}

\newcommand{\f}{\oname{f}}

\newcommand{\takeout}[1]{\empty}

\newcommand{\ini}{\iota}

\renewcommand{\rho}{\varrho}

\makeatletter
\newsavebox{\@brx}
\newcommand{\llangle}[1][]{\savebox{\@brx}{\(\m@th{#1\langle}\)}%
  \mathopen{\copy\@brx\kern-0.5\wd\@brx\usebox{\@brx}}}
\newcommand{\rrangle}[1][]{\savebox{\@brx}{\(\m@th{#1\rangle}\)}%
  \mathclose{\copy\@brx\kern-0.5\wd\@brx\usebox{\@brx}}}
\makeatother

\renewcommand{\comp}{\cdot}
\renewcommand{\c}{\colon}

\renewcommand{\Nat}{\mathbb{N}}

\newcommand{\mS}{{\mu\Sigma}}

\renewcommand{\epsilon}{\varepsilon}

\newcommand*\xbar[1]{%
  \kern.2em\hbox{%
    \vbox{%
      \hrule height 0.5pt %
      \kern0.5ex%
      \hbox{%
        \kern-0.2em%
        \ensuremath{#1}%
        \kern-0.4em%
      }%
    }%
  }\kern.4em %
} 

\makeatletter
\newcommand{\monto}{\@ifstar{\@mtolifted}{\@mto}}
\newcommand{\@mto}{\multimapdot}
\newcommand{\@mtolifted}{\mathbin{\xbar{\multimapdot}}}
\makeatother

\newcommand{\app}[2]{\,}

\makeatletter
	\newcommand{\pushright}[1]{\ifmeasuring@#1\else\omit\hfill$\displaystyle#1$\fi\ignorespaces}
	\newcommand{\pushleft}[1]{\ifmeasuring@#1\else\omit$\displaystyle#1$\hfill\fi\ignorespaces}
\makeatother

\usepackage{enumitem}
\setlist[enumerate,1]{label=(\arabic*),font=\normalfont,align=left,leftmargin=0pt,labelindent=0pt,listparindent=\parindent,labelwidth=0pt,itemindent=!,topsep=2pt,parsep=0pt,itemsep=2pt,start=1}
\setlist[enumerate,2]{label=(\alph*),font=\normalfont,labelindent=*,leftmargin=*,start=1}
\setlist[itemize]{labelindent=*,leftmargin=*}
\setlist[description]{labelindent=*,leftmargin=*,itemindent=-1 em}

\usepackage{microtype}

\usepackage{ifdraft}
\ifdraft{
  \usepackage[draft]{showlabels}
  
  \usepackage[footnote,marginclue,nomargin]{fixme}
}{
 \usepackage[final]{fixme}
}

\FXRegisterAuthor{hu}{ahu}{HU} %
\FXRegisterAuthor{sm}{asm}{SM} %
\FXRegisterAuthor{st}{ast}{ST} %
\FXRegisterAuthor{ls}{als}{LS} %
\FXRegisterAuthor{sg}{asg}{SG} %

\renewcommand{\comp}{\mathbin{\circ}}
\renewcommand{\c}{\colon}

\newcommand{\term}{1} %

\usepackage{seqsplit}
\usepackage{xstring}
\usepackage{xcolor}

\usepackage{centernot}
\usepackage{mathtools}

\tikzstyle{shiftarr}=[
        rounded corners,%
        to path={--([#1]\tikztostart.center)
                     -- ([#1]\tikztotarget.center) \tikztonodes
                     -- (\tikztotarget)},
]

\tikzset{
    commutative diagrams/.cd,
    arrow style=tikz,
    row sep=large,
    column sep = {10em}
}

\tikzset{cong/.style={draw=none,edge node={node [sloped, allow upside down, auto=false]{$\cong$}}},
         iso/.style={draw=none,every to/.append style={edge node={node [sloped, allow upside down, auto=false]{$\cong$}}}}}

\usetikzlibrary{decorations.pathmorphing}

\tikzcdset{scale cd/.style={
	every label/.append style={scale=#1},
    cells={nodes={scale=#1}}
}}
\usepackage{hypcap}
\def\resettheorembrackets{
\def\theorembracketopen{(}
\def\theorembracketclose{)}
}
\resettheorembrackets
\makeatletter
\def\@spopargbegintheorem#1#2#3#4#5{\trivlist
      \item[\hskip\labelsep{#4#1\ #2}]{#4{\theorembracketopen}#3{\theorembracketclose}\@thmcounterend\ }#5}
\makeatother

\newcommand{\resetCurThmBraces}{%
  \gdef\curThmBraceOpen{(}%
  \gdef\curThmBraceClose{)}}
\resetCurThmBraces
\newcommand{\removeThmBraces}{%
  \gdef\curThmBraceOpen{}%
  \gdef\curThmBraceClose{}}
\resetCurThmBraces

\newenvironment{notheorembrackets}{\removeThmBraces}{\resetCurThmBraces}

\usepackage{stackengine}
\stackMath
\newcommand\tsup[2][2]{%
 \def\useanchorwidth{T}%
  \ifnum#1>1%
    \stackon[-1.3ex]{\tsup[\numexpr#1-1\relax]{#2}}{\scalebox{2}[1]{$\mathchar"307E$}\kern-.5pt}%
  \else%
    \stackon[-1ex]{#2}{\scalebox{2}[1]{$\mathchar"307E$}\kern-.5pt}%
  \fi%
}

\newsavebox{\kleisliarrow}
\savebox{\kleisliarrow}{%
\begin{tikzpicture}[
      baseline=(arrow.base),
      inner sep=0mm,
      outer sep=0mm,
      ]
      \node[draw=none,
      anchor=base,
      overlay,
      inner sep=0,
      outer sep=0,
      minimum height=1em,
      ] (arrow) {$\phantom{\longrightarrow}$};

      \coordinate (circle pos) at
        ($ (arrow.south) !.58! (arrow.north)$);
      \begin{scope}[even odd rule,overlay]
        \clip  (circle pos) circle (0.17em)
           (arrow.north west) rectangle (arrow.south east);
      \node[draw=none,
      anchor=base,
      inner sep=0,
      outer sep=0,
      ] (arrow) {$\longrightarrow$};
    \end{scope}
    \draw[fill=black,overlay]
      ($ (arrow.south) !.58! (arrow.north)$)
      circle (0.15em);
    \draw[use as bounding box,draw=none] (arrow.north west) rectangle (arrow.south east);
  \end{tikzpicture}}

\newcommand{\kleislito}{\ensuremath{\mathbin{\usebox{\kleisliarrow}}}}

\newsavebox{\kleislidot}
\savebox{\kleislidot}{%
\begin{tikzpicture}[baseline=0pt,outer sep=0pt]
    \draw[fill=black,solid] (0,0) circle (1.5pt);
  \end{tikzpicture}}

\tikzstyle{kleisli}=[
"{\usebox{\kleislidot}}"{red,anchor=center,font=\normalsize,pos=0.5},
outer sep = 1pt, %
]

\newsavebox{\kleisliarroww}
\savebox{\kleisliarroww}{%
\begin{tikzpicture}[
      baseline=(arrow.base),
      inner sep=0mm,
      outer sep=0mm,
      ]
      \node[draw=none,
      anchor=base,
      overlay,
      inner sep=0,
      outer sep=0,
      minimum height=1em,
      ] (arrow) {$\phantom{\longrightarrow}$};

      \coordinate (circle pos) at
        ($ (arrow.south) !.58! (arrow.north)$);
      \begin{scope}[even odd rule,overlay]
        \clip  (circle pos) circle (0.17em)
           (arrow.north west) rectangle (arrow.south east);
      \node[draw=none,
      anchor=base,
      inner sep=0,
      outer sep=0,
      ] (arrow) {$\longrightarrow$};
    \end{scope}
    \draw[fill=none,overlay]
      ($ (arrow.south) !.58! (arrow.north)$)
      circle (0.15em);
    \draw[use as bounding box,draw=none] (arrow.north west) rectangle (arrow.south east);
  \end{tikzpicture}}

\newsavebox{\kleisliarrowb}
\savebox{\kleisliarrowb}{%
\begin{tikzpicture}[
      baseline=(arrow.base),
      inner sep=0mm,
      outer sep=0mm,
      ]
      \node[draw=none,
      anchor=base,
      overlay,
      inner sep=0,
      outer sep=0,
      minimum height=1em,
      ] (arrow) {$\phantom{\longrightarrow}$};

      \coordinate (circle pos) at
        ($ (arrow.south) !.58! (arrow.north)$);
      \begin{scope}[even odd rule,overlay]
        \clip  (circle pos) circle (0.17em)
           (arrow.north west) rectangle (arrow.south east);
      \node[draw=none,
      anchor=base,
      inner sep=0,
      outer sep=0,
      ] (arrow) {$\longrightarrow$};
    \end{scope}
    \draw[fill=none,overlay]
      ($ (arrow.south) !.58! (arrow.north)$)
      circle (0.15em);
    \draw[use as bounding box,draw=none] (arrow.north west) rectangle (arrow.south east);
  \end{tikzpicture}}

\newcommand{\kleislitow}{\ensuremath{\mathbin{\usebox{\kleisliarroww}}}}

\newsavebox{\kleislidotw}
\savebox{\kleislidotw}{%
\begin{tikzpicture}[baseline=0pt,outer sep=0pt]
    \draw[fill=white,solid] (0,0) circle (1.5pt);
  \end{tikzpicture}}

\tikzstyle{kleisliw}=[
"{\usebox{\kleislidotw}}"{red,anchor=center,font=\normalsize,pos=0.5},
outer sep = 1pt, %
]

\newsavebox{\kleislidotb}
\savebox{\kleislidotb}{%
\begin{tikzpicture}[baseline=0pt,outer sep=0pt]
    \draw[fill=white,solid] (0,0) circle (1.5pt);
  \end{tikzpicture}}

\tikzstyle{kleislib}=[
"{\usebox{\kleislidotb}}"{red,anchor=center,font=\normalsize,pos=0.5},
outer sep = 1pt, %
]

\renewcommand{\kleislito}{\kleislitow}

\ifdraft{
  \newcommand{\ST}[1]{\textcolor{purple}{ST: #1}}
  \newcommand{\SG}[1]{\textcolor{red!50!magenta}{\bfseries SG: #1}}
  \newcommand{\RJ}[1]{\textcolor{green!75!black}{\bfseries RJ: #1}}
  \newcommand{\JF}[1]{\textcolor{orange}{JF: #1}}
  \newcommand{\STin}[1]{\todo[color=purple!30,inline]{Stelios: #1}}
  \newcommand{\SGin}[1]{\todo[color=orange!60,inline]{Sergey: #1}}
}{
  \newcommand{\ST}[1]{}
  \newcommand{\SG}[1]{}
  \newcommand{\RJ}[1]{}
  \newcommand{\JF}[1]{}
  \newcommand{\STin}[1]{}
  \newcommand{\SGin}[1]{}
}

\usepackage{xspace}

\usepackage{etoolbox} %

\theoremstyle{plain}

\theoremstyle{definition}
\newtheorem{defn}[theorem]{Definition} %
\newtheorem{rem}[theorem]{Remark} %

\newtheorem{assumptions}[theorem]{Assumptions}
\newtheorem{construction}[theorem]{Construction}
\crefname{expl}{Example}{Examples}
\crefname{defn}{Definition}{Definitions}
\crefname{construction}{Construction}{Constructions}
\crefname{assumptions}{Assumptions}{Assumptions}

\newcommand{\W}{{\mathcal{W}}}

\let\oldcheckmark\checkmark
\renewcommand{\checkmark}{\raisebox{-4pt}{\scalebox{1.2}[.65]{$\oldcheckmark$}}}

\usepackage{tensor}

\makeatletter
\DeclareRobustCommand\widecheck[1]{{\mathpalette\@widecheck{#1}}}
\def\@widecheck#1#2{%
  \setbox\z@\hbox{\m@th$#1#2$}%
  \setbox\tw@\hbox{\m@th$#1%
    \widehat{%
      \vrule\@width\z@\@height\ht\z@
      \vrule\@height\z@\@width\wd\z@}$}%
  \dp\tw@-\ht\z@ppp
  \@tempdima\ht\z@ \advance\@tempdima2\ht\tw@ \divide\@tempdima\thr@@
  \setbox\tw@\hbox{%
    \raise\@tempdima\hbox{\scalebox{1}[-1]{\lower\@tempdima\box
        \tw@}}}%
  {\ooalign{\box\tw@ \cr \box\z@}}}
\makeatother

\makeatletter
\newcommand{\superimpose}[2]{{%
  \ooalign{%
    \hfil$\m@th#1\@firstoftwo#2$\hfil\cr
    \hfil$\m@th#1\@secondoftwo#2$\hfil\cr
  }%
}}
\makeatother

\newcommand{\while}{\ensuremath{\mathsf{Imp}}\xspace}
\newcommand{\whiletwo}{\ensuremath{\mathsf{Imp}^{\mathsf{2}}}\xspace}

\newcommand{\store}{{\mathcal{S}}}

\newcommand{\bZ}{\mathbb{Z}}
\newcommand{\expr}{\mathsf{Ex}}
\newcommand{\Progs}{\mathsf{P}}

\newcommand{\asn}{\mathrel{\coloneqq}}

\newcommand{\seqcomp}{\mathop{;}}

\newcommand{\Tr}{\mathsf{R}}
\newcommand{\Co}{\mathsf{W}}

\newcommand{\Trs}{{\textsf{r}}}
\newcommand{\Cos}{{\textsf{w}}}

\newcommand{\run}[2]{[#2]_{#1}}

\theoremstyle{definition}
\newtheorem{notation}[theorem]{Notation}

\usepackage{scalerel,stackengine,graphicx}

\usepackage[normalem]{ulem}

\keywords{Coalgebra, Operational Semantics, Process Algebra, Abstract GSOS, Trace Semantics, Rule Formats}

\ccsdesc[500]{Theory of computation~Process calculi}

\def\star{\ast} %

\author{Robin Jourde}{Université Savoie Mont Blanc, France}{robin.jourde@univ-smb.fr}{https://orcid.org/0009-0003-5787-8463}{}

\author{Henning Urbat}{Friedrich-Alexander-Universität
Erlangen-Nürnberg, Germany}{henning.urbat@fau.de}{https://orcid.org/0000-0002-3265-7168}{}

\author{Sergey Goncharov}{University of Birmingham, UK}{s.goncharov@bham.ac.uk}{https://orcid.org/0000-0001-6924-8766}{Funded by the Deutsche Forschungsgemeinschaft (DFG, German
  Research Foundation) -- project number 527481841 (ATLaS)}
\author{Stelios Tsampas}{
Syddansk Universitet, Denmark}{stelios@imada.sdu.dk}{https://orcid.org/0000-0001-8981-2328}{Funded by the Deutsche Forschungsgemeinschaft (DFG, German Research
Foundation) -- project number 527481841 (ATLaS)}

\author{Jonas Forster}{Friedrich-Alexander-Universität
Erlangen-Nürnberg, Germany}{jonas.forster@fau.de}{https://orcid.org/0000-0002-5050-2565}{Funded by the Deutsche Forschungsgemeinschaft (DFG, German Research
Foundation) -- project number 434050016 (SpeQt)}

\authorrunning{Jourde et al.}

\title{Compositionality in Coalgebraic Trace Semantics}

\begin{document}
\allowdisplaybreaks

\sloppy

\maketitle
\begin{abstract}
A key requirement on any well-behaved process language is its \emph{compositionality}: behavioural equivalence of processes should be respected by the constructors of the language. Turi and Plotkin's \emph{abstract GSOS} provides an elegant bialgebraic framework for modelling rule formats that guarantee compositionality from the outset. Their original results, however, are restricted to compositionality of strong bisimilarity, a rather fine-grained notion of process equivalence. In the present paper, we demonstrate that Turi and Plotkin's approach also applies to \emph{trace equivalence}, which only observes external actions of processes. To this end, we revisit the general compositionality result of their original theory and present it in a refined form with regard to the required naturality conditions. This step makes abstract GSOS applicable over {Kleisli categories} and thereby enables reasoning about compositionality in the setting of \emph{coalgebraic trace semantics}. As our main contribution, we introduce \emph{De Simone laws}, a type of GSOS laws over Kleisli categories, and prove that their operational models are compositional for coalgebraic trace equivalence. This result recovers and explains compositionality of the well-known \emph{De Simone} rule format for labelled transition systems in a natural categorical setting. As a further application, we derive from our general framework a novel De Simone-type format for \emph{probabilistic} systems, compositional for probabilistic trace equivalence.
\end{abstract}

\begin{CCSXML}
  <ccs2012>
  <concept>
  <concept_id>10003752.10010124.10010131.10010137</concept_id>
  <concept_desc>Theory of computation~Categorical semantics</concept_desc>
  <concept_significance>500</concept_significance>
  </concept>
  <concept>
  <concept_id>10003752.10010124.10010131.10010134</concept_id>
  <concept_desc>Theory of computation~Operational semantics</concept_desc>
  <concept_significance>500</concept_significance>
  </concept>
  </ccs2012>
\end{CCSXML}

\section{Introduction}\label{sec:intro}
One of the primary challenges in concurrency theory is the development of effective tools and techniques to reason about \emph{equivalence} of processes. Numerous notions of process equivalence at different levels of granularity have been studied. Among the most prominent ones, lying at opposite ends of the linear-time branching-time spectrum~\cite{glabbeek90}, are \emph{strong bisimilarity}, corresponding to a low-level observer with full access to internal states and transitions of the given processes, and \emph{trace equivalence}, where only external actions are observed. 
 Regardless of the chosen flavour of equivalence, a key desideratum is that it forms a \emph{congruence}, which means that it is respected by all process constructors. For instance, one would expect two processes $p\parallel q$ and $p'\parallel q$ to be equivalent if their subprocesses $p$ and $p'$ are. The importance of this property is that it enables \emph{compositional} reasoning: it ensures that the observable behaviour of a complex process is fully determined by that of each individual component.

However, actually \emph{proving} that a process language is compositional for a given type of equivalence is cumbersome, in particular in advanced settings with higher-order features, computational effects, etc. Hence, there has been longstanding interest in syntactic rule formats that guarantee compositionality for all languages whose specification adheres to the given format~\cite{DBLP:books/el/01/AcetoFV01}. Well-known examples include the \emph{GSOS}~\cite{DBLP:journals/jacm/BloomIM95} and \emph{tyft/tyxt} formats~\cite{DBLP:journals/iandc/GrooteV92}, which are compositional for strong bisimilarity, and the \emph{De Simone} format~\cite{DeSimone1985,bloom94}, a fragment of GSOS compositional for trace semantics. Generally, coarse-grained process equivalences with very limited internal access to processes, such as trace equivalence, tend to require severe syntactic restrictions on the operational rules to ensure compositionality. Standardly, those restrictions are discovered ad hoc, as are the proofs of their congruence properties.

In the present paper, we develop a systematic \emph{categorical} approach to congruence formats for trace equivalence that treats the given process language and the structural rules specifying its operational semantics as abstract parameters. Our work draws from two different theories:

\medskip\noindent\textbf{Coalgebraic Trace Semantics.} Labelled transition systems and other types of state-based structures that underlie the operational semantics of process languages admit a uniform categorical abstraction in the form of \emph{coalgebras}~\cite{DBLP:journals/tcs/Rutten00} for a functor encapsulating the system type. Hasuo et al.~\cite{hjs07} have shown that various notions of trace semantics can be captured at coalgebraic generality by considering final coalgebras over order-enriched \emph{Kleisli categories}.

\medskip\noindent\textbf{Abstract GSOS.} Turi and Plotkin~\cite{DBLP:conf/lics/TuriP97} have proposed an elegant categorical approach to GSOS-type rule formats. The idea is to present GSOS rules as \emph{GSOS laws}, a type of natural transformation that distributes the \emph{syntax} of a language over its \emph{behaviour}, both modelled by functors. The core selling point  of their framework is its free compositionality theorem: for every language modelled by a GSOS law, behavioural equivalence (i.e.\ equality in the final coalgebra of the behaviour functor) is a congruence. The compositionality of GSOS rules w.r.t.~strong bisimilarity~\cite{DBLP:journals/jacm/BloomIM95} is one of many instances~\cite{DBLP:conf/ctcs/Turi97,DBLP:journals/tcs/Klin11} of this general result.

\medskip\noindent In the light of the above work, we base our abstract theory of congruence for traces on a (seemingly) simple strategy: run the machinery of abstract GSOS over Kleisli categories, the setting for coalgebraic trace semantics. As a first proof of concept, we demonstrate that just like GSOS rules are captured by GSOS laws over the category of sets, De Simone rules correspond to \emph{De Simone laws}, a novel type of GSOS laws living in the Kleisli category of an additive monad. Our key insight is that the ad hoc syntactic restrictions imposed on De Simone rules (regarding affineness of terms) are precisely reflected by certain naturality conditions on their associated De Simone laws. At this point, a technical challenge arises: Since De Simone laws are only natural w.r.t.~\emph{some} Kleisli morphisms, Turi and Plotkin's theory of congruence (which is tailored to fully natural GSOS laws) does not immediately apply to this setting. Therefore, we revisit Turi and Plotkin's original congruence theorem for abstract GSOS and present it in a refined form that analyses the precise required naturality conditions (\Cref{thm:cong}). Since the original proof~\cite{DBLP:conf/lics/TuriP97} heavily relies on naturality, the refined congruence result requires a conceptually different approach beyond mere proof inspection.

The refined congruence theorem for abstract GSOS allows us to establish our main result (\Cref{thm:cong-trace-positive}): coalgebras specified by De Simone laws admit compositional trace semantics. Besides incorporating the congruence property for classical De Simone rules~\cite{bloom94} into the abstract GSOS framework, this result gives rise to novel congruence formats for \emph{effectful} settings, thus highlighting the scope and effectiveness of our theory. As an advanced case study, we derive a De Simone-type format for (almost-surely terminating) \emph{probabilistic transition systems}. While several congruence formats for probabilistic or weighted bisimilarity have emerged in the literature~\cite{bartels02,ks13,CASTIGLIONI2024100929}, our probabilistic De Simone format is (to the best of our knowledge) the first congruence format for probabilistic trace semantics. 

\subparagraph*{Related Work}
Since the introduction of abstract GSOS~\cite{DBLP:conf/lics/TuriP97}, a categorical understanding of congruence formats for equivalence notions coarser than strong bisimilarity has been pursued. Most previous approaches in this direction conceptually depart from Turi and Plotkin's work (rather than using it) and develop the respective theory of congruence largely from scratch.

 Klin~\cite{Klin09,Klin10} presents an abstract approach to compositional semantics parametric in a \emph{coalgebraic logic}, a form of dual adjunction that supplies the targeted notion of semantics. Technically, this work is not compatible with ours, although some instances (such as compositionality of standard De Simone rules) are shared. Klin's logical framework is highly flexible and expressive. One drawback is that its application requires, on top of the coalgebraic logic, coming up with a further technical ingredient in the form of a \emph{logical distributive law}, which is typically not trivial to find and subject to a complex compatibility condition. This contrasts our approach where only the standard data of coalgebraic trace semantics is needed to instantiate the theory, and the technical conditions of the congruence result amount to naturality conditions over a Kleisli category with a simple underlying intuition.

The interaction of abstract GSOS with monadic, or generally effectful,
behaviour has been a topic of
interest for a long time~\cite{DBLP:conf/fossacs/Abou-SalehP13,56f40c248cb44359beb3c28c3263838e,gmstu25,DBLP:conf/fscd/0001MS0U22,DBLP:journals/tcs/MiculanP16,10.1007/978-3-642-32784-1_5,10.1145/3776697},
the common goal typically being compositionality for various effectful
program equivalences. Notably, Abou-Saleh and Pattinson develop a form of abstract GSOS over a \emph{mixed} Kleisli
setting~\cite{DBLP:journals/entcs/Abou-SalehP11,DBLP:phd/ethos/AbouSaleh14},
and study the semantics given by final coalgebras in a Kleisli category. They propose
a (rather technical and hard to verify)~\emph{condition on cones} to guarantee compositionality. Unlike our own Kleisli-based approach, their theory of congruence is not based on Turi and Plotkin's results. Moreover, in terms of applications their  work is primarily aimed at reasoning about stateful imperative programming languages and does not encompass De Simone-type specifications.

Finally, Tsampas et
al.\,\cite{DBLP:conf/mfcs/0001WNDP21} consider GSOS laws featuring behaviours of the form
$TB$, where $T$ is a monad whose Kleisli category is order-enriched in a
manner that $TB$-coalgebras can be
\emph{saturated}\,\cite{DBLP:journals/jlp/BrengosMP15}. Their theory of congruence applies to
weak bisimilarity rather than trace semantics and also departs from the standard
theory of Turi and Plotkin.

\section{Categorical Preliminaries}
Our theory of compositional trace semantics rests on various abstractions using the language of category theory~\cite{mac2013categories}, notably (co)algebras and Kleisli extensions reviewed below. Readers are assumed to be familiar with functors, natural transformations, (co)products, and monads.

\subparagraph*{Notation} Given
 objects $X_1, X_2$ in a category~$\C$, we write $X_1\times X_2$ for their product, $\outl\colon X_1\times X_2\to X_1$ and $\outr\colon X_1\times X_2\to X_2$ for the projections, and $\brks{f_1,f_2}\colon Y \to X_1 \times X_2$ for the pairing of
morphisms $f_i\colon Y \to X_i$, $i = 1,2$. We write $1$ for the terminal object and $!\colon X\to 1$ for the unique morphism. Dually, $X_1+X_2$ denotes the coproduct, $\inl\colon X_1\to X_1+X_2$ and $\inr\colon X_2\to X_1+ X_2$ the injections, and $[f_1,f_2]\colon X_1+X_2\to Y$ a co-pairing. 
In the category $\Set$ of sets and functions, products and coproducts are cartesian products and disjoint unions, resp., and $1=\{*\}$.
Given a monad $T$, we write $\eta^T$ and $\mu^T$ for the unit and multiplication; the superscript $T$ is usually dropped. 
A \emph{pre-natural transformation} between functors $F,G\colon \C\to \D$ is a family of morphisms
$\alpha_X\colon FX \to GX$ ($X\in \C$). 
It is \emph{natural w.r.t.\ a morphism $f\colon X\to Y$} of $\C$ if $\alpha_Y\comp Ff = Gf\comp \alpha_X$. 
A natural transformation is thus a pre-natural transformation that is natural w.r.t.\ all morphisms. 
We write $\alpha$ for $\alpha_X, \alpha_Y,\ldots$

\subparagraph*{Algebras} Algebraic structures admit a categorical abstraction in the form of algebras for an endofunctor $F$ on a category $\C$. An \emph{$F$-algebra} is a pair $(A,a)$ of an object~$A$
 and a morphism $a\colon F A\to A$. A \emph{morphism} from
$(A,a)$ to an $F$-algebra $(B,b)$ is a morphism $h\colon A\to B$
of~$\C$ with $h\comp a = b\comp F h$. Algebras for $F$ and
their morphisms form a category, and an \emph{initial}
$F$-algebra is simply an initial object in that category. We denote the initial algebra by $(\mu F,\ini)$ if it exists. By Lambek's Lemma~\cite{Lambek68}, its
structure $\ini\colon F(\mu F) \to \mu F$ is an isomorphism.

A \emph{free $F$-algebra} on an object $X$ of $\C$ is an
$F$-algebra $(F^{\star}X,\iota_X)$ together with a morphism
$\eta_X\c X\to F^{\star}X$ of~$\C$ such that for every algebra $(A,a)$
and every morphism $h\colon X\to A$ of $\C$, there is a unique
$F$-algebra morphism $h^\star\colon (F^{\star}X,\iota_X)\to (A,a)$
with $h=h^\star\comp \eta_X$. If $\C$ has an initial object $0$, then $F^{*} 0=\mu F$. If free algebras
exist on every object, their formation induces a monad
$F^{\star}\colon \C\to \C$, the \emph{free monad} generated by $F$~\cite{Barr70}. Every $F$-algebra $(A,a)$ yields an
Eilenberg-Moore algebra $\hat a:=\id_A^\star \colon F^{\star} A \to A$, where $\id_A\c A\to A$ is the identity.

\begin{example}[Algebras for a Signature] An (\emph{algebraic}) \emph{signature} consists of a set~$\Sigma$
of \emph{operation symbols} and a map $\ar\colon \Sigma\to \Nat$
associating to every $\f\in \Sigma$ its \emph{arity}. Every
signature~$\Sigma$ induces an endofunctor on the category $\Set$, denoted by
the same letter $\Sigma$ and defined by $\Sigma X = \coprod_{\f\in \Sigma} X^{\ar(\f)}$. Endofunctors of this form are called \emph{polynomial}. An algebra for the polynomial functor $\Sigma$ is
precisely an algebra for the signature $\Sigma$ in the usual sense, that is, a set $A$ equipped with an operation $\f^A\colon A^n\to A$ for every $n$-ary $\f\in\Sigma$. Morphisms of $\Sigma$-algebras are
maps respecting the algebraic structure. Given a set $X$, the free algebra $\Sigmas X$ 
is carried by the set of $\Sigma$-terms with variables from~$X$. The free
algebra on the empty set is the initial algebra~$\mu \Sigma$; it is
formed by all \emph{closed} $\Sigma$-terms. For every
$\Sigma$-algebra $(A,a)$, the Eilenberg-Moore algebra
$\hat{a}\colon \Sigmas A \to A$ evaluates terms in~$A$.

\end{example}

\subparagraph*{Coalgebras} The dual concept of algebras are coalgebras, which form a categorical abstraction of state-based transition systems. A \emph{coalgebra} for an
endofunctor $B$ on $\C$ is a pair $(C,c)$ of an object $C$ and a morphism $c\colon C\to BC$. A \emph{morphism} from
$(C,c)$ to a $B$-coalgebra $(D,d)$ is a morphism
$h\colon C\to D$ of $\C$ such that $Bh\comp c = d\comp h$.
Coalgebras for $B$ and their morphisms form a category, and a
\emph{final} $B$-coalgebra, denoted $(\nu B,\tau)$, is a final object in that category.

\begin{example}\label{ex:lts}
Labelled transition systems (LTS) with or without explicit termination correspond, respectively, to coalgebras of type
$c\colon C\to \Pow (L\times C+1)$ and $c_0\colon C\to \Pow(L\times C)$,
for a set $L$ of labels. Ordinary LTS (i.e.\ without explicit termination) form a subclass of LTS with explicit termination by regarding all states as terminating: put $c(x)=c_0(x)\cup \{*\}$.
\end{example}

\subparagraph*{Kleisli Extensions}
Given a monad $T$ on a category $\C$, we write $\Kl(T)$ for its Kleisli category. Objects of $\Kl(T)$ are those of $\C$, and a morphism $f\colon X\kleislitow Y$ of $\Kl(T)$ is a morphism $f\colon X\to TY$ of $\C$, with identities and composition induced by the monad structure. A \emph{Kleisli}\\[.2ex]
\begin{minipage}[t]{.69\textwidth}
\emph{extension} of an endofunctor $F\colon \C\to \C$ is an endofunctor $\barF\colon \Kl(T)\to \Kl(T)$ making the diagram on the right commute. Here $J_T$ is the canonical left adjoint, acting on objects as the identity and on morphisms $f\colon X \to Y$ by $J_Tf=\eta_Y\comp f$.

\end{minipage}
\begin{minipage}[t]{.3\textwidth}
\vspace{-.7cm}
\[  
\begin{tikzcd}
\Kl(T) \ar{r}{\barF} & \Kl(T) \\
\C \ar{u}{J_T} \ar{r}{F} & \C \ar{u}[swap]{J_T} 
\end{tikzcd}
\] 
\end{minipage}\\[.3ex]
Morphisms of $\Kl(T)$ of the form $J_T f$ are called \emph{pure}. We usually put $J=J_T$ if the monad is clear from the context. Kleisli extensions of an endofunctor $F$ are closely related to \emph{distributive laws} of $F$ over the monad $T$. The latter are natural transformations
$\delta\colon FT\to TF$    
 subject to compatibility conditions~\cite{mulry93}.
Each distributive law $\delta \colon FT \to TF$ induces a Kleisli extension $\barF\colon \Kl(T)\to \Kl(T)$, defined on objects as $\barF X = FX$ and on morphisms $f\colon X \kleislito Y$ as $\barF f = \delta_Y \comp Ff$.
Conversely, each Kleisli extension of $F$ arises from a unique distributive law. Thus there is a bijective correspondence between distributive laws and Kleisli extensions~\cite{mulry93}.

\begin{proposition}[\cite{hjs07}]\label{prop:initial-alg-lift}
Let $F\colon \C\to \C$ be an endofunctor with an initial algebra $\mS$ and a free algebra $F^{*} X$ on $X\in \C$, and let
$\bar F\colon \Kl(T)\to \Kl(T)$ be a Kleisli extension. Then 
$\mu \barF = \mu F$ and $\barF^{*}X = F^{*} X$,
with the $\barF$-algebra structures and universal maps given by
\[ \begin{tikzcd} \barF(\mu F) = F(\mu F) \ar[kleisliw]{r}{J\ini}  & \mu F, \end{tikzcd}\qquad \begin{tikzcd} \barF F^{*} X = FF^{*}X \ar[kleisliw]{r}{J\ini_X}  & F^{*}X, \end{tikzcd}\qquad \begin{tikzcd} X \ar[kleisliw]{r}{J\eta_X}  & F^{*}X.\end{tikzcd}\]
\end{proposition}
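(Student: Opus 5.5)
The key tool is that $J\colon \C\to\Kl(T)$ is a left adjoint: morphisms $X\kleislito Y$ correspond bijectively to morphisms $X\to TY$ of $\C$. Combined with the fact that $\barF$ restricts along $J$ to $F$, this should let us transfer universal properties from $\C$ to $\Kl(T)$. It suffices to prove the statement for free algebras, since the initial algebra is the special case $X=0$. Here $J0=0$ is initial in $\Kl(T)$ because left adjoints preserve colimits, and $F^{*}0=\mu F$.

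First, $(F^{*}X,J\ini_X)$ is an $\barF$-algebra. Given an $\barF$-algebra $(A,a)$ with $a\colon FA\kleislito A$, i.e.\ $a\colon FA\to TA$ in $\C$, and a Kleisli morphism $h\colon X\kleislito A$, i.e.\ $h\colon X\to TA$, we must find a unique $k\colon F^{*}X\to TA$ with two properties:
\begin{itemize}
\item $k\comp_{\Kl} J\ini_X = a\comp_{\Kl}\barF k$;
\item $k\comp_{\Kl} J\eta_X = h$.
\end{itemize}
By the definition of $J$ and Kleisli composition, $k\comp_{\Kl} Jg = k\comp g$ for every $g$. So the second condition reads $k\comp\eta_X=h$ in $\C$, and the first reads $k\comp\ini_X = \mu_A\comp Ta\comp\delta_A\comp Fk$, using $\barF k=\delta_A\comp Fk$ for the distributive law $\delta$ corresponding to $\barF$. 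These are exactly the conditions saying that $k$ is an $F$-algebra morphism from $(F^{*}X,\ini_X)$ to the $F$-algebra $(TA,\ \mu_A\comp Ta\comp\delta_A)$ extending $h$.

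The freeness of $F^{*}X$ in $\C$ then gives existence and uniqueness of $k=h^\star$ at once. This works because every candidate Kleisli morphism $k$ is literally a $\C$-morphism $F^{*}X\to TA$, so uniqueness in $\C$ is uniqueness in $\Kl(T)$. The universal map is $J\eta_X$, and the algebra structure is $J\ini_X$, as claimed.

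The only delicate step is the rewriting of $a\comp_{\Kl}\barF k$. We need that the Kleisli composite of $a$ with $\barF k=\delta_A\comp Fk$ is $\mu_A\comp Ta\comp\delta_A\comp Fk$, which is the definition of Kleisli composition, so no compatibility axioms of $\delta$ are actually needed. The initial algebra case then follows by specialising to $X=0$, or by repeating the same argument directly with $(TA,\mu_A\comp Ta\comp\delta_A)$ and initiality of $\mu F$.
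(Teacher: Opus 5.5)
Your proof is correct. The paper gives no proof of its own and defers to \cite{hjs07}, where the initial-algebra part is, as far as I recall, obtained abstractly: since $\barF J = JF$, the adjunction $J\dashv U$ (with $UA = TA$) lifts to an adjunction between $F$-algebras and $\barF$-algebras, and the lifted left adjoint preserves initial objects. The same lifting also yields the free-algebra part, since the lifted left adjoint commutes with the free-algebra functors.

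Your argument is this proof unfolded by hand. The $F$-algebra $(TA,\mu_A\comp Ta\comp\delta_A)$ you build is exactly the lifted right adjoint applied to $(A,a)$. Your identification of Kleisli morphisms $F^{*}X\kleislito A$ satisfying the two conditions with $F$-algebra morphisms into $(TA,\mu_A\comp Ta\comp\delta_A)$ extending $h$ is the hom-set bijection of the lifted adjunction.

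The direct route has two advantages. It makes visible that only two facts are used: the formula $\barF k=\delta_A\comp Fk$, and the fact that precomposing with a pure morphism $Jg$ in $\Kl(T)$ is plain precomposition with $g$ in $\C$. Also, your second option for $\mu F$ (arguing from initiality directly) avoids assuming that $\C$ has an initial object, which the statement does not stipulate. The abstract route is shorter and shows that the result rests only on $J$ being a left adjoint with $\barF J = JF$.
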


\section{Coalgebraic Trace Semantics}\label{sec:coalgebraic-trace-semantics}
We recall the Kleisli-based framework for coalgebraic trace semantics~\cite{hjs07}, following the streamlined presentation by Frank et al.~\cite{fmu22} which simplifies some of the technical conditions of the original paper. The framework applies to coalgebras of type
\begin{equation}\label{eq:tb-coalg} c\colon C\to TBC \end{equation}
where $B$ is an endofunctor (modelling input/output behaviour) and $T$ is a monad 
(modelling a computational effect such as non-determinism or probabilities) on a category $\C$.

\begin{example}[LTS]\label{ex:lts-trace} LTS with explicit termination are coalgebras of type \eqref{eq:tb-coalg} for
$T=\Pow$ (the power set monad) and $BX=L\times X+1$ on $\Set$.
We write
$x\xto{a} y$ if $(a,y)\in c(x)$, and $x{\downarrow}$ if $*\in c(x)$. A \emph{completed trace} at a state $x\in C$ is a string $a_1\cdots a_n\in L^{*}$  such that
$x=x_0\xto{a_1} x_1 \xto{a_2} \cdots \xto{a_{n-1}} x_{n-1}\xto{a_n}x_n{\downarrow}$ for some $x_0,\ldots,x_n\in C$. Let $\tr(x)\seq L^{*}$ be the set of completed traces at $x$. Note that $L^{*}$ carries the initial algebra $\mu B$ for the functor $B$. Thus completed trace semantics yields a morphism
$\tr\colon C\kleislitow \mu B$ in $\Kl(\Pow)$. 
For an ordinary LTS (with all states terminating), the map $\tr$ sends a state $x$ to the set of \emph{partial traces} at $x$. 
\end{example}

\noindent To generalize the trace map $\tr$ from LTS to coalgebras \eqref{eq:tb-coalg}, some order-enrichment on the Kleisli category $\Kl(T)$ is required. A category $\B$ is \emph{left strictly $\DCPOb$-enriched} if each hom-set $\B(X,Y)$ carries the structure of a DCPO with $\bot$ such that composition is left-strict ($\bot\comp f = \bot$) and preserves directed joins in each component ($f\comp \bigsqcup_i g_i = \bigsqcup_i f\comp g_i$ and $ (\bigsqcup_i g_i)\comp f= \bigsqcup_i g_i\comp f$).  
A functor $F\colon \B\to \B$ is \emph{locally monotone} if 
$f\sqsubseteq g$ implies $F f\sqsubseteq F g$.

\begin{defn}\label{def:trace-setting}
A \emph{setting for coalgebraic trace semantics} consists of a monad $T$ and an endofunctor $B$ on a category $\C$ with a Kleisli extension $\barB\colon \Kl(T)\to \Kl(T)$ such that:
\begin{enumerate}
\item the functor $B$ has an initial algebra $(\mu B, \beta)$;
\item\label{asm-trace-order1} the category $\Kl(T)$ is \emph{left strictly $\DCPOb$-enriched};
\item\label{asm-trace-order2} the functor $\barB\colon \Kl(T)\to \Kl(T)$ is \emph{locally monotone}.
\end{enumerate}
\end{defn}
The key property of any setting for coalgebraic trace semantics is the initial algebra/final coalgebra coincidence
$\mu B = \mu \barB = \nu \barB$. The first equality is \Cref{prop:initial-alg-lift}; the second one is the next proposition, which also appears  (with stronger conditions on $\C$ and $B$) in \cite{hjs07}.

\begin{proposition}[\cite{fmu22}]\label{prop:initial-alg-final-coalg}
Let $T$, $B$, $\barB$ be a setting for a coalgebraic trace semantics. Then the functor $\barB\colon \Kl(T)\to \Kl(T)$ has a final coalgebra given by 
$\begin{tikzcd} \mu B \ar[kleisliw]{r}{J\beta^{-1}}  & B(\mu B) = \barB(\mu B). \end{tikzcd}$
\end{proposition}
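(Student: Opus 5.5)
Given a $\barB$-coalgebra $c\colon C\kleislitow \barB C$, I will show that there is exactly one Kleisli morphism $h\colon C\kleislitow \mu B$ with
\[ J\beta^{-1}\comp h = \barB h\comp c. \]
Since $J\beta$ is an isomorphism in $\Kl(T)$ with inverse $J\beta^{-1}$, this equation is equivalent to the fixpoint equation $h = J\beta\comp \barB h\comp c$. So the task is to show that the operator
\[ \Phi(h) = J\beta\comp \barB h\comp c \]
on the hom-DCPO $\Kl(T)(C,\mu B)$ has exactly one fixpoint.

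\textbf{Existence.} By \Cref{def:trace-setting}, local monotonicity of $\barB$ together with monotonicity of composition makes $\Phi$ monotone. The hom-set is a DCPO with a least element $\bot$. Two options are available:
\begin{itemize}
\item Apply the Kleene-style argument for monotone maps on DCPOs (Pataraia/Bourbaki–Witt), which yields a least fixpoint $h_0$ without any continuity assumption.
\item Build the transfinite chain $\bot\sqsubseteq\Phi(\bot)\sqsubseteq\cdots$, taking directed joins at limit stages. This is also fine, since only monotonicity of $\Phi$ is needed.
\end{itemize}
Either way, $h_0$ is a coalgebra morphism into $(\mu B, J\beta^{-1})$.

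\textbf{Uniqueness.} This is the main obstacle, and here I would use initiality of $\mu B$ as in \cite{hjs07}. Let $h$ be any fixpoint; I will show $h\sqsubseteq h_0$. By minimality $h_0\sqsubseteq h$, so this gives $h=h_0$.

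The idea is to approximate $\id_{\mu B}$ by the chain $e_0=\bot$, $e_{n+1}=J\beta\comp\barB e_n\comp J\beta^{-1}$ of endomorphisms of $\mu B$. Each $e_n$ satisfies $e_n\sqsubseteq \id$, by induction using local monotonicity. I would then argue $\bigsqcup_n e_n=\id$, perhaps transfinitely. The argument is that $e:=\bigsqcup_\alpha e_\alpha$ is a fixpoint of $\Psi(e)=J\beta\comp\barB e\comp J\beta^{-1}$, and so $e\comp J\beta = J\beta\comp \barB e$. Since $\barB$ is a Kleisli extension, $e$ is then an algebra morphism $(\mu\barB, J\beta)\to(\mu\barB,J\beta)$. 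By \Cref{prop:initial-alg-lift}, $J\beta$ is the initial $\barB$-algebra, so $e=\id$.

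Next, by induction, $e_\alpha\comp h\sqsubseteq \Phi^\alpha(\bot)\sqsubseteq h_0$. The successor step is
\[ e_{\alpha+1}\comp h = J\beta\comp\barB e_\alpha\comp J\beta^{-1}\comp h = J\beta\comp\barB(e_\alpha\comp h)\comp c, \]
using the fixpoint property of $h$ and functoriality of $\barB$. The limit steps use join preservation of composition (precomposition with $h$) and left strictness, namely $\bot\comp h=\bot$. Taking joins gives
\[ h = \id\comp h = \bigsqcup_\alpha e_\alpha\comp h \sqsubseteq h_0. \]

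The delicate points are the following. Composition preserves directed joins, which is what the limit steps rely on. Left strictness is needed exactly at the base case, since $\bot\comp h$ need not otherwise be $\bot$. Local monotonicity is what makes $e_n\sqsubseteq\id$ and the ordinal chains increasing. Once $h=h_0$ is unique, $(\mu B, J\beta^{-1})$ is the final $\barB$-coalgebra.
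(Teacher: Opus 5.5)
Your proof is correct. The paper gives no proof of its own here; it imports the result from \cite{fmu22}. Your argument is the standard one behind that result. Existence comes from the least fixpoint $h_0$ of the monotone map $\Phi$. Uniqueness comes from the observation that, by initiality of $(\mu B, J\beta)$ as a $\barB$-algebra (\Cref{prop:initial-alg-lift}), $\id_{\mu B}$ is the only, hence the least, fixpoint of $\Psi(e)=J\beta\comp\barB e\comp J\beta^{-1}$, combined with left strictness and join-continuity of composition. You are also right that the approximating chain may have to be transfinite, since $\barB$ is only locally monotone.

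The ordinal bookkeeping can be avoided with the fixpoint-induction principle of \cite{amm21} that the paper uses in the proof of \Cref{prop:tr-ineq}. Take the set $\{e \mid e\comp h\sqsubseteq h_0\}$, or more directly $\{e\mid e\comp h = e\comp h'\}$ for two coalgebra morphisms $h,h'$. It contains $\bot$ by left strictness, is closed under $\Psi$ by exactly your successor computation, and is closed under directed joins. Hence it contains $\id_{\mu B}$, and the claim $e_n\sqsubseteq\id$ becomes unnecessary.
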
 
(Recall that $\beta$ is an isomorphism by Lambek's lemma.) To obtain coalgebraic trace semantics, note that a $TB$-coalgebra \eqref{eq:tb-coalg} in $\C$ is the same as a $\barB$-coalgebra $c\colon C\kleislitow \barB C$ in $\Kl(T)$.
The \emph{trace map} for $c$ is the unique $\barB$-coalgebra morphism to the final coalgebra $(\mu B, J\beta^{-1})$:
\begin{equation}\label{eq:trace-map} \tr_c\colon C\kleislitow \mu B.
\end{equation} 
We usually drop the subscript if $c$ is clear from the context.

\begin{example}[LTS~\cite{hjs07}]\label{ex:lts-trace-coalg}
LTS correspond to the setting for coalgebraic trace semantics where $T=\Pow$ and $BX=L\times X+1$ (\Cref{ex:lts-trace}), and $\barB$ is induced by the distributive law
\[ \delta^{B}\colon L\times \Pow X + 1 \to \Pow(L\times X +1),\qquad 
(a,S) \mapsto\{ (a,s) \mid s\in S \},\qquad
* \mapsto \{*\}. \]
Then all the required conditions of \Cref{def:trace-setting}  are satisfied: 
\begin{enumerate}
\item The initial algebra for $B$ is given by the set $L^{*}$ with structure $\beta\colon L\times L^{*} + 1 \to L^{*}$, defined componentwise by
$(a,w) \mapsto  wa$, and $* \mapsto \varepsilon$.
We let $\beta$ append $a$ to the right of a word $w$, which reflects how traces are formed in an LTS.
\item $\Kl(\Pow)$ is left strictly $\DCPOb$-enriched w.r.t.\ the partial order given by pointwise inclusion ($f\sqsubseteq g \colon X\kleislitow Y$ iff $f(x)\seq g(x)$ for all $x\in X$).
\item $\barB$ is locally monotone as the law $\delta^B$ is monotone: $S\seq S'$ implies $\delta^B(a,S)\seq \delta^B(a,S')$.   
\end{enumerate}

The trace map from \Cref{ex:lts-trace}, corresponding to completed trace semantics of LTS, coincides with the coalgebraic trace map \eqref{eq:trace-map}. Indeed, one readily verifies that completed trace semantics forms a $\barB$-coalgebra morphism from $(C,c)$ to the final coalgebra $(\mu B, J\beta^{-1})$.
\end{example}

\section{Abstract GSOS, Revisited}\label{sec:abstract-gsos}

Turi and Plotkin's~\cite{DBLP:conf/lics/TuriP97} abstract GSOS models GSOS-style rule formats categorically and features a free congruence result. We recall
their framework, and at the same time slightly refine it with respect to the required naturality conditions -- a step that enables its application to coalgebraic trace semantics. Abstract GSOS is parametric in the following data:
\begin{itemize}
  \item a base category $\C$ with binary products;
  \item an endofunctor $\Sigma$ on $\C$ with an initial algebra $(\mu\Sigma,\ini)$ and free algebras $(\Sigmas X,\ini_X)$ (so that $\Sigma$ generates a free monad $\Sigmas$);
  \item an endofunctor $B$ on $\C$ with a final coalgebra $(\nu B,\tau)$.
\end{itemize}
Here, the functors $\Sigma$ and $B$ model the \emph{syntax} and \emph{behaviour} of a type of programs or processes under considerations, $\mS$ is the object of closed process terms, and $\nu B$ models abstract behaviours (e.g.\ tree unravellings or traces). The operational semantics of processes is captured by a (pre-)natural transformation that distributes syntax over behaviour:

\begin{defn}
 A \emph{(pre-)GSOS law} of $\Sigma$ over $B$ is (pre-)natural transformation of type
\begin{equation}\label{eq:pre-gsos-law} \rho_X\colon \Sigma(X\times BX)\to B\Sigmas X \qquad (X\in \C).\end{equation}
The \emph{operational model} of \eqref{eq:pre-gsos-law} is defined recursively~\cite[Prop.~2.4.6]{DBLP:books/cu/J2016} as the unique coalgebra\\[-.5ex]
\begin{minipage}{.48\textwidth}
  $\gamma\colon \mu\Sigma\to B(\mu\Sigma)$ such that the diagram on the right commutes. We denote the unique coalgebra morphism into the final coalgebra by
$\beh\colon (\mS,\gamma)\to (\nu B,\tau)$.
\end{minipage}
\begin{minipage}{.5\textwidth}
  \vspace{-.2cm}
\[
\begin{tikzcd}[column sep=15]
 \mS \ar[dashed]{d}[swap]{\gamma} && \Sigma(\mS) \ar{ll}[swap]{\ini}  \ar{d}{\Sigma\langle \id,\,\gamma\rangle} \\
 B(\mS) & B\Sigmas(\mS) \ar{l}[swap]{B\hat\ini} & \Sigma(\mS\times B(\mS)) \ar{l}[swap]{\rho}
\end{tikzcd}
\]
\end{minipage}
\end{defn}
Intuitively, a pre-GSOS law $\rho$ encodes the operational rules of a given process language: for every constructor~$\f$, it specifies the one-step behaviour of a process $\f(p_1,\cdots,p_n)$, i.e.\ the $\Sigma$-terms it transitions into next, depending on the one-step behaviours of its subprocesses $p_i$. The operational model $\gamma$ is the transition system that runs processes according to those rules, and the map $\beh$ sends every process to its abstract behaviour, e.g.\ its bisimilarity class or its set of traces, depending on the structure of the final coalgebra $\nu B$ in the given setting.

\begin{example}\label{ex:gsos}
Consider a process algebra with a parallel composition operator $(p,q)\mapsto p\parallel q$ specified by the operational rules \eqref{eq:rules-par}, where $a$ ranges over a fixed set $L$ of action labels:\\[-1ex]
\begin{minipage}{.45\textwidth}
\begin{equation}\label{eq:rules-par} \frac{p\xto{a} p'}{p \parallel q\xto{a} p' \parallel q} \;\;\;\; \frac{q\xto{a} q'}{p \parallel q\xto{a} p \parallel q'} 
\end{equation}
\end{minipage}
\begin{minipage}{.54\textwidth}
\begin{equation}\label{eq:gsos-rule}
\inference{ x_{i_j}\xto{a_j} y_j\; (j\in J)\;\;\;\; x_{i_k}\centernot{\xrightarrow{b_k}} \;(k\in K)}{\f(x_1,\ldots,x_n)\xto{a} t}
\end{equation}
\end{minipage}\\[2ex]
We take $\Sigma X=X\times X$, $BX=\Pow(L\times X)$ and devise a corresponding GSOS law from~\eqref{eq:rules-par}:
\begin{align*}
\rho_X&\colon (X\times \Pow (L\times X)) \times (X \times \Pow (L\times X)) \to \Pow(L\times \Sigmas X) \qquad (X\in \Set),\\
\rho_X&(p,S,q,T) = \{ (a, p'{\parallel} q) : (a,p')\in S \} \cup \{ (a,p{\parallel} q') : (a,q') \in T\}.
\end{align*}
Naturality of $\rho$ (w.r.t.\ functions, i.e.\ renamings of variables) amounts to the rules \eqref{eq:rules-par} being parametrically polymorphic, that is, they do not inspect the structure of the variables.  

The above translation works more generally for \emph{GSOS rules}~\cite{DBLP:journals/jacm/BloomIM95}, viz.\ rules of the form
\eqref{eq:gsos-rule} where $\f$ is an $n$-ary operation symbol from a given signature $\Sigma$, $i_j, i_k\in \{1,\ldots,n\}$, the variables $x_i$ and $y_j$ are pairwise distinct, $a,a_j,b_k\in L$, and $t$ is a $\Sigma$-term in the variables $x_i$ and $y_j$.
Turi and Plotkin \cite{DBLP:conf/lics/TuriP97} observed that every set $\mathcal{R}$ of GSOS rules can be presented as a GSOS law of the polynomial functor $\Sigma$ over $B$. The operational model of the law is the LTS
\begin{equation}\label{eq:gsos-law-model} \gamma\colon \mS \to \Pow(L\times \mS)  \end{equation}
 whose transitions are inductively determined by the given rules; that is, there is a transition $\f(t_1,\ldots,t_n)\xto{a} s$ 
  iff there exists a rule \eqref{eq:gsos-rule} in $\mathcal{R}$ and terms $s_j\in \mS$ ($j\in J$) such that $t_{i_j}\xto{a_{j}} s_j$ 
   for $j\in J$, $t_{i_k}\centernot{\xrightarrow{b_k}}$ (i.e.\ there is no $b_k$-labelled transition from $t_{i_k}$) for $k\in K$, and the term $s$ arises from $t$ via the substitution $x_i\mapsto t_i$ and $y_j\mapsto s_j$.
 \end{example} 
In what follows, we gloss over the minor technical issue of final coalgebras not existing for functors involving the full power set functor $\Pow$; this is easily remedied 
by replacing $\Pow$ with a bounded subfunctor $\Pow_\alpha X = \{Y\seq X : |Y|<\alpha\}$ for some regular cardinal $\alpha$.
    
The main feature of abstract GSOS is its compositionality result: behavioural equivalence of processes with respect to the final coalgebra $\nu B$ is preserved by all program       
contexts. To make this statement formal, we use the following categorical concept:
\begin{defn}[Morphic Congruence]
A \emph{morphic congruence} on a $\Sigma$-algebra $(A,a)$ is a morphism $h\colon A\to  A'$ of $\C$ that carries a $\Sigma$-algebra morphism, that is, there exists a $\Sigma$-algebra structure $(A',a')$ on $A'$ such that $h\colon (A,a)\to (A',a')$ is a morphism of $\Sigma$-algebras.  
\end{defn}

\begin{rem}\label{rem:congruence-relation}
For a polynomial set functor $\Sigma$, if $h\colon A\to A'$ is a morphic congruence then its kernel $\equiv\,\seq A\times A$ given by $a \equiv a'$ iff $h(a)=h(a')$ forms a congruence relation on~$A$ in the usual sense: for all $n$-ary $\f$ in $\Sigma$, if $a_i \equiv a_i'$ for $i=1,\ldots,n$ then $\f^A(a_1,\ldots,a_n) \equiv \f^A(a_1',\ldots,a_n')$. 

\end{rem}

\begin{theorem}[Congruence Theorem for Pre-GSOS Laws]\label{thm:cong}
Let $\rho$ be a pre-GSOS law that is natural w.r.t.\ the morphisms $\hat\ini\colon \Sigmas(\mS)\to \mS$ and $\Sigmas\beh\colon \Sigmas(\mS)\to \Sigmas(\nu B)$. Then the behaviour morphism $\beh\colon \mS\to \nu B$ is a morphic congruence on the initial algebra $(\mS,\ini)$.  
\end{theorem}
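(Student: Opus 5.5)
The plan is to equip $\nu B$ with a $\Sigma$-algebra structure $a\colon\Sigma(\nu B)\to\nu B$ by coinduction, and then to show $a\comp\Sigma\beh=\beh\comp\ini$ by finality. Turi and Plotkin's original argument builds $a$ from a bialgebra on $\nu B$ using full naturality of $\rho$. I avoid that by working with coalgebras on \emph{free} algebras, where only naturality w.r.t.\ the two given morphisms is needed.

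\emph{Step 1 (extending $\rho$ to terms).} For each $X$, define $\rho^\star_X\colon\Sigmas(X\times BX)\to\Sigmas X\times B\Sigmas X$ by freeness of $\Sigmas(X\times BX)$. On generators it is $\brks{\eta_X\comp\outl,\,B\eta_X\comp\outr}$. The target carries the $\Sigma$-algebra structure
\[
\Sigma(\Sigmas X\times B\Sigmas X)\xrightarrow{\brks{\ini_X\comp\Sigma\outl,\;B\mu_X\comp\rho_{\Sigmas X}}}\Sigmas X\times B\Sigmas X .
\]
This definition uses only the components of $\rho$, not its naturality. One checks that the first component of $\rho^\star_X$ is $\Sigmas\outl$. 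Write $\rho^\sharp_X:=\outr\comp\rho^\star_X$.

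\emph{Step 2 (two coalgebras).} Put $c:=\rho^\sharp_{\nu B}\comp\Sigmas\brks{\id,\tau}\colon\Sigmas(\nu B)\to B\Sigmas(\nu B)$. Let $h\colon\Sigmas(\nu B)\to\nu B$ be its unique coalgebra morphism into $(\nu B,\tau)$, and define $a:=h\comp\ini_{\nu B}\comp\Sigma\eta_{\nu B}$. Similarly put $c':=\rho^\sharp_{\mS}\comp\Sigmas\brks{\id,\gamma}$ on $\Sigmas(\mS)$. It then suffices to prove
\[
h\comp\Sigmas\beh=\beh\comp\hat\ini\colon\Sigmas(\mS)\to\nu B .
\]
Precomposing this with $\ini_{\mS}\comp\Sigma\eta_{\mS}$ gives $a\comp\Sigma\beh=\beh\comp\ini$, using $\hat\ini\comp\ini_{\mS}\comp\Sigma\eta=\ini$ and naturality of $\eta$ and $\ini_X$ w.r.t.\ $\beh$.

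To get this equation, I show that both sides are coalgebra morphisms $(\Sigmas\mS,c')\to(\nu B,\tau)$; they are then equal by finality. Because $\beh$ is a coalgebra morphism, it suffices to prove two facts.
\begin{itemize}
\item[(i)] $\Sigmas\beh$ is a coalgebra morphism $c'\to c$.
\item[(ii)] $\hat\ini$ is a coalgebra morphism $c'\to\gamma$.
\end{itemize}

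\emph{Step 3 (transfer of naturality; the main obstacle).} The key lemma is: if $\rho$ is natural w.r.t.\ $\Sigmas f$ for some $f\colon X\to Y$, then
\[
\rho^\sharp_Y\comp\Sigmas(f\times Bf)=B\Sigmas f\comp\rho^\sharp_X .
\]
I would prove this by showing that both sides, paired with $\Sigmas\outl$, are $\Sigma$-algebra morphisms out of the free algebra that agree on generators. The inductive step requires $\rho_{\Sigmas Y}\comp\Sigma(\Sigmas f\times B\Sigmas f)=B\Sigmas\Sigmas f\comp\rho_{\Sigmas X}$, which is exactly naturality w.r.t.\ $\Sigmas f$, together with naturality of $\mu$.

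Taking $f=\beh$ and using $\tau\comp\beh=B\beh\comp\gamma$ yields (i). For (ii), I need the analogous statement with $\hat\ini$ in place of $\Sigmas f$. Concretely, $\rho^\sharp$ should be compatible with the algebra map $\hat\ini$:
\[
B\hat\ini\comp\rho^\sharp_{\mS}\comp\Sigmas\brks{\id,\gamma}=\gamma\comp\hat\ini .
\]
I would prove this by induction over $\Sigmas(\mS)$, as an equality of $\Sigma$-algebra morphisms. On generators it is trivial. The inductive step combines the defining square of $\gamma$ with naturality of $\rho$ w.r.t.\ $\hat\ini$, which moves $\Sigma(\hat\ini\times B\hat\ini)$ through $\rho$, and with the Eilenberg--Moore law $\hat\ini\comp\mu=\hat\ini\comp\Sigmas\hat\ini$.

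This bookkeeping is the delicate part. Pairing with the first component appears necessary to make the induction go through, and one must check that exactly the two assumed naturality instances suffice. Once (i) and (ii) hold, finality concludes the proof.
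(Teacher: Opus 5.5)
Your proposal is correct and follows essentially the same route as the paper. Your $\rho^\sharp$ (the second component of $\rho^\star$, obtained by pairing with $\Sigmas\outl$) is exactly the paper's extension $\rho^{*}$ defined via the recursion principle. Your Step 3 lemma and fact (ii) are the paper's two auxiliary lemmas, and you combine them with the same finality square. Your explicit choice $a = h\comp\ini_{\nu B}\comp\Sigma\eta_{\nu B}$ simply makes precise the paper's remark that the $\Sigmas$-algebra structure on $\nu B$ yields a $\Sigma$-algebra structure.
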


\noindent This theorem is a refined version of Turi and Plotkin's congruence result for GSOS laws~\cite{DBLP:conf/lics/TuriP97}. 
Their original proof involves liftings of comonads to categories of algebras. It is conceptually elegant, but inherently reliant on naturality and thus not applicable to possibly non-natural pre-GSOS laws. Our proof of \Cref{thm:cong} (see appendix) therefore takes a more direct route that allows to analyse the required naturality conditions and make them explicit.
\begin{example}\label{ex:gsos-cong-bisim}
When applied to the setting of \Cref{ex:gsos}, the above theorem implies the congruence result for the GSOS format by Bloom et al.~\cite{DBLP:journals/jacm/BloomIM95}: for every set of GSOS rules, behavioural equivalence on its operational model \eqref{eq:gsos-law-model} (i.e.\ the equivalence relation $\equiv\,\seq \mS\times \mS$ given by $t \equiv s$ iff $\beh(t)=\beh(s)$) is a congruence w.r.t.\ all constructors from~$\Sigma$.
Note that behavioural equivalence coincides with \emph{strong bisimilarity} for LTS~\cite{DBLP:journals/tcs/Rutten00}.
\end{example}

\section{De Simone Laws and Compositional Coalgebraic Trace Semantics}
Our aim is to use \Cref{thm:cong} to reason about congruence properties of coalgebraic trace semantics (\Cref{sec:coalgebraic-trace-semantics}). For this purpose, we introduce \emph{De Simone laws}, a  
subclass of GSOS laws designed to smoothly extend to Kleisli categories. Here the behaviour map corresponds to the trace map, making the theory of the previous section applicable to
trace semantics.
\subsection{De Simone Rules}
To motivate our approach, we consider \emph{De Simone
  rules}~\cite{DeSimone1985}, a type of GSOS rules that is known to
guarantee congruence of trace semantics~\cite{bloom94} rather than just strong bisimilarity.

\begin{defn}[De Simone Rule]\label{def:de-simone-rule} 
Fix a signature $\Sigma$, a set $L$ of labels, and infinitely many variables 
$x_i$ and $y_i$ ($i\in \Nat$). A \emph{De Simone rule} is a GSOS rule of the form
\begin{equation}\label{eq:de-simone-rule}
\frac{x_{i_1}\xto{a_1} y_{i_1} \quad\cdots\quad x_{i_k}\xto{a_k} y_{i_k}}
{\f(x_1,\ldots,x_n)\xto{a} t} 
\end{equation}
where (i)  $\f\in \Sigma$ is an $n$-ary operation, (ii) $i_1,\ldots,i_k\in \{1,\ldots,n\}$ are pairwise distinct, (iii) $a_1,\ldots,a_k,a\in L$, and (iv) $t$ is an affine $\Sigma$-term in the variables $y_i$ ($i\in \{i_1,\ldots,i_k\}$) and $x_j$ ($j\in\{1,\ldots,n\}\setminus\{i_1,\ldots,i_k\}$). (A term is \emph{affine} if no variable occurs more than once in it.)
\end{defn}

\begin{example}
The two rules \eqref{eq:rules-par} of parallel composition are De Simone rules up to renaming the variables $p$, $p'$, $q$, $q'$ to $x_1$, $y_1$, $x_2$, $y_2$.
\end{example}

\noindent Since De Simone rules are GSOS rules, they can be presented as GSOS laws of~$\Sigma$ over $BX=\Pow(L\times X)$, viz.\ natural transformations
$\rho_X\colon \Sigma(X\times \Pow(L\times X)) \to \Pow(L\times \Sigmas X)$.
However, the specific form of De Simone rules allows for a simpler kind of natural transformation:

\begin{construction}\label{rk:star}
Every set $\R$ of De Simone rules induces a natural transformation 
\begin{equation}\label{eq:de-simone-law-from-rules}\rho_X\colon \Sigma(X+L\times X+1)\to \Pow(L\times \Sigmas X + 1) \qquad (X\in \Set)\end{equation} 
where $\rho_X$ maps $\f(u_1,\ldots,u_n)\in \Sigma(X+L\times X+1)$ to the following subset of $L\times \Sigmas X+1$:
\begin{itemize}
\item If $u_i=*$ for some $i$, then $\rho_X(\f(u_1,\ldots,u_n))=\{*\}$. 
\item Otherwise let $\{i_1\ldots,i_k \} = \{i\mid u_i\in L\times X\}$; thus $u_{i_l}=(a_l,v_{i_l})$ for some $a_l\in L$ and $v_{i_l}\in X$, and $u_j\in X$ for $j\not\in \{i_1,\ldots,i_k\}$. Then $\rho_X(\f(u_1,\ldots,u_n))$ consists of $*$ and all pairs $(a,t[\sigma])\in L\times \Sigmas X$ such that $\R$ contains the rule \eqref{eq:de-simone-rule} and $t[\sigma]$ emerges from the term $t$ via the substitution $\sigma$ given by $y_{i_l}\mapsto v_{i_l}$ and $x_j \mapsto u_j$ for $j\not\in \{i_1,\ldots,i_k\}$.
\end{itemize}
Note that the `$+1$' component flags all states as terminating, as always $*\in \rho_X(\f(u_1,\ldots,u_n))$.
This is necessary to fit the coalgebraic trace semantics framework (cf.\ \Cref{ex:lts-trace}), and makes sure that all 
emerging partial traces are being tracked.
\end{construction}

\subsection{De Simone Laws}
The above modelling of De Simone rules as natural transformations of type~\eqref{eq:de-simone-law-from-rules} immediately suggests a categorical generalization. In the remainder, let us fix the following data:
\begin{itemize}
\item a category $\C$ with finite limits and finite coproducts;
\item two endofunctors $\Sigma$, $B$ and a monad $T$ on $\C$, where $\Sigma$ has an initial algebra $(\mu\Sigma,\ini)$ and free algebras $\Sigmas X$.
\end{itemize}
\begin{defn}[Pre-De Simone Law]
A \emph{pre-De Simone law}  is\,a\,natural\,transformation
\begin{equation}\label{eq:de-simone-law} \rho_X\colon \Sigma(X+BX)\to TB\Sigmas X \qquad (X\in \C). \end{equation}
\end{defn}
De Simone rules thus correspond to pre-De Simone laws with $\C=\Set$, $BX=L\times X + 1$, $T=\Pow$, and a polynomial functor $\Sigma$.
To reason about abstract congruence properties for trace semantics of pre-De Simone laws, the given data needs some additional structure:

\begin{assumptions}\label{asm}
In the following, we assume that:
\begin{enumerate}
\item\label{asm1} There is a natural isomorphism 
$j\colon TX\times TY\xto{\cong} T(X+Y)$ ($X,Y\in \C$).
\item \label{asm2} There are functor-over-monad distributive laws of $\Sigma$ and $B$ over $T$. We denote the laws and their Kleisli extensions by 
$\delta^{\Sigma}\colon \Sigma T\to T\Sigma$, $\delta^{B}\colon BT\to TB$, and $\barSigma,\barB\colon \Kl(T)\to \Kl(T)$.    
\item\label{asm4} $T$, $B$, $\barB$ form a setting for coalgebraic trace semantics (\Cref{def:trace-setting}).
\end{enumerate}
\end{assumptions}

\begin{rem}\label{rem:products}  
Condition \ref{asm1} entails that $\Kl(T)$ has binary products, which coincide with binary coproducts in $\Kl(T)$ and thus binary coproducts in $\C$~\cite[Thm.~19]{cj13}. Monads~$T$ satisfying~\ref{asm1} and additionally $T0\cong 1$ are known as \emph{additive monads}~\cite{cj13}. 
\end{rem}

\begin{rem}\label{rem:sigmas-t-dist}
The distributive law $\delta^{B}$ extends to a functor-over-monad distributive law of $B_0 X = X+BX$ over $T$ and thus a corresponding Kleisli extension of $B_0$:
\[
\delta^{B_0} = (\,B_0 TX\xto{\id+\delta^{B}} TX+TBX \xto{[T\inl,T\inr]} TB_0X \,)\qqand \barB_0\colon \Kl(T)\to \Kl(T).\]
\end{rem}

\begin{rem}\label{rem:sigmas-t-dist}
By \Cref{prop:initial-alg-lift}, the free monad $\barSigmas$ of $\barSigma$ is a Kleisli extension of the free monad $\Sigmas$ of $\Sigma$. We write $\delta^{\Sigmas}\colon \Sigmas T\to T\Sigmas$ for the ensuing distributive law extending $\delta^{\Sigma}$.
\end{rem}
Since products coincide with coproducts in $\Kl(T)$ (\Cref{rem:products}), every pre-De Simone law~\eqref{eq:de-simone-law} forms a pre-GSOS law of $\barSigma$ over $\barB$ in $\Kl(T)$, that is, a family of morphisms
\begin{equation}\label{eq:gsos-law-kleisli} 
\rho_X\colon \barSigma(X\times\barB X) \kleislitow \barB\barSigmas X \qquad (X\in \Kl(T)).
\end{equation}
Additionally, a pre-De Simone law yields a GSOS law in $\C$ with the same operational model:
\begin{proposition}\label{lem:op-model-coincidence}
Every pre-De Simone law \eqref{eq:de-simone-law} induces a GSOS law of $\Sigma$ over $TB$ in $\C$:
 \begin{equation}\label{eq:bar-rho} \begin{tikzcd}
 {\Sigma(X\times TBX)} &&[1em]& {TB\Sigmas X} \\
 {\Sigma(TX\times TBX)} &
 {\Sigma T(X+ BX)} &
 {T\Sigma(X+ BX)} & {TTB\Sigmas X}
  \arrow["\barrho_X"', swap, from=1-1, to=1-4]
  \arrow["{\Sigma(\eta_X\times\id)}"', from=1-1, to=2-1]
  \arrow["{\Sigma j}"', swap, from=2-1, to=2-2]
  \arrow["{\delta^{\Sigma}_{X+BX}}"', swap, from=2-2, to=2-3]
  \arrow["{T\rho_X}", from=2-3, to=2-4]
  \arrow["\mu_{B\Sigmas X}"', from=2-4, to=1-4]
 \end{tikzcd}
 \end{equation}
 The laws $\rho$ \eqref{eq:gsos-law-kleisli} and $\barrho$ \eqref{eq:bar-rho} have the same operational model $\gamma\c\mS\to TB(\mS)$.
 \end{proposition}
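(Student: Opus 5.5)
The plan has two parts. First, $\barrho$ is a GSOS law in $\C$, i.e.\ a natural transformation. Second, the defining recursive equation for the operational model of $\rho$ in $\Kl(T)$ unfolds, in $\C$, to exactly the defining equation for the operational model of $\barrho$. Uniqueness of the operational model then forces the two models to coincide.

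\emph{Naturality.} Every constituent of \eqref{eq:bar-rho} is natural in $X$: $\Sigma(\eta\times\id)$ because $\eta$ is natural; $\Sigma j$ by Assumption~\ref{asm1}; $\delta^{\Sigma}_{X+BX}$ because $\delta^\Sigma$ is natural and $X\mapsto X+BX$ is a functor; $T\rho$ because the pre-De Simone law $\rho$ is natural in $\C$; and $\mu_{B\Sigmas X}$ because $\mu$ is natural. A composite of natural transformations is natural, so $\barrho$ is a GSOS law of $\Sigma$ over $TB$.

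\emph{Translating the Kleisli data.} By \Cref{prop:initial-alg-lift}, $\mu\barSigma=\mS$ with structure $J\ini$. I next check that the Eilenberg--Moore algebra of $(\mS,J\ini)$ in $\Kl(T)$ is $J\hat\ini$. Since $\barSigma J=J\Sigma$ and $J$ is a functor, $J$ maps $\Sigma$-algebra morphisms to $\barSigma$-algebra morphisms. Hence $J\hat\ini\colon (\Sigmas\mS,J\ini_{\mS})\to(\mS,J\ini)$ is a $\barSigma$-algebra morphism with $J\hat\ini\comp J\eta=\id$. By the universal property of the free $\barSigma$-algebra from \Cref{prop:initial-alg-lift}, it is therefore the unique extension of the identity.

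Next I compute the Kleisli pairing. By \Cref{rem:products} the product $X\times\barB X$ in $\Kl(T)$ is $X+BX$, and the pairing of $f\colon Z\kleislitow X$ and $g\colon Z\kleislitow BX$ is the $\C$-morphism $j\comp\brks{f,g}\colon Z\to T(X+BX)$. I must confirm this matches the projections induced by $j$. For the Kleisli identity $\eta$ and a coalgebra $\gamma\colon\mS\to TB\mS$, the pairing is therefore $j\comp(\eta\times\id)\comp\brks{\id,\gamma}$.

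\emph{Unfolding the equation.} The operational model of $\rho$ in $\Kl(T)$ is the unique $\gamma$ satisfying
\[
\gamma\bullet J\ini \;=\; \barB J\hat\ini\bullet \rho\bullet \barSigma\brks{\id,\gamma}_{\Kl},
\]
where $\bullet$ is Kleisli composition. Precomposition with a pure map is ordinary composition, so the left side is $\gamma\comp\ini$. By the definition of the Kleisli extension, $\barSigma\brks{\id,\gamma}_{\Kl}=\delta^\Sigma\comp\Sigma j\comp\Sigma(\eta\times\id)\comp\Sigma\brks{\id,\gamma}$. Kleisli composition with $\rho$ then adds $\mu\comp T\rho$ in front. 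Finally, $\barB J\hat\ini=J B\hat\ini$, and postcomposing with a pure map gives $TB\hat\ini\comp(-)$. Hence the right side equals $TB\hat\ini\comp\barrho_{\mS}\comp\Sigma\brks{\id,\gamma}$, which is precisely the defining diagram for the operational model of $\barrho$ with behaviour functor $TB$.

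Both operational models are unique solutions of the same equation, so they coincide. I expect the main obstacle to be the bookkeeping in the middle step: pinning down the identification of Kleisli products with coproducts via $j$, together with the pairing formula $j\comp\brks{f,g}$. Everything else is routine diagram chasing.
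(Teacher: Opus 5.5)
Your proposal is correct and follows the paper's proof. Both first get naturality of $\barrho$ as a composite of natural maps, and then unfold the defining diagram of $\gamma$ in $\Kl(T)$ (using $\barSigma\brks{\eta,\gamma}=\delta^{\Sigma}\comp\Sigma j\comp\Sigma\brks{\eta,\gamma}$ and $\barB J\hat\ini=JB\hat\ini$) into exactly the diagram defining the operational model of $\barrho$ in $\C$; your explicit checks that $J\hat\ini$ is the Eilenberg--Moore structure in $\Kl(T)$ and that the Kleisli pairing is $j\comp\brks{f,g}$ are details the paper uses implicitly.
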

Note that \eqref{eq:gsos-law-kleisli} is generally not a GSOS law: naturality of \eqref{eq:de-simone-law} in $\C$ corresponds to naturality of \eqref{eq:gsos-law-kleisli} w.r.t.~\emph{pure} morphisms, but not arbitrary morphisms in~$\Kl(T)$. To apply the general congruence result for pre-GSOS laws (\Cref{thm:cong}) to the pre-De Simone law \eqref{eq:gsos-law-kleisli}, we need additional naturality conditions beyond pure morphisms. We use the following terminology:

\begin{defn}\label{def:natural}
A pre-De Simone law \eqref{eq:de-simone-law} is \emph{natural} w.r.t.\ a morphism $p\colon P\kleislito X$ in $\Kl(T)$ if the corresponding pre-GSOS law \eqref{eq:gsos-law-kleisli} in $\Kl(T)$ is natural w.r.t.~$p$. 
\end{defn}

\begin{proposition}\label{prop:affine-vs-natural}
A pre-De Simone law \eqref{eq:de-simone-law} is natural w.r.t.~$p\colon P\kleislito X$ iff the rectangle in $\C$ shown below commutes when precomposed with $\Sigma B_0 p$ and postcomposed with $\mu$: \begin{equation}\label{eq:affine}
\begin{tikzcd}
  \Sigma B_0 P \ar{r}{\Sigma B_0 p} 
  & \Sigma B_0 TX \ar{r}{\Sigma\delta^{B_0}} \ar{d}[swap]{\rho_{TX}} & \Sigma TB_0 X \ar{r}{\delta^{\Sigma}} & T\Sigma B_0 X \ar{d}{T\rho_X} \\
  & TB\Sigmas TX \ar{r}{TB\delta^{\Sigmas}} & TBT\Sigmas X \ar{r}{T\delta^{B}} & TTB\Sigmas X \ar{r}{\mu} & TB\Sigmas X
\end{tikzcd}
\end{equation}
\end{proposition}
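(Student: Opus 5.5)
Unfold Definition~\ref{def:natural} and translate both sides of the naturality square of the Kleisli pre-GSOS law \eqref{eq:gsos-law-kleisli} into morphisms of $\C$. Naturality w.r.t.\ $p\colon P\kleislito X$ means
\[ \rho_X \bullet \barSigma(p\times \barB p) \;=\; \barB\barSigmas p \bullet \rho_P, \]
where $\bullet$ denotes Kleisli composition. The first step is to identify $p\times\barB p$ in $\Kl(T)$ with $\barB_0 p$. By Remark~\ref{rem:products}, products in $\Kl(T)$ are coproducts of $\C$, so $X\times \barB X = X+BX = B_0X$. The Kleisli product of $p$ and $\barB p$ is then $[T\inl\comp p,\ T\inr\comp \delta^B\comp Bp]$, which is exactly $\delta^{B_0}\comp B_0 p = \barB_0 p$ from Remark~\ref{rem:sigmas-t-dist}. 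I would check this via the universal property: its projections must be $p$ and $\barB p$ under the isomorphism $j$ of Assumption~\ref{asm}\ref{asm1}.

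Next, compute both sides in $\C$ using $\barF f=\delta_Y\comp Ff$ and $g\bullet f=\mu\comp Tg\comp f$.

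The left-hand side is $\mu\comp T\rho_X\comp \delta^\Sigma\comp \Sigma\delta^{B_0}\comp \Sigma B_0 p$. This is precisely the upper path of \eqref{eq:affine}, precomposed with $\Sigma B_0 p$ and postcomposed with $\mu$.

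For the right-hand side, first expand $\barB\barSigmas p$. By Remark~\ref{rem:sigmas-t-dist}, $\barSigmas p=\delta^{\Sigmas}\comp\Sigmas p$, so
\[ \barB\barSigmas p=\delta^B\comp B\delta^{\Sigmas}\comp B\Sigmas p. \]
Hence the right-hand side is
\[ \mu\comp T\delta^B\comp TB\delta^{\Sigmas}\comp TB\Sigmas p\comp \rho_P. \]

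Now use naturality of $\rho$ in $\C$ w.r.t.\ the ordinary morphism $p\colon P\to TX$, which gives $TB\Sigmas p\comp\rho_P=\rho_{TX}\comp \Sigma B_0 p$. Substituting turns the right-hand side into
\[ \mu\comp T\delta^B\comp TB\delta^{\Sigmas}\comp \rho_{TX}\comp\Sigma B_0 p, \]
which is the lower path of \eqref{eq:affine}, again precomposed with $\Sigma B_0 p$ and postcomposed with $\mu$.

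So naturality w.r.t.\ $p$ is literally equivalent to commutation of \eqref{eq:affine} after these pre- and postcompositions, and both directions of the ``iff'' follow at once.

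The main obstacle is the bookkeeping in the first step: confirming that the product pairing in $\Kl(T)$, built from $j$, agrees on the nose with $\delta^{B_0}\comp B_0 p$. This requires knowing that $j$ is the canonical map $\langle T\inl,T\inr\rangle^{-1}$, or at least that Kleisli projections are the $T$-images of the coproduct-related maps. I would verify this from the cited characterisation~\cite{cj13}: the Kleisli projections $X+Y\kleislito X$ send the other summand to $\bot$/zero, which forces the pairing to be the copairing of the components.

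A smaller check is that the Kleisli extension of $\Sigmas$ really acts by $\delta^{\Sigmas}\comp\Sigmas p$ and that $\barB\barSigmas$ composes as claimed, which is just functoriality of Kleisli extensions.
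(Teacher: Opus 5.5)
Your proof is correct and is essentially the paper's argument. The paper works in $\Kl(T)$: it factors $p=\id_{TX}\bullet Jp$ and uses naturality of $\rho$ w.r.t.\ the pure morphism $Jp$, which is exactly your use of naturality of $\rho$ in $\C$ at $p\colon P\to TX$; the remaining square, precomposed with $\Sigma B_0 p$, is then literally \eqref{eq:affine}. The paper takes the identification of $p\times\barB p$ with the Kleisli extension of $B_0$ at $p$ for granted, so your explicit check via the Kleisli projections adds detail rather than changing the route.
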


\subsection{Compositionality}
For every pre-De Simone law \eqref{eq:de-simone-law} with its corresponding pre-GSOS law \eqref{eq:gsos-law-kleisli} in $\Kl(T)$, let us denote the operational model and its associated trace (i.e.\ behaviour) morphism by 
$\gamma\colon \mu\barSigma \kleislito \barB(\mu\barSigma)$ and $\tr\colon \mu\barSigma\kleislito \nu \barB = \mu \barB$. Our congruence result for this setting (\Cref{thm:cong-trace-positive} below) involves another monad $T_{+}$ alongside $T$, thought of as the `positive' part of $T$. For example, for $T=\Pow$ one takes the submonad $T_{+}=\Pow_{+}$, the non-empty power set monad. This is an instance of a general construction for 
computing the \emph{affine part} of a monad. (A monad $T$ is \emph{affine} if $T1\cong 1$). Recall that we assume $\C$ to have finite limits.
\begin{notheorembrackets}
\begin{proposition}[{\cite{Lindner1979-qe}}]\label{pro:affine-part}
For every $X\in \C$, let $i_X\c T_{+}X\to TX$ be the equalizer
\begin{equation*}
\begin{tikzcd}[column sep=12ex, row sep=normal]
T_{+}X
	\rar["i_X"] &
TX
	\ar[r,shift right=.35ex,"T!"']
	\ar[r,shift left=.75ex,"\eta\,\comp\,!"] &
T1.
\end{tikzcd}
\end{equation*}
Then $T_{+}$ extends to an affine monad and the family $i_X$ ($X\in \C$) to a monad 
morphism.
\end{proposition}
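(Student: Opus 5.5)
We construct the monad structure on $T_{+}$ by restriction along the equalizers $i_X$. We check that the unit and multiplication of $T$ factor through $i$, that $T_{+}$ is functorial, and that $T_{+}1\cong 1$. Throughout, $i_X$ is monic as an equalizer, so all monad laws for $T_{+}$ follow from those of $T$ once the structure maps are defined by factorization.

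\textbf{Functor.} For $f\colon X\to Y$, the composite $Tf\comp i_X$ equalizes $T!$ and $\eta\comp !$ at $Y$. Indeed, $T!_Y\comp Tf\comp i_X=T!_X\comp i_X=\eta_1\comp !\comp i_X$, and $\eta_1\comp !_{TY}\comp Tf\comp i_X=\eta_1\comp !$. This yields a unique $T_{+}f$ with $i_Y\comp T_{+}f=Tf\comp i_X$. Functoriality and naturality of $i$ then hold by monicity.

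\textbf{Unit.} We have $T!\comp \eta_X=\eta_1\comp !$ by naturality of $\eta$, so $\eta_X$ factors as $i_X\comp\eta^{+}_X$.

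\textbf{Multiplication.} We need $\mu_X\comp i_{TX}\comp T_{+}i_X\colon T_{+}T_{+}X\to TX$ to equalize the pair at $X$. First compute
\[
T!\comp\mu\comp Ti_X\comp i_{TX}=\mu_1\comp TT!\comp Ti_X\comp i_{TX}=\mu_1\comp T(\eta_1\comp!)\comp i_{TX}=\mu_1\comp T\eta_1\comp T!\comp i_{TX}=T!\comp i_{TX}.
\]
Since $T!\comp i_{TX}=\eta_1\comp!$ by definition of $i_{TX}$, the composite equals $\eta_1\comp !$, as required. This gives $\mu^{+}$ with $i\comp\mu^{+}=\mu\comp Ti\comp i_T=\mu\comp i_T\comp T_{+}i$. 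Naturality of $\eta^{+}$ and $\mu^{+}$, together with the unit and associativity laws, follow by postcomposing with the monic $i$ (and $i\comp T_{+}i$, which is also monic since $T_{+}$ preserves the monos $i$, being a restriction of $T$ — the main point to verify here) and using the corresponding laws for $T$. These equations say precisely that $i$ is a monad morphism.

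\textbf{Affineness.} At $X=1$ we have $!_{T1}\colon T1\to 1$ and $\eta_1\comp !_{T1}$. Note that $T!_1=\id_{T1}$, so $T_{+}1$ is the equalizer of $\id$ and $\eta_1\comp!$, i.e.\ the subobject where $x=\eta_1(!x)$. The map $\eta_1\colon 1\to T1$ equalizes these, and any $g$ equalizing them satisfies $g=\eta_1\comp !$. Hence $T_{+}1\cong 1$.

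The main subtlety is the multiplication step: showing that $T_{+}$ preserves monos, or alternatively avoiding this by checking the associativity law after postcomposing only with $i$. We do the latter, expanding $i\comp\mu^{+}\comp T_{+}\mu^{+}$ via naturality of $i$ into $\mu\comp T\mu\comp TTi\comp Ti\comp i$, which by associativity of $\mu$ equals the expansion of $i\comp\mu^{+}\comp\mu^{+}_{T_{+}}$. This makes monicity of $i$ alone sufficient.
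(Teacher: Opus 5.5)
The paper gives no proof of this proposition; it only cites Lindner. Your self-contained verification fills that gap, and it is correct. The steps are exactly what is needed: factor $Tf\comp i_X$, $\eta_X$ and the multiplication through the equalizers, derive every law by postcomposing with the mono $i_X$, and identify $T_{+}1$ with the equalizer of $\id_{T1}$ and $\eta_1\comp !$, which is $(1,\eta_1)$.

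Two repairs are needed.

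First, there is an index slip in the multiplication step. By naturality of $i$ you have $\mu_X\comp i_{TX}\comp T_{+}i_X=\mu_X\comp Ti_X\comp i_{T_{+}X}$. So the computation should start from $T!\comp\mu_X\comp Ti_X\comp i_{T_{+}X}$, and its last step uses the defining equation $T!\comp i_{T_{+}X}=\eta_1\comp !$ of $i_{T_{+}X}$. Likewise the formula should read $i\comp\mu^{+}=\mu\comp Ti\comp i_{T_{+}}$. As written with $i_{TX}$, the composite does not type-check.

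Second, the parenthetical claim that $i\comp T_{+}i$ is monic "since $T_{+}$ preserves the monos $i$" is not justified. It would require $Ti_X$ to be monic, and a monad on an arbitrary finitely complete category need not preserve monos. You never actually need it, as you note at the end, so simply delete the parenthetical. The unit laws, naturality of $\eta^{+}$ and $\mu^{+}$, and associativity can all be checked after postcomposing with $i_X$ alone. For associativity, the two sides expand to $\mu\comp T\mu\comp TTi\comp Ti\comp i$ and $\mu\comp\mu_{T}\comp TTi\comp Ti\comp i$, which agree by associativity of $\mu$.
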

\end{notheorembrackets}
Let us call a morphism $p\c P\to TX$ \emph{affine} if it factors through $i_X\c T_{+}X\to TX$. From the universal property of the equalizer $i_{X}$, the following is immediate:

\begin{lemma}\label{lem:positive-criterion}
A morphism $p\c P\to TX$ is affine iff $T!\comp p = \eta\,\comp\, !$.
\end{lemma}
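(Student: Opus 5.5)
This follows directly from the universal property of the equalizer $i_X\c T_{+}X\to TX$ in \Cref{pro:affine-part}, which states that $i_X$ equalizes the parallel pair $T!,\ \eta\comp !\colon TX\to T1$.

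For the forward direction, suppose $p$ is affine, i.e.\ $p = i_X\comp q$ for some $q\c P\to T_{+}X$. Then
\[
T!\comp p = T!\comp i_X\comp q = \eta\comp !\comp i_X\comp q = \eta\comp !,
\]
where the middle step uses that $i_X$ equalizes the pair. The last step uses that any two morphisms into the terminal object coincide, so $!_{TX}\comp i_X\comp q = !_P$.

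For the converse, suppose $T!\comp p = \eta\comp !_P$. Since $!_P = !_{TX}\comp p$, this reads $T!\comp p = (\eta\comp !_{TX})\comp p$, so $p$ equalizes the two parallel morphisms $TX\rightrightarrows T1$. The universal property of the equalizer then gives a (unique) $q\c P\to T_{+}X$ with $p = i_X\comp q$, so $p$ is affine.

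There is no real obstacle here. The only point needing care is reading the equation $T!\comp p = \eta\comp !$ with $!$ on the right meaning $!_P$, and rewriting it as $!_{TX}\comp p$ by finality of $1$, so that it literally says $p$ equalizes $T!$ and $\eta\comp !$.
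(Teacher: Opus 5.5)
Your proof is correct and takes the same approach as the paper, which calls the lemma immediate from the universal property of the equalizer $i_X$. You spell out both directions explicitly, including the rewriting $!_P = !_{TX}\circ p$ (by finality of $1$) that turns the equation into the statement that $p$ equalizes $T!$ and $\eta\circ !$.
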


\begin{defn}[De Simone Law]
A \emph{De Simone law} is a pre-De Simone law \eqref{eq:de-simone-law} that is natural w.r.t.~the $\Kl(T)$-morphism $i_X\c T_{+}X\kleislito X$ for each $X\in \C$ (cf.\ \Cref{def:natural}).
\end{defn}
This leads to our main result, a general compositionality criterion for De Simone laws:
\begin{theorem}[Compositionality of Coalgebraic Trace Semantics]\label{thm:cong-trace-positive}
For every De Simone law \eqref{eq:de-simone-law} whose trace morphism $\tr\colon \mS\to T(\mu B)$ is affine, $\tr$ is a morphic $\Sigma$-congruence.
\end{theorem}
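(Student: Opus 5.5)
The plan is to derive the statement from \Cref{thm:cong}, applied in $\Kl(T)$ to the pre-GSOS law \eqref{eq:gsos-law-kleisli} of $\barSigma$ over $\barB$. By \Cref{prop:initial-alg-lift} we have $\mu\barSigma=\mS$ with structure $J\ini$. By \Cref{prop:initial-alg-final-coalg} we have $\nu\barB=\mu B$, so the behaviour morphism is exactly $\tr\colon\mS\kleislito\mu B$. The theorem then asks for naturality of \eqref{eq:gsos-law-kleisli} w.r.t.\ two Kleisli morphisms:
\begin{itemize}
\item the morphism $\hat{J\ini}\colon\barSigmas(\mS)\kleislito\mS$;
\item the morphism $\barSigmas\tr\colon\barSigmas(\mS)\kleislito\barSigmas(\mu B)$.
\end{itemize}
Once both are verified, \Cref{thm:cong} gives a $\barSigma$-algebra structure $a'\colon\Sigma(\mu B)\kleislito\mu B$ with $\tr\comp J\ini=a'\comp\barSigma\tr$ in $\Kl(T)$. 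This is the Kleisli form of a morphic congruence. If needed, I would also check that $a'$ can be taken pure or affine, so that the conclusion reads as a congruence in the sense of \Cref{rem:congruence-relation}.

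\textbf{First naturality condition.} The morphism $\hat{J\ini}$ is the Eilenberg--Moore extension of a pure algebra. By uniqueness of the induced $\barSigmas$-algebra map, it equals $J\hat\ini$, which is pure. Naturality of \eqref{eq:de-simone-law} in $\C$ is naturality of \eqref{eq:gsos-law-kleisli} w.r.t.\ pure morphisms, so this condition is immediate.

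\textbf{Second naturality condition.} This is the main step. Since $\tr$ is affine, it factors as $\tr=i_{\mu B}\comp t$ with $t\colon\mS\to T_{+}(\mu B)$, i.e.\ as $i\comp Jt$ in $\Kl(T)$, where $i\colon T_{+}(\mu B)\kleislito\mu B$. I would then express $\barSigmas\tr$ as a composite of morphisms w.r.t.\ which $\rho$ is natural. Naturality w.r.t.\ a fixed pre-natural family is closed under composition, so it suffices to control each factor.

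To do this I would prove a lemma: if $\rho$ is natural w.r.t.\ $i_X$ for all $X$, then it is natural w.r.t.\ $\barSigmas i_X$, or more usefully w.r.t.\ $i_{\Sigmas X}\comp J\delta'$. Here $\delta'\colon\Sigmas T_{+}\to T_{+}\Sigmas$ is the distributive law induced on the affine part, obtained by restricting $\delta^{\Sigmas}$. Restriction is possible because $\delta^{\Sigmas}$ carries affine families to affine ones, as one checks via \Cref{lem:positive-criterion} and the fact that $i$ is a monad morphism. With this lemma,
\[
\barSigmas\tr=\barSigmas i\comp\barSigmas Jt=\bigl(i\comp J\delta'\bigr)\comp J\Sigmas t ,
\]
where the second equality uses that $\barSigmas$ extends $\Sigmas$ and the compatibility of $\delta'$ with $i$. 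Both factors are then handled: the last is pure, and the first is covered by the lemma.

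\textbf{Expected obstacle.} The hard part is this lemma, i.e.\ showing that $\barSigmas i_X$ factors as $i_{\Sigmas X}$ after a pure map. I would argue it by induction via the free algebra structure, or more simply by proving $\delta^{\Sigmas}\comp\Sigmas i=i\comp\delta'$ using the universal property of $\Sigmas$ together with compatibility of $\delta^\Sigma$ with $T_{+}$. If that restriction of $\delta^\Sigma$ to $T_{+}$ is not automatic, I would instead verify naturality w.r.t.\ $\barSigmas\tr$ directly through the rectangle \eqref{eq:affine} of \Cref{prop:affine-vs-natural}. That route takes $p=\Sigmas\tr$ and reduces affineness of $\Sigmas\tr$ to affineness of $\tr$, using $T_{+}$ being affine and the distributive laws preserving the equalizer condition $T!\comp p=\eta\comp!$.
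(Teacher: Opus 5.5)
Your proposal is correct and follows the paper's proof. You apply \Cref{thm:cong} in $\Kl(T)$, get naturality w.r.t.\ $J\hat\ini$ from purity, and get naturality w.r.t.\ $\barSigmas\tr$ by factoring it as $i$ after a pure map and using closure of naturality under composition. The paper obtains that factorization as ``$\barSigmas$ preserves affine morphisms'', by restricting $\delta^{\Sigma}$ to $T_{+}$ and lifting through $\Kl(T_{+})$. Your direct restriction of $\delta^{\Sigmas}$ also works, by the same equalizer computation, using only naturality of $\delta^{\Sigmas}$ and its unit law.

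The one loose end is your final hedge. No purity or affineness of $a'$ is needed, and purity fails already for interleaving. The reason is that $\tr\comp J\ini = a'\comp\barSigma\tr$ in $\Kl(T)$ unfolds to $\tr\comp\ini = \mu\comp Ta'\comp\delta^{\Sigma}\comp\Sigma\tr$ in $\C$. That equation directly gives the required $\Sigma$-algebra structure on $T(\mu B)$.
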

The proof (see appendix) uses the congruence result for pre-GSOS laws (\Cref{thm:cong}) and amounts to verifying that the naturality conditions of the latter follow from the naturality of the De Simone law \eqref{eq:de-simone-law} w.r.t.~$i_X\colon T_{+}X\kleislito X$ and the affineness of $\tr$. 

We shall see below that the naturality properties of \eqref{eq:de-simone-law} reflect precisely the requirement of affineness of output terms in De Simone rules (\Cref{def:de-simone-rule}). In contrast, affineness of the trace morphism~$\tr$ is a non-trivial global property. For example, in a setting of probabilistic specifications (\Cref{sec:prob-de-simone}) it expresses that the operational model is \emph{almost surely terminating}. On a technical level, by \Cref{lem:positive-criterion}, affineness of $\tr$ amounts to the 
equality $T!\comp\tr = \eta\,\comp\, !$. It turns out that one of the inequalities constituting 
it follows nearly for free:
\begin{proposition}\label{prop:tr-ineq}
If the morphism $1\xto{\langle \eta,\eta \rangle} T1\times T1\xto{j} T(1+1)$ and every component~\eqref{eq:de-simone-law} is affine, then so is $\gamma$, and moreover the induced trace morphism $\tr$
satisfies $T!\comp\tr \sqsubseteq \eta\,\comp\, !$.
\end{proposition}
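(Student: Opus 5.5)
The plan has two parts: first show that $\gamma$ is affine by realising it as a recursively defined coalgebra for the affine part $T_{+}$, and then bound $T!\comp\tr$ by a fixpoint-induction argument in the $\DCPOb$-enriched Kleisli category.

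\emph{Step 1: affine maps are closed under the operations that build $\gamma$.} I would collect a few facts, each following from \Cref{lem:positive-criterion}.
\begin{enumerate}
\item[(a)] If $p$ is affine, then so are $p\comp g$ for any $g$ and $Tf\comp p$ for any $f$. The latter holds because $T!\comp Tf=T!$.
\item[(b)] Kleisli composites of affine maps are affine. This uses that $i$ is a monad morphism from \Cref{pro:affine-part}, so $\mu\comp Tq\comp p$ is affine whenever $p,q$ are.
\item[(c)] If $p\colon P\to TY$ is affine, then so is $\delta^{\Sigma}\comp\Sigma p$. By naturality of $\delta^\Sigma$ w.r.t.\ $!\colon Y\to 1$ and the unit law $\delta^\Sigma\comp\Sigma\eta=\eta$,
\[T!\comp\delta^{\Sigma}\comp\Sigma p=T!\comp\delta^\Sigma_1\comp\Sigma T!\comp\Sigma p=T!\comp\delta^\Sigma_1\comp\Sigma\eta\comp\Sigma!=\eta\comp!.\]
\item[(d)] If $q\colon Q\to TBX$ is affine, then $j\comp(\eta\times q)\colon X\times Q\to T(X+BX)$ is affine. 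Naturality of $j$ gives $T(!+!)\comp j=j\comp(T!\times T!)$, so the composite becomes $T!_{1+1}\comp j\comp\brks{\eta,\eta}\comp!$. This equals $\eta\comp!$ by the hypothesis on $j\comp\brks{\eta,\eta}$.
\end{enumerate}
Combining (a)--(d) with the hypothesis that each $\rho_X$ is affine, the map $\barrho_X\comp\Sigma(\id\times i)\colon\Sigma(X\times T_{+}BX)\to TB\Sigmas X$ from \eqref{eq:bar-rho} is affine. It therefore factors uniquely as $i\comp\rho^{+}_X$ for some $\rho^{+}_X\colon \Sigma(X\times T_{+}BX)\to T_{+}B\Sigmas X$.

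\emph{Step 2: $\gamma$ is affine.} Using $\rho^{+}$, define by the same structural recursion (\cite[Prop.~2.4.6]{DBLP:books/cu/J2016}, which only uses the component at $\mS$) the unique $\gamma^{+}\colon\mS\to T_{+}B(\mS)$ with $\gamma^{+}\comp\ini=T_{+}B\hat\ini\comp\rho^{+}\comp\Sigma\brks{\id,\gamma^{+}}$. Since $i$ is natural, $i\comp\gamma^{+}$ satisfies the defining equation of the operational model of $\barrho$. By uniqueness and \Cref{lem:op-model-coincidence}, $\gamma=i\comp\gamma^{+}$, so $\gamma$ is affine. In Kleisli terms this says $J!\comp\gamma=J!$.

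\emph{Step 3: the trace inequality.} The inequality $T!\comp\tr\sqsubseteq\eta\comp!$ reads $J!\comp\tr\sqsubseteq J!$ in $\Kl(T)$. Consider the monotone map $\Phi(f)=J\beta\comp\barB f\comp\gamma$ on the DCPO $\Kl(T)(\mS,\mu B)$; it is monotone by local monotonicity of $\barB$ and monotonicity of composition.
\begin{itemize}
\item By Pataraia's theorem, $\Phi$ has a least fixpoint. Every fixpoint of $\Phi$ is a coalgebra morphism into the final coalgebra of \Cref{prop:initial-alg-final-coalg}, so this least fixpoint is $\tr$.
\item Consequently $\tr$ lies in every subset $S$ that contains $\bot$, is closed under directed joins, and is closed under $\Phi$.
\item Take $S=\{f\mid J!\comp f\sqsubseteq J!\}$. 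It contains $\bot$ by left strictness, and it is closed under directed joins because composition preserves them.
\item For closure under $\Phi$, compute
\[J!\comp J\beta\comp\barB f\comp\gamma=J!\comp JB!\comp\barB f\comp\gamma=J!\comp\barB(J!\comp f)\comp\gamma\sqsubseteq J!\comp\barB(J!)\comp\gamma=J!\comp\gamma=J!,\]
using that $\barB$ extends $B$, local monotonicity, and Step 2.
\end{itemize}

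I expect Step 2 to be the main obstacle. The recursion defining $\gamma$ is not an iteration, so affineness cannot be obtained by a naive induction. The plan is therefore to push the whole construction down to $T_{+}$ and appeal to uniqueness, which requires establishing (c) and (d) carefully, in particular using the unit law of $\delta^\Sigma$.
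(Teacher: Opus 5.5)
Your argument is correct. Step 3 is essentially the paper's: $\tr$ as the least fixpoint of $f\mapsto J\beta\comp\barB f\comp\gamma$, fixpoint induction over $S=\{f\mid T!\comp f\sqsubseteq\eta\comp!\}$, and the same computation using affineness of $\gamma$.

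Where you genuinely differ is in proving that $\gamma$ is affine.

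\emph{The paper's route.} It stays inside $T$ and projects along $T!$. It equips $\mS\times T1$ with an explicit $\Sigma$-algebra structure, built from $\delta^{\Sigma}$, $j$, $\eta$ and $T!$. It then checks that both $\brks{\id,T!\comp\gamma}$ and $\brks{\id,\eta\comp!}$ are algebra morphisms out of the initial algebra. The first check uses affineness of the $\rho_X$, the second uses affineness of $j\comp\brks{\eta,\eta}$, and initiality then gives $T!\comp\gamma=\eta\comp!$.

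\emph{Your route.} You restrict along $i$ instead. Your closure facts (a)--(d) show that $\barrho\comp\Sigma(\id\times i)$ factors through $T_{+}$, so the whole recursion can be run in $T_{+}B$. Uniqueness of the operational model of $\barrho$ (\Cref{lem:op-model-coincidence}) then yields $\gamma=i\comp\gamma^{+}$.

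\emph{What each buys.} Your approach is more modular and more explanatory: it exhibits $\gamma$ as the operational model of a law with values in $T_{+}B$, and fact (c) is exactly the computation behind \Cref{pro:d-rest}. The price is using the equalizer $T_{+}$ and a few auxiliary closure lemmas. The paper's approach is shorter and needs only the equational criterion of \Cref{lem:positive-criterion}, but one has to guess the auxiliary algebra on $\mS\times T1$.

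One small correction in Step 3: left strictness ($\bot\comp f=\bot$) does not put $\bot$ into $S$. What you must control is $J!\comp\bot$, where $\bot$ sits on the other side of the composite. Monotonicity of composition gives $J!\comp\bot\sqsubseteq J!$ as soon as $\bot\sqsubseteq g$ for some $g$ with $J!\comp g=J!$, e.g.\ a pure $Jh$ with $h\colon\mS\to\mu B$ in $\C$. It is also immediate under right strictness, which holds in $\Kl(\Pow)$ and $\Kl(\W)$. The paper's own justification at this point is equally brief.
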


\section{Case Study I: De Simone Rules}\label{sec:de-simone}

As a first showcase of our abstract theory, we derive the congruence of De Simone rules.

\subsection{Categorical Setting}
Recall that De Simone rules yield pre-De Simone laws for $\C=\Set$, $ BX=L\times X + 1$, $T=\Pow$, and a polynomial functor $\Sigma$. 
Let us verify that this data satisfies the \Cref{asm}:
\begin{enumerate}
\item We have the natural isomorphism 
	$j\colon \Pow X\times \Pow Y\cong \Pow (X+Y)$ given by $j(S,T) = S \uplus T$.
\item The distributive law  $\delta^{B}\colon L\times \Pow X+1\to \Pow(L\times X+1)$ is given by \Cref{ex:lts-trace-coalg}, and the law $\delta^{\Sigma}\colon \Sigma\Pow\to \Pow\Sigma$ by
$\f(X_1,\ldots,X_n) \;\mapsto\; \{\, \f(x_1,\ldots,x_n) \mid x_i\in X_i \text{ for $i=1,\ldots,n$}\,\}$.
\item $\Pow,B,\barB$ form a setting for coalgebraic trace semantics, as explained in \Cref{ex:lts-trace-coalg}.
\end{enumerate}
The affine part of $\Pow$ is the non-empty power set monad $\Pow_{+}$.
Thus a morphism $p\colon P\kleislito X$ of $\Kl(\Pow)$ is affine iff $p(z)\neq \emptyset$ for all $z\in P$, that is, $p$ corresponds to a right-total relation.

\subsection{Compositionality}
The key lies in the next proposition, whose proof illustrates how the affineness of terms in De Simone rules corresponds to the induced pre-De Simone law being a (proper) De Simone law:

\begin{proposition}\label{prop:de-simone-rule-law-affine}
For every set of De Simone rules, \eqref{eq:de-simone-law-from-rules} is a De Simone law.
\end{proposition}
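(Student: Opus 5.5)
We verify naturality of \eqref{eq:de-simone-law-from-rules} w.r.t.\ $i_X\colon \Pow_{+}X\kleislito X$ via \Cref{prop:affine-vs-natural}. That is, we check that the rectangle \eqref{eq:affine} commutes when instantiated with $P=\Pow_+X$ and $p=i_X$ (the inclusion of non-empty subsets), precomposed with $\Sigma B_0 i_X$ and postcomposed with $\mu$. Concretely, we fix an element $\f(u_1,\ldots,u_n)$ of $\Sigma(\Pow_+X + L\times \Pow_+X+1)$. Each $u_i$ is either a non-empty set $S_i\subseteq X$, a pair $(a_i,S_i)$ with $S_i$ non-empty, or $*$. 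We then compute both legs as subsets of $L\times\Sigmas X+1$.

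The \emph{upper-right path} first distributes, via $\delta^{B_0}$ and $\delta^{\Sigma}$, to the set of all $\f(w_1,\ldots,w_n)$ with $w_i\in S_i$, $w_i=(a_i,v_i)$ with $v_i\in S_i$, or $w_i=*$, according to the shape of $u_i$. It then applies $\rho_X$ to each such element and takes the union. The \emph{left-lower path} applies $\rho_{\Pow X}$ directly, treating the sets $S_i$ as variables. This yields $*$ together with pairs $(a,t[\sigma])$, where $\sigma$ substitutes the sets $S_i$. The result is then pushed through $B\delta^{\Sigmas}$, which sends $t[\sigma]$ to the set of all terms obtained by choosing, for \emph{each occurrence} of a set-valued variable in $t$, an element of that set. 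Finally $\delta^B$ and $\mu$ flatten everything into a subset of $L\times\Sigmas X+1$.

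We compare the two sides by cases on the shape of $(u_i)$. If some $u_i=*$, both sides equal $\{*\}$. Here the upper path needs non-emptiness of every $S_j$, so that the set of choices $\f(w_1,\ldots,w_n)$ is non-empty and the union is $\{*\}$ rather than $\emptyset$; this is exactly where affineness of $p$ enters. Otherwise the pattern of which $u_i$ lie in $L\times(-)$ is the same for every choice $w$. Hence the same rules fire on both sides, and both contain $*$. For a fixed rule with conclusion term $t$, the upper path produces $\{t[x_j\mapsto w_j,\,y_{i_l}\mapsto v_{i_l}] : w_j\in S_j,\ v_{i_l}\in S_{i_l}\}$. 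These are substitutions choosing one element \emph{per variable}. The lower path produces choices \emph{per occurrence}.

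The key step, and the only real obstacle, is showing that these two sets agree. This uses affineness of $t$ from \Cref{def:de-simone-rule}: each variable occurs at most once, so per-occurrence and per-variable choices coincide. Variables not occurring in $t$ pose no problem, since on the upper side their chosen value is discarded, and non-emptiness guarantees such a choice exists. On the lower side they do not appear at all. We thus get equality of the two subsets. We would remark that without affineness (e.g.\ $t=y_1\parallel y_1$) the lower path yields strictly more terms, and without non-emptiness the `$*$' component breaks. This shows the role of both hypotheses and completes the proof.
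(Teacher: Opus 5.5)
Your proposal is correct and takes essentially the same route as the paper. Both reduce via \Cref{prop:affine-vs-natural} to an elementwise comparison on $\Sigma(\Pow_{+}X+L\times\Pow_{+}X+1)$, split on whether some argument is $*$, and use non-emptiness for the $*$ component and affineness of $t$ to identify per-occurrence with per-variable choices (your explicit handling of variables not occurring in $t$ just spells out a point the paper leaves implicit).
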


\begin{proof}[Proof sketch]
By \Cref{prop:affine-vs-natural}, we are to show that the two legs of the diagram below send any element $\f(U_1,\ldots,U_n)$ of $\Sigma(\Pow_{+} X+L\times \Pow_{+} X + 1)$ to the same element of $\Pow(L\times \Sigmas X+1)$.
\[
\begin{tikzcd}[column sep=2ex, scale cd=.95]
\Sigma(\Pow X+L\times \Pow X + 1) \ar{r}{\Sigma\delta^{B_0}} \ar{d}[swap]{\rho_{\Pow X}} &[2ex] \Sigma \Pow (X+L\times X + 1) \ar{r}{\delta^{\Sigma}}  &[.3ex] \Pow \Sigma (X+L\times X + 1) \ar{d}{\Pow \rho_X} &[-.2ex] \\
\Pow(L\times \Sigmas \Pow X +1) \ar{r}{\Pow B\delta^{\Sigmas}} & \Pow (L\times \Pow \Sigmas X +1) \ar{r}{\Pow \delta^{B}} & \Pow \Pow (L\times \Sigmas X+1) \ar{r}{\mu} & \Pow(L\times \Sigmas X + 1)
\end{tikzcd}
\]
To illustrate the gist of the argument, let us consider a binary operator $\f$ and assume that the only rule for $\f$ is given by \eqref{eq:proof-example-de-simone-rule} below, for fixed labels $a,a_1\in L$:\\[-.5ex]
\begin{minipage}{.35\textwidth}
\begin{equation}\label{eq:proof-example-de-simone-rule}
\frac{x_{1}\xto{a_1} y_{1}}
{\f(x_1,x_2)\xto{a} \f(y_1,x_2)}
\end{equation}
\end{minipage}
\begin{minipage}{.1\textwidth}
~
\end{minipage}
\begin{minipage}{.35\textwidth}
\begin{equation}\label{eq:proof-example-de-simone-rule-non-affine}
\frac{x_{1}\xto{a_1} y_{1}}
{\f(x_1,x_2)\xto{a} \f(y_1,y_1)}
\end{equation}
\end{minipage}\\[2ex]
The interesting case is that of an element $\f(U_1,U_2)$ with $U_1=(a_1,V_1)\in L\times \Pow_{+}X$ and $U_2\in \Pow_{+}X$.   
 The map $\rho_{\Pow X}$ sends $\f(U_1,U_2)$ to the set 
 $\{ *, (a,\f(V_1,U_2))\} \in \Pow(L\times \Sigmas\Pow X+1)$.  
 The pair $(a,\f(V_1,U_2))$ is mapped by $\delta^{B}\comp B\delta^{\Sigmas}$ to the set of all pairs
$(a,\f(v_1,u_2))$ where $v_1\in V_1$ and $u_2\in U_2$. (Here affineness of the term $\f(y_1,x_2)$ in \eqref{eq:proof-example-de-simone-rule} is used, see \Cref{rem:affine}.) Thus, overall,
$\mu\comp \Pow\delta^{B}\comp \Pow B\delta^{\Sigmas}\comp \rho_{\Pow X}(\f(U_1,U_2))$
is the subset of $L\times \Sigmas X + 1$ consisting of:
\[ \text{(1) the element $*$ of $1$;}\qquad \text{(2) all pairs $(a,\f(v_1,u_2))\in L\times \Sigmas X$ with $v_1\in V_1$ and $u_2\in U_2$.} \] 
Similarly, $\f(U_1,U_2)$ is mapped by $\delta^{\Sigma}\comp \Sigma\delta^{B_0}$ to the non-empty set of all $\f((a_1,v_1),u_2)$ with $v_1\in V_1$ and $u_2\in U_2$. Moreover, $\rho_X$ maps each such $\f((a_1,v_1),u_2)$ to the set $\{*,(a,\f(v_1,u_2))\}$. Thus
$\mu\comp  \Pow\rho_X\comp \delta^{\Sigma}\comp \Sigma\delta^{B_0}(\f(U_1,U_2))$ is also given by the above elements (1) and (2).
\end{proof}

\begin{rem}\label{rem:affine} 
Some subtleties of the above proof are worth mentioning:
\begin{enumerate}
\item The reasoning critically rests on affineness of the output term $\f(y_1,x_2)$ of the rule \eqref{eq:proof-example-de-simone-rule} and fails for non-affine terms. For example, suppose that $\f$ is instead specified by the rule~\eqref{eq:proof-example-de-simone-rule-non-affine}
with the output term $\f(y_1,y_1)$, which is non-affine as the variable $y_1$ occurs twice. Now
\begin{itemize}\item $\mu\comp \Pow\delta^{B}\comp \Pow B\delta^{\Sigmas}\comp \rho_{\Pow X}(\f(U_1,U_2))$    
consists of $*$ and all $\f(v_1,v_1')$ with $v_1,v_1'\in V_1$. 
\item $\mu\comp  \Pow\rho\comp \delta^{\Sigma}\comp \Sigma\delta^{B_0}(\f(U_1,U_2))$
consists of $*$ and all $\f(v_1,v_1)$ with $v_1\in V_1$.
\end{itemize} 
These two sets differ whenever $V_1$ has more than one element.
\item It is also critical that the sets $V_1$ and $U_2$ considered in the proof are \emph{non-empty}. For example, suppose that $V_1=\emptyset$ (so $U_1=(a_1,\emptyset)$). Then
\[\mu\comp \Pow\delta^{B}\comp \Pow B\delta^{\Sigmas}\comp \rho(\f(U_1,U_2)) = \{*\} \qquad\text{while}\qquad  
\mu\comp  \Pow\rho\comp \delta^{\Sigma}\comp \Sigma\delta^{B_0}(\f(U_1,U_2)) = \emptyset.\]
Recall that we need `$*$' to signal potential trace termination (\Cref{rk:star}).
\end{enumerate}
\end{rem}
From the above proposition the congruence result for De Simone rules is now immediate. Let $\mathcal{R}$ be a set of such rules with operational model $\gamma\colon \mu\Sigma\to \Pow(L\times \mS+1)$ and trace map $\tr\colon \mS\to \Pow(\mu B) = \Pow(L^{*})$.
Two terms $t,s\in \mu\Sigma$ are \emph{trace equivalent} if $\tr(t)=\tr(s)$.

\begin{theorem}[Compositionality of De Simone Rules~\cite{bloom94}]\label{thm:cong-de-simone-rules}
For every set of De Simone rules, trace equivalence on the operational model is a congruence relation.
\end{theorem}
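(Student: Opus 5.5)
The plan is to derive this directly from \Cref{thm:cong-trace-positive} together with \Cref{prop:de-simone-rule-law-affine} and \Cref{rem:congruence-relation}. By \Cref{prop:de-simone-rule-law-affine}, the natural transformation \eqref{eq:de-simone-law-from-rules} induced by the given set $\R$ of rules is a De Simone law for $\C=\Set$, $T=\Pow$, $BX=L\times X+1$. The assumptions of \Cref{asm} were already verified in the categorical setting above. Once we check that the trace morphism $\tr\colon \mS\to\Pow(L^{*})$ is affine, \Cref{thm:cong-trace-positive} yields that $\tr$ is a morphic $\Sigma$-congruence. Since $\Sigma$ is polynomial, \Cref{rem:congruence-relation} then shows that the kernel of $\tr$, which is exactly trace equivalence, is a congruence relation.

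Two bookkeeping points remain. First, the operational model of the Kleisli pre-GSOS law coincides with that of the induced GSOS law $\barrho$ of \Cref{lem:op-model-coincidence}. We should check that this $\gamma$ is the LTS generated by $\R$ with all states marked terminating, so that $\tr$ really is partial-trace semantics (\Cref{ex:lts-trace}). This follows by unfolding \eqref{eq:bar-rho}: $\Sigma j$ and $\delta^\Sigma$ enumerate all ways of choosing, for each argument, either the argument itself or one of its transitions, and then $\rho_X$ applies the matching rules. Second, $\tr$ must be affine, which by \Cref{lem:positive-criterion} means $\tr(t)\neq\emptyset$ for every closed term $t$.

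The main (but mild) obstacle is affineness. I would argue it concretely: every $\rho_X$-output contains $*$, so every state of $\gamma$ satisfies $t{\downarrow}$, and hence $\varepsilon\in\tr(t)$. Formally, the final-coalgebra morphism to $(L^{*},J\beta^{-1})$ satisfies $\tr = \mu\comp\Pow[\,a,w\mapsto wa\text{-style}\ldots]\comp\gamma$ via the coalgebra morphism square. Since $*\in\gamma(t)$, the $+1$ component contributes $\beta(*)=\varepsilon$ to $\tr(t)$.

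Alternatively, one can invoke \Cref{prop:tr-ineq}. Its hypotheses are that $\langle\eta,\eta\rangle$ composed with $j$ is affine (the set $\{\inl *,\inr *\}$ is non-empty) and that every component $\rho_X$ is affine (outputs contain $*$). It gives $\Pow!\comp\tr\sqsubseteq\eta\comp!$, and the non-emptiness argument supplies the reverse inclusion.
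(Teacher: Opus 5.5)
Your proposal is correct and matches the paper's proof. You get affineness of $\tr$ from $\varepsilon\in\tr(t)$ (every state carries $*$), combine \Cref{prop:de-simone-rule-law-affine} with \Cref{thm:cong-trace-positive}, and conclude via \Cref{rem:congruence-relation}; the alternative via \Cref{prop:tr-ineq} adds nothing, since for $\Pow$ the inequality $\Pow!\comp\tr\sqsubseteq\eta\comp!$ holds trivially and all the content lies in the non-emptiness argument you already gave.
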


\begin{proof}
The trace map $\tr$ is affine since $\epsilon\in \tr(t)$ for all $t$, so by \Cref{prop:de-simone-rule-law-affine} and \Cref{thm:cong-trace-positive} it forms a morphic congruence. By \Cref{rem:congruence-relation}, this implies that trace equivalence (the kernel of $\tr$) forms a congruence relation.
\end{proof}

\section{Case Study II: Probabilistic De Simone Rules}\label{sec:prob-de-simone}
Thanks to the categorical generality of De Simone laws, varying the underlying monad $T$ leads to novel De Simone-type congruence formats for LTS with computational effects that are more complex than non-determinism. For illustration, we move to a probabilistic setting.

\subsection{Weighted and Probabilistic Labelled Transition Systems}\label{sec:trace-weighted}
The systems under consideration are a weighted form of LTS, which involve a weighted generalization of the power set monad. 
Consider the endofunctor $\W\c\Set\to\Set$ given by 
\begin{align*}
\W X = [0,\infty]^X,\qquad \W(f\colon X\to Y)\c \sum_{i\in I} r_i\cdot x_i \mapsto \sum_{i\in I} r_i\cdot f(x_i).
\end{align*}
Here a function $\phi\in \W X$ is identified with a formal sum $\sum_{i\in I} r_i\cdot x_i$ where $r_i\in [0,\infty]$, $x_i\in X$, and $\sum_{x_i=x} r_i = \phi(x)$  for all $x\in X$. 
The functor $\W$ extends to a monad with unit $\eta_X\colon X\to \W X$ and multiplication $ \mu_X\colon \W\W X \to \W X$ 
defined as follows:
\begin{align*}
\eta_X\c x\mapsto 1\cdot x,\qquad\mu_X\c \sum_{i\in I} r_i\cdot (\sum_{j\in J_i}r_{ij}\cdot x_{ij} ) \mapsto \sum_{i\in I}\sum_{j\in J_i} r_i\cdot r_{ij}\cdot x_{ij}.
\end{align*}
The affine part $\W_{+}$ is the \emph{distribution monad} $\D X = \{ \phi\in \W X\mid \sum_{x\in X} \phi(x) =1  \}$ formed by discrete probability distributions. Thus a morphism $p\colon P\kleislito X$ in $\Kl(\W)$ is affine iff $p(z)\in \D X$ for all $z\in P$.
A \emph{$\W$-LTS} (with explicit termination) is a coalgebra
\begin{equation}\label{eq:weighted-lts} c\colon C\to \W (L\times C+1). \end{equation} 
We write $x\xto{a/r} y$ if $c(x)(a,y)=r$, and $x\xto{r}*$ if $c(x)(*)=r$. A \emph{(generative) probabilistic LTS}~\cite{hjs07} is a $\W$-LTS whose structure $c$ is affine ($c(x)\in \D(L\times C+1)$ for $x\in C$). Here, weights are interpreted as probabilities: $x\xto{a/r} y$ means that with probability $r$ the process~$x$ outputs~$a$ and transitions into $y$, and $x\xto{r}*$ means that $x$ terminates with probability $r$. 
\begin{rem}
We stick to $[0,\infty]$ for simplicity. All present definitions and results extend to weights from any continuous commutative semiring~\cite{dkv09}. In particular, taking the boolean semiring $\{0,1\}$ (with $\vee$ as addition and $\wedge$ as multiplication) recovers the results of \Cref{sec:de-simone}.
\end{rem}
The \emph{trace map} of a $\W$-LTS \eqref{eq:weighted-lts} is the map $\tr\colon C\to \W(L^{*})$ given by
\begin{equation}\label{eq:trace-weighted} 
 \tr(x)(a_1\cdots a_n) = \sum_{x=x_0,\ldots,x_n} \big(\prod_{i=0}^{n-1} c(x_i)(a_{i+1},x_{i+1})\big)\cdot c(x_n)(*),
\end{equation}
with the sum ranging over all paths of length $n$ starting from $x$, that is, all lists $x_0,\ldots,x_n$ of elements of $C$ where $x=x_0$. In the probabilistic case, $\tr(x)(a_1\cdots a_n) $ is the probability that the process $x$ generates the trace $a_1\cdots a_n$ and then terminates. This definition is captured by the setting $T$, $B$, $\barB$ for coalgebraic trace semantics where
$T=\W$ and $BX=L\times X+1$,
and the Kleisli extension
$\barB\colon \Kl(\W)\to \Kl(\W)$
corresponds to the distributive law 
\begin{equation*} \delta^{B}\colon L\times \W X + 1 \to \W(L\times X+1),\qquad  (a,\sum_{i\in I} r_i\cdot x_i) \mapsto \sum_{i\in I} r_i\cdot (a,x_i),\qquad *\mapsto 1\cdot *. \end{equation*}
Thus $\barB$ sends $f\colon X\kleislito Y$ to $\barB f\colon L\times X+1\kleislito L\times Y+1$ given by
\begin{equation}\label{eq:barB-weighted} (a,x)\mapsto \sum_{i\in I} r_i\cdot (a,y_i) \qqand *\mapsto 1\cdot *,\end{equation}
where $f(x)=\sum_{i\in I} r_i\cdot y_i$. Then all conditions of \Cref{def:trace-setting} are satisfied:
\begin{enumerate}
\item $B$ has the initial algebra $(\mu B,\beta)$ carried by $L^{*}$, as before.
\item The category $\Kl(\W)$ is left strictly $\DCPOb$-enriched: for $f,g\colon X\kleislito Y$ the order is given by 
$f\sqsubseteq g$ iff $f(x)\leq g(x)$ for all $x\in X$, where $\leq$ is the usual order of the interval $[0,\infty]$.
\item The lifting $\barB$ is locally monotone; this is clear from \eqref{eq:barB-weighted}. 
\end{enumerate}
For every $\W$-LTS \eqref{eq:weighted-lts}, the coalgebraic trace map $\tr\colon C\kleislito \nu \bar B = \mu \barB = L^{*}$
is given by \eqref{eq:trace-weighted}. Indeed, one readily verifies that \eqref{eq:trace-weighted} is a coalgebra morphism from $(C,c)$ to $(\mu B, J\beta^{-1})$.

\subsection{A Probabilistic Rule Format}
We now introduce a probabilistic version of De Simone rules. Its syntax and semantics are closely related to \emph{probabilistic GSOS}~\cite{bartels02} and \emph{stochastic GSOS}~\cite{DBLP:conf/fossacs/KlinS08,ks13}. The main difference is that the latter apply to reactive probabilistic systems, while our rules apply to generative ones, which is the more natural setting for trace semantics. In addition, we put restrictions on the premises and the output terms of rules similar to the original De Simone format.

\begin{defn}[Weighted De Simone Rule]\label{de:de-simone-rule-weighted}
Fix a signature $\Sigma$, label set $L$, and variables $x_i$, $y_i$ as in \Cref{def:de-simone-rule}. A \emph{weighted De Simone rule} is a rule of either of the two forms
\begin{align}
\frac{x_{i_1}\xto{a_1} y_{i_1} \quad\cdots\quad x_{i_k}\xto{a_k} y_{i_k} \quad x_{j_1} \to * \quad\cdots\quad x_{j_m} \to *}
{\f(x_1,\ldots,x_n)\xto{a/r} t} \label{eq:de-simone-rule-weighted} \\
\frac{x_{i_1}\xto{a_1} y_{i_1} \quad\cdots\quad x_{i_k}\xto{a_k} y_{i_k} \quad x_{j_1} \to * \quad\cdots\quad x_{j_m} \to *}
{\f(x_1,\ldots,x_n)\xto{r} *} \label{eq:de-simone-rule-weighted-2}
\end{align}
where (i) $\f$ is an $n$-ary operation symbol from $\Sigma$, (ii) the indices $i_1,\ldots,i_k,j_1,\ldots,j_m\in \{1,\ldots,n\}$ are pairwise distinct, (iii) $a_1,\ldots,a_k,a\in L$, (iv) $r\in [0,\infty]$, and (v) $t$ is an affine $\Sigma$-term in the variables $y_i$ ($i\in \{i_1,\ldots,i_k\}$) and $x_j$ ($j\in\{1,\ldots,n\}\setminus \{i_1,\ldots,i_k, j_1,\ldots,j_m\}$).
\end{defn}

\begin{rem}
Weighted De Simone rules \eqref{eq:de-simone-rule-weighted} could be  more suggestively written as 
\begin{align*}
\frac{x_{i_1}\xto{a_1/r_1} y_{i_1} \;\;\cdots\;\; x_{i_k}\xto{a_k/r_k} y_{i_k} \quad x_{j_1} \xto{r_{1}'} * \;\;\cdots\;\; x_{j_m} \xto{r_m'} *}
{\f(x_1,\ldots,x_n)\xto{a/r\cdot r_1\cdot \cdots \cdot r_n\cdot r_1'\cdot \cdots \cdot r_m'} t}
\end{align*}
with variables $r_i,r_j'$ representing weights; similarly for \eqref{eq:de-simone-rule-weighted-2}.  This indicates how the transition weights of the premises contribute to the transition weight in the conclusion. More formally:
\end{rem}

\begin{defn}[Operational Model]
Given a set $\R$ of weighted De Simone rules, the \emph{operational model} of $\R$ is the coalgebra
\begin{equation}\label{eq:op-model-weighted} \gamma\colon \mS\to \W(L\times \mS+1) \end{equation}
defined by structural recursion as follows. Given a term $\f(t_1,\ldots,t_n)$ in $\mS$ and a rule $R\in \R$ for $\f$, we let $\gamma_R(\f(t_1,\ldots,t_n))\in \W(L\times \mS+1)$ denote the weighted sum 
\[ \begin{cases}
\sum_{s_{i_1},\ldots,s_{i_k}\in \mS}\; r\cdot \prod_{l=1}^k \gamma(t_{i_l})(a_{l},s_{i_k}) \cdot \prod_{l=1}^m c(t_{j_l})(*)\cdot (a,t[\sigma]) & \text{if $R$ is given by \eqref{eq:de-simone-rule-weighted}};\\
 \sum_{s_{i_1},\ldots,s_{i_k}\in \mS}\; r\cdot \prod_{l=1}^k \gamma(t_{i_l})(a_{l},s_{i_k}) \cdot \prod_{l=1}^m c(t_{j_l})(*)\cdot * & 
 \text{if $R$ is given by \eqref{eq:de-simone-rule-weighted-2}}.
\end{cases}
\]
Here $\sigma$ denotes the substitution $x_i\mapsto t_i$  and $y_{i_l}\mapsto s_{i_l}$. Then the coalgebra \eqref{eq:op-model-weighted} is given by
$\gamma(\f(t_1,\ldots,t_n)) = \sum_{R\in \R\text{ rule for $\f$}}\; \gamma_R(\f(t_1,\ldots,t_n))$.
\end{defn}

\begin{defn}[Probabilistic De Simone Specification]\label{def:prob-de-simone}
A \emph{probabilistic De Simone specification} is a set of weighted De Simone rules whose model \eqref{eq:op-model-weighted} is a probabilistic LTS. The specification is \emph{almost surely terminating} (\emph{a.s.t.}) if the trace map $\tr$ of \eqref{eq:op-model-weighted} is affine ($\tr(t)\in \D(L^{*})$ for $t\in \mS$, i.e.\ there is no positive probability of generating an infinite trace).
\end{defn}

\begin{example}\label{ex:prob-lts}
Consider a probabilistic process algebra with a terminating process~$\mathsf{nil}$, a prefixing operator $a.(-)$ for each $a\in L$, and a parallel composition operator $\parallel_r$ for each $r\in [0,1]$. The process $p\parallel_r q$ progresses one of the processes $p$ or $q$ with probability $r$ or $1-r$, respectively. This is formalized by the a.s.t.\ probabilistic De Simone specification below:
  \begin{gather*}
    \inference{}{\mathsf{nil}\xto{1}*}
    \qquad
    \inference{}{a.p \xto{a/1} p }
    \qquad
    \inference{p\xto{a} p'}{p\parallel_r q \xto{a/r} p' \parallel_r q} \qquad 
    \inference{p\to *}{p\parallel_r q \xto{r} *} \\[1ex]
    \inference{q\xto{a} q'}{p\parallel_r q \xto{a/1-r} p\parallel_r q'} \qquad 
    \inference{q\to *}{p\parallel_r q \xto{1-r} *} 
  \end{gather*}
A simple structural induction shows that the operational model \eqref{eq:op-model-weighted} is indeed an a.s.t.\ probabilistic LTS (see appendix for details).
\end{example}

\begin{rem}
It would be desirable to identify syntactic criteria for a set of weighted De Simone rules to form a probabilistic specification. Such criteria exist for probabilistic GSOS for reactive systems~\cite{bartels02}, but appear to be harder to come by in the generative case. Note that \Cref{prop:tr-ineq}
would not help with \Cref{ex:prob-lts}: the components of the associated De Simone 
law given in \Cref{cons:weighted-de-simone-law} below are not affine, even though the operational model \eqref{eq:op-model-weighted} of the law itself is affine, that is, a probabilistic LTS. 
\end{rem}

\subsection{Categorical Setting}
To capture weighted De Simone rules within our categorical framework, we take
$\C=\Set$, $BX=L\times X + 1$, $T=\W$, and a polynomial functor $\Sigma$.
This data satisfies the \Cref{asm}:
\begin{enumerate}
\item We have the natural isomorphism
$j\colon \W X\times \W Y \cong \W(X+Y)$          
given by
\[j(\sum_{i\in I} r_i\cdot x_i,\sum_{j\in J} s_j\cdot y_j)= \sum_{i\in I} r_i\cdot x_i + \sum_{j\in J} s_j\cdot y_j.\]
\item The distributive law $\delta^{B}$ is given by \Cref{ex:lts-trace-coalg}, and $\delta^{\Sigma}\colon \Sigma\W X\to \W\Sigma X$ by
\[ \f(\phi_1,\ldots,\phi_n) \;\;\mapsto\;\; \sum_{i_1\in I_1}\cdots \sum_{i_n\in I_n} r^1_{i_1}\cdots r^n_{i_n}\cdot \f(x^1_{i_1},\cdots,x^n_{i_n}) 
\;\;\;\text{where}\;\;\; \phi_j = \sum_{i\in I_j} r_i^j\cdot x_i^j.\] 
\item $\W,B,\barB$ form a setting for coalgebraic trace semantics, as explained in \Cref{sec:trace-weighted}.
\end{enumerate}

\begin{construction}\label{cons:weighted-de-simone-law}
Every set $\R$ of weighted De Simone rules induces a pre-De Simone law of $\Sigma$ over $B$, i.e.\ a natural transformation 
\begin{equation}\label{eq:de-simone-law-from-rules-weighted}\rho_X\colon \Sigma(X+L\times X+1)\to \W (L\times \Sigmas X + 1) \qquad (X\in \Set)\end{equation} 
defined as follows. Given $\f(u_1,\ldots,u_n)\in \Sigma(X+L\times X+1)$, let 
$i_1,\ldots,i_k$ be those $i$ with $u_i\in L\times X$, say $u_{i_l}=(a_l,v_{i_l})$, and let $j_1,\ldots,j_m$ be those $j$ with $u_j=*$.
We say that a rule $R\in \R$ \emph{matches} $\f(u_1,\ldots,u_n)$ if it is of the form \eqref{eq:de-simone-rule-weighted} or \eqref{eq:de-simone-rule-weighted-2}, that is, its premises match the above indices and labels. Let $r_R=r$ be the output weight of the rule. Moreover let $a_R=a$ and $t_R=t$ be the output label and output term if the rule is of type \eqref{eq:de-simone-rule-weighted}. We then put 
\[ \rho_X(\f(u_1,\ldots,u_n)) = \sum_{R~\eqref{eq:de-simone-rule-weighted} } r_R\cdot t_R[\sigma]  + \sum_{R~\eqref{eq:de-simone-rule-weighted-2} } r_R\cdot *\]
where the sums range, respectively, over all $R\in \R$ of type \eqref{eq:de-simone-rule-weighted} or \eqref{eq:de-simone-rule-weighted-2} matching $\f(u_1,\ldots,u_n)$, and $\sigma$ is the substitution $x_i\mapsto u_i$ for $i\not\in \{i_1,\ldots,i_k, j_1,\ldots,j_m\}$ and $y_{i_l}\mapsto v_{i_l}$ for $l=1,\ldots,k$.
\end{construction}
The operational model of \eqref{eq:de-simone-law-from-rules-weighted} is precisely the operational model \eqref{eq:op-model-weighted} of the given rules.

\subsection{Compositionality}
As before in the non-deterministic case (cf.\ \Cref{prop:de-simone-rule-law-affine}), to apply our theory of congruence to probabilistic De Simone specifications we need to verify naturality conditions in $\Kl(\W)$, which again boil down to the affineness of output terms required by the rule format:

\begin{proposition}\label{prop:de-simone-rule-law-affine-weighted}
For any probabilistic De Simone specification, \eqref{eq:de-simone-law-from-rules-weighted} is a De Simone law.
\end{proposition}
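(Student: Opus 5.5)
By \Cref{prop:affine-vs-natural}, we must show that for every set $X$ the two legs of the rectangle \eqref{eq:affine}, instantiated with $T=\W$, agree. We only need this after precomposing with $\Sigma B_0 i_X$. So it suffices to take an arbitrary element $\f(u_1,\ldots,u_n)\in\Sigma(\D X+L\times\D X+1)$ and compare the two weighted sums in $\W(L\times\Sigmas X+1)$. We follow the pattern of the proof of \Cref{prop:de-simone-rule-law-affine}. First fix the data: let $i_1,\ldots,i_k$ be the positions with $u_{i_l}=(a_l,\psi_{i_l})$, $\psi_{i_l}\in\D X$; let $j_1,\ldots,j_m$ be the positions with $u_{j}=*$; and let the remaining positions carry distributions $\phi_p\in\D X$. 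Both legs are sums over the rules $R$ matching this premise pattern. Since $\mu$, $\W\rho_X$, $\delta^\Sigma$ and the other maps are all additive (linear in the weights), it suffices to treat a single matching rule $R$ and show its contributions on both sides coincide.

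\textbf{Lower-left leg.} $\rho_{\W X}$ yields $r_R\cdot(a_R,t_R[\sigma])$, where $\sigma$ substitutes the distributions $\psi_{i_l}$ and $\phi_p$ for the variables, or $r_R\cdot *$ for a rule of type \eqref{eq:de-simone-rule-weighted-2}. Applying $B\delta^{\Sigmas}$ and then $\delta^B$, followed by $\mu$, gives the following. Since $t_R$ is affine, each distribution occurs at most once in $t_R[\sigma]$. Hence $\delta^{\Sigmas}$, which is obtained inductively from the product-measure law $\delta^\Sigma$, sends it to the product-weighted sum $\sum \prod_l\psi_{i_l}(v_l)\prod_p\phi_p(w_p)\cdot t_R[v,w]$ over all choices of $v_l,w_p\in X$. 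Here one needs a small lemma, proved by induction on affine terms, that $\delta^{\Sigmas}$ acts on an affine term as the product measure of the distributions at its leaves. Crucially, variables $x_p$ not occurring in $t_R$ simply do not appear. So the contribution is $r_R$ times the product of the marginals of the variables in $t_R$. For type \eqref{eq:de-simone-rule-weighted-2} it is just $r_R\cdot *$.

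\textbf{Upper-right leg.} $\Sigma\delta^{B_0}$ followed by $\delta^\Sigma$ produces the product measure $\sum\prod_l\psi_{i_l}(v_l)\prod_p\phi_p(w_p)\cdot\f(\ldots)$ over all choices. The $*$-positions contribute weight $1$ (via $*\mapsto 1\cdot *$). Every term in the support has the same premise pattern, so $R$ matches each of them. Applying $\W\rho_X$ and $\mu$ gives $\sum\prod\psi\prod\phi\cdot r_R\cdot(a_R,t_R[v,w])$. Now we marginalize: for the variables $x_p$ not occurring in $t_R$, and (for type \eqref{eq:de-simone-rule-weighted-2}) for all variables, summing out uses $\sum_{w}\phi_p(w)=1$. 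This is exactly where membership in $\D X$, i.e.\ affineness of $i_X$, is used, just as \Cref{rem:affine}(2) needed non-emptiness. The result agrees with the lower-left contribution.

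\textbf{Main obstacle.} The step I expect to be the main obstacle is stating cleanly the lemma about $\delta^{\Sigmas}$ on affine terms and carrying out the marginalization bookkeeping. A non-affine term would produce $\sum\psi(v)\psi(v')$ versus $\sum\psi(v)$, and this mismatch is what the lemma must rule out. I would prove the lemma by structural induction on $t$ using the recursive description of $\delta^{\Sigmas}$ from \Cref{prop:initial-alg-lift}, together with Fubini-style rearrangement of sums in $[0,\infty]$, which is valid since all weights are nonnegative. I note that the hypothesis that the specification is probabilistic does not appear needed beyond this. Naturality only uses that the inputs are distributions, so the argument should in fact work for arbitrary weighted De Simone rules.
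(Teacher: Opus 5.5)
Your proposal is correct and takes essentially the same route as the paper: reduce via \Cref{prop:affine-vs-natural} to comparing the two legs elementwise on $\Sigma(\D X+L\times\D X+1)$, use affineness of $t_R$ to compute $\delta^{\Sigmas}$ as the product of the substituted distributions, and use that the input distributions sum to $1$ to reconcile the legs (the paper uses this for the $\R_2$ part, and only implicitly for variables that do not occur in $t_R$). Your explicit marginalization is cleaner than the paper's, but it must also cover any $y_{i_l}$ absent from $t_R$, which goes through identically since $\psi_{i_l}\in\D X$; you are also right that the hypothesis that the specification is probabilistic is never used.
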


\noindent Our compositionality result for probabilistic trace semantics now readily emerges. Given a probabilistic De Simone specification, the operational model \eqref{eq:op-model-weighted} induces the trace map 
$\tr\colon \mS\to \W(L^{*})$.
\emph{Trace equivalence} is, as usual, the kernel relation of this map; thus two terms are trace equivalent if they generate any given completed trace with the same probability. We then obtain the following probabilistic extension of \Cref{thm:cong-de-simone-rules}:
\begin{theorem}[Compositionality of Probabilistic De Simone Rules]\label{thm:cong-de-simone-rules-weighted}
For any a.s.t.\ probabilistic De Simone specification, trace equivalence on the operational model is a congruence.
\end{theorem}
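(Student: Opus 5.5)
The plan is to follow the proof of \Cref{thm:cong-de-simone-rules} verbatim, with the non-deterministic ingredients replaced by their weighted counterparts. Fix an a.s.t.\ probabilistic De Simone specification $\R$. First, by \Cref{prop:de-simone-rule-law-affine-weighted}, the pre-De Simone law $\rho$ of \Cref{cons:weighted-de-simone-law} is a (proper) De Simone law, i.e.\ natural w.r.t.\ $i_X\colon \D X\kleislito X$ for each set $X$. Second, the operational model of $\rho$ coincides with the operational model \eqref{eq:op-model-weighted} of $\R$, as noted after \Cref{cons:weighted-de-simone-law}. Hence the coalgebraic trace morphism $\tr\colon \mS\kleislito \mu B = L^{*}$ of the law is the weighted trace map \eqref{eq:trace-weighted} of \eqref{eq:op-model-weighted}; this uses the identification from \Cref{sec:trace-weighted} that \eqref{eq:trace-weighted} is the unique $\barB$-coalgebra morphism into $(L^{*},J\beta^{-1})$.

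Third, I verify the remaining hypothesis of \Cref{thm:cong-trace-positive}, namely that $\tr$ is affine. By \Cref{lem:positive-criterion} and the description of the affine part $\W_{+}=\D$, this says exactly that $\tr(t)\in\D(L^{*})$ for every $t\in\mS$, which is literally the definition of almost sure termination in \Cref{def:prob-de-simone}. So this is immediate from the assumption; no use of \Cref{prop:tr-ineq} is needed, which is fortunate since the law's components need not be affine.

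Applying \Cref{thm:cong-trace-positive} then gives that $\tr\colon\mS\to\W(L^{*})$ is a morphic $\Sigma$-congruence. Since $\Sigma$ is polynomial, \Cref{rem:congruence-relation} turns this into the statement that the kernel of $\tr$ is a congruence relation on $\mS$. That kernel is trace equivalence: $t\equiv s$ iff $\tr(t)=\tr(s)$, i.e.\ iff every completed trace is generated with the same probability. The only genuinely non-formal step is \Cref{prop:de-simone-rule-law-affine-weighted}, which we assume. The main point to check is therefore the identification in the second step, i.e.\ that the recursively defined $\gamma$ of the law in $\Kl(\W)$ agrees with \eqref{eq:op-model-weighted}. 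I would confirm this by unfolding the defining square of the operational model using $j$, $\delta^{\Sigma}$ and $\mu$, in the manner of \Cref{lem:op-model-coincidence}.
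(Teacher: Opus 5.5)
Your proposal is correct and matches the paper's proof: affineness of $\tr$ is literally the a.s.t.\ assumption of \Cref{def:prob-de-simone}, \Cref{prop:de-simone-rule-law-affine-weighted} makes the law a De Simone law, \Cref{thm:cong-trace-positive} then gives a morphic congruence, and \Cref{rem:congruence-relation} turns it into a congruence relation. The identification of operational models that you single out as the step to check is simply asserted in the paper right after \Cref{cons:weighted-de-simone-law}. Your plan to confirm it by unfolding the square as in \Cref{lem:op-model-coincidence} is the natural way to do so.
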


\begin{proof}
By \Cref{def:prob-de-simone}, the trace map $\tr$ is affine. By \Cref{prop:de-simone-rule-law-affine-weighted} and \Cref{thm:cong-trace-positive}, $\tr$ is a morphic congruence, whence trace equivalence is a congruence relation by \Cref{rem:congruence-relation}.
\end{proof}
A.s.t.~probabilistic De Simone specifications thus provide a congruence format for probabilistic trace semantics. To our knowledge, no such format has appeared before in the literature.

\begin{example}[Non-a.s.t.\ Specifications]\label{exa:non-ast}
Note the relevance of the a.s.t.\ assumption in \Cref{thm:cong-de-simone-rules-weighted}.
The easiest way to fail it is by involving a rule like \raisebox{1.75ex}{$\frac{}{c\xto{a/1} c}$}.
The probability of termination for the constant $c$ is $0$, and therefore the trace map is not affine. The problem 
here is that the probability of exiting the loop $c\to c$ is $0$. This can potentially be remedied 
by imposing a suitable guardedness discipline, ensuring productivity of loops, but the following example, adapted 
from~\cite[Example 20]{LevyGoncharov19}, shows that even then a.s.t.\ can fail for 
probabilistic De Simone specifications, at least for infinite signatures. Take $\Sigma X = \{c_n\mid n\in\Nat\}$, and
\begin{align*}
\inference{}{c_n\xto{a/(2^n+2)^{-1}} *}~~(n\in\Nat)\qquad \inference{}{c_n\xto{a/(1-(2^n+2)^{-1})} c_{n+1}}~~(n\in\Nat)
\end{align*}
It follows that the total weight $\tr(c_0)$ assigns to traces is $1/2$, hence 
the system is not a.s.t.
\end{example}

\takeout{
\section{Stateful De Simone Rules}
Plan: 
\begin{itemize}
\item Consider stateful process algebras (without non-affine rules like for while).
  \item One-sorted setting with functor $BX = (\Pow(\store\times (X+1)))^\store$ involving the monad $TX=(\Pow(\store\times X))^\store$. Kleisli semantics is cost semantics. (How does the corresponding stateful rule format relate to Abou-Saleh's stuff?)
\item Two-sorted reader/writer setting as in ICFP paper. Kleisli semantics should be trace semantics, but unlike in the ICFP/stateful SOS paper the general congruence result should also apply to non-deterministic languages.
\end{itemize}

OLD STUFF:
 As a more intricate showcase of the categorical machinery, we study trace semantics for imperative programming languages with effects, such as non-deterministic or probabilistic branching. 
 
 \subsection{The Language \while}
 Let us first consider a non-deterministic language. We fix a countably infinite set  $\mathcal{A}$ of \emph{(program)
    variables}, and a set $\expr$ of arithmetic expressions that are formed using
  standard operations such as $+,-,*$, constants $n\in \bZ$, and
  variables. A \emph{variable store} (or \emph{state}) is a map
  $s\colon \mathcal{A}\to\bZ$ assigning integer values to
  variables; we write $s[x\asn  n]$ for the store that maps $x$ to $n$ and otherwise equals $s$. We denote by $\store$ the set of
  states, and by ${\oname{eev}\c \expr\times \store \to \bZ}$ the \emph{expression evaluation}
  map; for example, \[\oname{eev}(x*(y+3)-y,s) = s(x)*(s(y)+3)-s(y).\]

The set $\Progs$ of program terms of \while is given by the grammar
\[    \Progs\owns p, q \Coloneqq\;\mathsf{skip}\mid x\asn e \mid \mathsf{while}\;e\;p %
      \mid p \seqcomp q \mid p+q
\]
where $x\in \mathcal{A}$ and $e\in \expr$. These operators represent, respectively, an immediately terminating program, a variable assignment, a while loop, a sequential composition, and a non-deterministic choice. The (small-step) operational semantics of \while are given by the inductive rules in \Cref{fig:rules-while}. The rules specify transitions of the form
\[ p,s\to p',s'  \qqand p,s\downarrow s'\qquad (p,p'\in\Progs,\, s,s'\in \S)\]
stating that when the program $p$ is executed on store $s$, it transforms $s$ into $s'$ and then behaves like~$p'$ (first case) or terminates (second case).
\begin{figure*}[t]
\centering
  \begin{gather*}
    \inference{p,s\to p',s'}{p \seqcomp q,s
      \to p' \seqcomp q,s'}
    \qquad
    \inference{p,s\downarrow s'}{p \seqcomp q, s
      \to q,s'}
    \qquad
    \inference{}{\mathsf{skip},s \downarrow s} \qquad 
    \inference{}{p+q,s \downarrow p,s} \qquad  \inference{}{p+q,s \downarrow q,s}
    \\[1ex]
    \inference{\oname{eev}(e,s) =
      n}{x \asn  e,s \downarrow s[x \asn  n]}%
    \qquad
    \inference{\oname{eev}(e,s) = 0}{\mathsf{while}~e~p,s \downarrow s}
    \qquad
    \inference{\oname{eev}(e,s) \not = 0}
    {\mathsf{while}~e~p,s \to
      p\seqcomp \mathsf{while}~e~p,s
    }
  \end{gather*}
  \caption{Small-step operational semantics of \while.}
  \label{fig:rules-while}
\end{figure*}
The rules determine a coalgebra
\begin{equation}\label{eq:while-opmodel} \gamma\colon \Progs \to (\Pow(\Progs\times \store+\store))^\store \end{equation}
where the set $\gamma(p)(s)$ consists of all $(p',s')\in \Progs\times \store$ with $p,s\to p',s'$ and all $s'\in \store$ with $p,s\downarrow s'$. 

We aim to study both partial and completed traces of states of \while-programs. A \emph{partial trace} of a program $p\in \Progs$ on input state $s\in \store$ is a finite sequence $s_1\cdots s_n\in S^{*}$ of states such that
\[ \exists p_1,\ldots,p_n\in \Progs.\, p,s\to p_1,s_1 \to \cdots \to p_n,s_n.\]
In particular, the empty sequence $\epsilon\in \store^{*}$ is always a partial trace. A \emph{completed trace} of $p$ on input state $s$ is a non-empty finite sequence  $s_1\cdots s_n s'\in \store^{+}$ of states such that
\[ \exists p_1,\ldots,p_n\in \Progs.\, p,s\to p_1,s_1\to\cdots \to p_n,s_n \downarrow s'.\]   
The \emph{trace map} of \while is given by
\begin{equation}\label{eq:while-trace} \tr\colon \Progs\to (\Pow(\store^{*}+\store^{+}))^\store \end{equation}
where $\tr(p)(s)$ is the set of all partial traces (in $S^{*}$) and completed traces (in $S^{+}$) at $p$ on input $s$. \emph{Trace equivalence} is the kernel of the trace map, i.e.\ the equivalence relation $\equiv$ on $\Progs$ given by
\[ p\equiv q \iff \tr(p)=\tr(q).\]
Thus two programs are trace equivalent if, for every given input state $s$, they generate the same partial traces and the same completed traces. We aim to prove the following result using our theory of compositional coalgebraic trace semantics:

\begin{theorem}[Compositionality of \while]
Trace equivalence is a congruence on the algebra $\Progs$ of \while-programs.
\end{theorem}
To apply our abstract framework, we need a coalgebraic understanding of trace semantics for \while, and we need to capture the operational rules of \while by a suitable De Simone law.

\subsection{Coalgebraic Trace Semantics for \while}\label{sec:trace-semantics-while}
Trace semantics for \while is captured by the following setting $T$, $B$, $\barB$ for coalgebraic trace semantics:
\begin{enumerate}
\item The monad $T=\Pow^\store$ on $\Set$ is the composite of the power set monad $\Pow$ with the reader monad $(-)^\store$. (This composition is known as the \emph{reader monad transformer}.) Note that \[\Kl(\Pow^\store)\cong \Kl(\Pow)^\store.\] The Kleisli category is $\DCPOb$-enriched by taking the usual enrichment of $\Kl(\Pow)$ in each component.
\item The behaviour functor $B\colon \Set\to \Set$ is given by
\[ B X = X\times \store + \store + 1.\]
The functor $B$ has the initial algebra 
 \[ \mu B = \store^{*} + \store^{+} \] 
 with structure
\[ 
\beta\colon (\store^{*}+\store^{+})\times \store + \store + 1 \xto{\cong} \store^{*}+\store^{+},\quad
 \begin{array}{lll}
(\inl(w),s) & \mapsto &\inl(sw) \\
(\inr(w),s) & \mapsto & \inr(sw) \\
s & \mapsto & \inr(s) \\ 
* & \mapsto &\inl(\epsilon).
\end{array}         
\]

\item The Kleisli extension $\barB\colon \Kl(\Pow^\store)\to \Kl(\Pow^\store)$ is determined by the distributive law
$\delta^{B}\colon B\Pow^\store\to \Pow^\store B$ whose components 
\[
\delta^{B}\colon (\Pow X)^\store\times \store + \store + 1 \to (\Pow(X\times\store + \store + 1))^\store  
\]
are given by 
\begin{align*}
& (f,s) && \mapsto & \lambda t.\, \{ (u,s) \mid u\in f(t) \} \\
& s && \mapsto & \lambda t.\ \{s\} \\
& * && \mapsto & \{ * \}.  
\end{align*}
\end{enumerate}
Then the map \eqref{eq:while-opmodel} form a $TB$-coalgebra
\begin{equation}\label{eq:while-opmodel-coalg}
\gamma\colon \Progs \to (\Pow B\Progs)^\store = TB\Progs,
\end{equation} 
and its associated coalgebraic trace morphism
\[ \tr\colon \Progs\to (\Pow(\store^{*}+\store^{+}))^\store=T(\mu B)  \]
is the map \eqref{eq:whiletwo-trace}. Indeed, one readily verifies that this map forms a $\barB$-coalgebra morphism from the coalgebra $(\Progs,\gamma)$ of programs to the final coalgebra $(S^{*}+S^{+}, J\beta^{-1})$ of traces.

\subsection{De Simone Law for \whiletwo}

To capture \whiletwo in the framework of De Simone laws, we instantiate the latter to be monad $T$ and behaviour functor $B$ as in \Cref{sec:trace-semantics-whiletwo}, and the polynomial functor $\Sigma\colon \Set^2\to \Set^2$ corresponding to the two-sorted signature of $\whiletwo$, with sorts $\Trs$ and $\Cos$:
\begin{align*}
  \Sigma_{\Trs} X
  &= \underbrace{\term}_{\mathsf{skip}}
  +\, \underbrace{\mathcal{A} \times \expr}_{\asn }
  \,+\, \underbrace{\expr \times X_{\Trs}}_{\mathsf{while}}
  \,+\, \underbrace{X_{\Trs} \times X_{\Trs}}_{-\seqcomp-} \,+\, \underbrace{X_{\Trs} \times X_{\Trs}}_{-+-} \\
  \Sigma_{\Cos} X
 & = \underbrace{X_{\Trs}\times\store}_{[-]_s}
  \,+\, 
  \underbrace{X_{\Cos} \times X_{\Trs}}_{-\seqcomp-} \,+\, \underbrace{X_{\Cos} \times X_{\Trs}}_{-+-}.
\end{align*}
Thus the initial algebra $\mS$ is the two-sorted algebra $(\Tr,\Co)$ of \whiletwo-programs. On top of the above distributive law $\delta^{B}$), we take the distributive law $\delta^{\Sigma}\colon \Sigma T\to T\Sigma$ which is given by the maps
 \[(\delta^{\Sigma})^\Trs\colon \Sigma_\Trs TX \to \Pow \Sigma_\Trs X \qqand  (\delta^{\Sigma})^\Cos\colon \Sigma_\Cos TX \to \Pow \Sigma_\Cos X\]
 defined for $\f$ in $\Sigma_\Trs$ or $\Sigma_\Cos$ (resp.) by
 \[ \f(U_1,\ldots,U_n) \mapsto \{ \f(u_1,\ldots,u_n) \mid u_i\in U_i\text{ for all $i$} \}.\]

The operational rules of \whiletwo (\Cref{fig:comp-rules}) can be presented as the De Simone law of $\Sigma$ over $B$, that is, a natural transformation
\[ \rho\colon \Sigma(X+BX)\to TB\Sigmas X \qquad (X\in \Set^2),  \]
that encodes the rules into functions. Formally, in sort $\Trs$, the map
\[ \rho_X^\Trs\colon \Sigma_\Trs (X+BX) \to \Pow((\Sigmas_\Cos X \times \store+\store)^\store)  \]
is given as follows for $p,q\in X_\Trs$, $f\in (X_\Cos\times \store+\store)^\store$ (we identify elements of the set $(\Sigmas_\Cos X \times \store+\store)^\store$ with singleton subsets):
\begin{align*}
&f \seqcomp q &&\mapsto&& \lambda s.\, \begin{cases}
(c\seqcomp q,s') & \text{if $f(s)=(c,s')$} \\
([q]_{s'},s') & \text{if $f(s)=s'$}
\end{cases}\\
&p+q && \mapsto&& \lambda s.\, ([p]_s+[q]_s,s) \\
&\mathsf{skip} && \mapsto&& \lambda s.\, s \\
&x\asn e && \mapsto&& \lambda s.\, s[x\asn n] \quad \text{where $\mathsf{eev}(e,s)=n$}\\
&\mathsf{while}~e~p && \mapsto && \lambda s. \begin{cases}
s & \text{if $\mathsf{eev}(e,s)=0$} \\
\run{s}{p\seqcomp \mathsf{while}~e~p} & \text{if $\mathsf{eev}(e,s)\neq 0$}
\end{cases} \\
&\_ &&\mapsto && \emptyset
\end{align*}
The last item means that every input term that is not matched by any other case (e.g.~$p\seqcomp q$ or $p\seqcomp f$) is mapped to the empty set.

Similarly, in sort $\Cos$, the map 
\[ \rho_X^\Cos\colon \Sigma_\Cos (X+BX) \to \Pow(\Sigmas_\Cos X \times \store+\store) \]
is given for $f\in (X_\Cos\times \store+\store)^\store$, $q\in X_\Trs$, $U,V\in \Pow(X_\Cos\times \store + \store)$ by
\begin{align*}
&\run{s}{f}  && \mapsto&& \{ f(s) \} \\
& U \seqcomp q && \mapsto &&  \{ (d\seqcomp q,s) \mid (d,s)\in U \} \cup \{ \run{s'}{q}\mid s'\in U\}  \\
& U + d && \mapsto && U  \\
& c + V && \mapsto && V  \\
& \_ && \mapsto && \emptyset
\end{align*}
Since the above De Simone law $\rho$ matches precisely the operational rules of \whiletwo, its operational model is the $TB$-coalgebra \eqref{eq:whiletwo-opmodel} obtained by running programs according to those rules. 

In order to apply our theory of congruence, we need to establish an affinity property for $\rho$. This is slightly more subtle than in the case of standard De Simone rules (\Cref{prop:de-simone-rule-law-affine}): while all the writer rules in \Cref{fig:comp-rules} only produce affine terms, the second rule for the reader $\mathsf{while}~e~p$ in \Cref{fig:rules-while} produces the non-affine term $\run{s}{p \seqcomp \mathsf{while}~e~p}$. This is remedied by the fact that readers are deterministic, yielding the following categorical affinity statement:

\begin{proposition}\label{prop:rho-whiletwo-affine}
The De Simone Law $\rho$ for $\whiletwo$ is affine w.r.t.\ all morphisms $p\colon P\kleislito X$ in $\Kl(T)$ that are deterministic in the reader component and right-total in the writer component.
\end{proposition}
The conditions on $p$ mean that $p^\Trs(z)\in \Pow_1 X_\Trs$ for all $z\in P_\Trs$ and $p^\Cos(z)\in \Pow_{+}X_\Cos$ for all $z\in P_\Cos$. Here $\Pow_1 Y$ and $\Pow_{+} Y$ denote the sets of singleton and non-empty subsets of a set $Y$, respectively.

\begin{proof}
We consider the reader and writer components of $\rho$ separately. For the former, $\f\colon \Trs\times\cdots\times \Trs\to \Trs$ be a reader operation in $\Sigma$. (Note that all operations of output type $\Trs$ also have inputs of type~$\Trs$.) The affineness condition then amounts to the statement that the two legs of the diagram below send every element $\f(U_1,\ldots,U_n)$ of $\Sigma_\Trs(\Pow^2 X+ B\Pow^2 X)$, where $U_i\in \Pow_1 X_\Trs + (\Pow_{+} X_\Cos \times\store+\store)^\store$, to the same element of $\Pow ((\Sigmas_\Cos\times \store+\store)^\store)$.   
\[
\begin{tikzcd}
\Sigma_\Trs(\Pow^2 X+ B\Pow^2 X) \ar{d}[swap]{\Sigma_\Trs\delta^{B_0}} \ar{r}{\rho_{\Pow^2 X}^\Trs} & \Pow( (\Sigmas_\Cos \Pow^2 X \times \store+\store)^\store) \ar{d}{\Pow B_\Trs\delta^{\Sigmas}}  \\
\Sigma_\Trs \Pow^2 (X+BX) \ar{d}[swap]{\delta^\Trs_{\Sigma,\Pow^2 }} & \Pow ((\Pow\Sigmas_\Cos X \times \store+\store)^\store) \ar{d}{\Pow \delta^\Trs_{B,\Pow^2 }} \\
\Pow \Sigma_\Trs (X+BX) \ar{r}{\Pow \rho_X^\Trs} & \Pow \Pow ((\Sigmas_\Cos\times \store+\store)^\store) \ar{d}{\mu} \\
& \Pow ((\Sigmas_\Cos\times \store+\store)^\store) 
\end{tikzcd}
\]
As indicated above, the critical case is that of a term $\mathsf{while}~e~U_1$. We consider two cases:
\begin{enumerate}
\item If $U_1\in \Pow_1 X_\Trs$, say $U_1=\{p\}$, then the term is sent by both legs to the singleton $\{F\}$ where
\[ F(s) = \begin{cases}
(p,s) & \text{if $\mathsf{eev}(e,s)\neq 0$}\\
s & \text{if $\mathsf{eev}(e,s)=0$}
\end{cases} \]
\item If $U_1\in (\Pow_{+} X_\Cos \times\store+\store)^\store$, then the term is sent by both legs to~$\emptyset$, corresponding to the ``\_'' case in the definition of $\rho^\Trs$.
\end{enumerate}
For all other reader terms, the output terms of the corresponding rules are affine and thus we can the reasoning exactly as in \Cref{prop:de-simone-rule-law-affine}. The same reasoning applies to writer terms.
\end{proof}

}

\section{Conclusion and Future Work}
We have developed a theory of compositional trace semantics for process specifications, emerging at the generality of \emph{abstract GSOS}~\cite{DBLP:conf/lics/TuriP97} along with \emph{coalgebraic trace semantics}~\cite{hjs07}. One important message of our paper is that `vanilla' abstract GSOS is more powerful than it seems: unlike existing categorical approaches to the subject (see Related Work), our results are based on a slightly refined version of Turi and Plotkin's original results, applied to our novel notion of De Simone laws over Kleisli categories. We have illustrated the usefulness of this abstract framework by deriving from it a novel De Simone-type format for probabilistic LTS. Various subtleties of the format (e.g.~regarding almost sure termination) would be hard to uncover from scratch, but are naturally explained and motivated by the categorical setup.

Within the probabilistic setting, our compositionality result is currently limited to almost-surely terminating systems. 
In future work, we are planning to analyse this issue through the lens of (coalgebraic) \emph{infinite trace semantics}~\cite{DBLP:journals/entcs/Jacobs04a}, which remedies ``probability leakage'', caused by non-termination.
Kock's synthetic approach to measurability and probability~\cite{Kock2012} suggests that our framework may be applicable to more sophisticated probabilistic settings, via suitable additive monads. 
An orthogonal direction is to reformulate our framework internally to a \emph{Markov category}~\cite{ChoJacobs2019,Fritz2020}, instead of a Kleisli category of a monad.

We also aim to apply our framework to computational effects beyond probabilities and cover other features, such as states or higher-order behaviour.
These have been addressed recently from the perspective of abstract GSOS~\cite{DBLP:conf/fscd/0001MS0U22,gmstu25,gmstu23} under (weak) bisimilarity, but a corresponding abstract treatment of trace semantics is still missing. Another direction are congruence formats for notions of \emph{trace distance} in lieu of \emph{trace equivalence}. This will require departing from final coalgebra semantics in favour of a fibrational approach~\cite{DBLP:journals/lmcs/BaldanBKK18,DBLP:conf/concur/Bonchi0P18, 10.1145/3776697}.

\clearpage
\bibliographystyle{plainurl}%

\bibliography{mainBiblio}

@PREAMBLE{ {\providecommand{\noopsort}[1]{}} }

@string{acm="ACM"}

@string{lncs="LNCS"}

@string{entcs="ENTCS"}

@string{elsevier="Elsevier"}

@string{springer="Springer"}

@string{lipics="LIPIcs"}

@string{dagstuhl="Schloss Dagstuhl -- Leibniz-Zentrum f{\"u}r Informatik"}

@article{DBLP:journals/lmcs/BaldanBKK18,
  author       = {Paolo Baldan and
                  Filippo Bonchi and
                  Henning Kerstan and
                  Barbara K{\"{o}}nig},
  title        = {Coalgebraic Behavioral Metrics},
  journal      = {Log. Methods Comput. Sci.},
  volume       = {14},
  number       = {3},
  year         = {2018},
}

@inproceedings{DBLP:conf/concur/Bonchi0P18,
  author       = {Filippo Bonchi and
                  Barbara K{\"{o}}nig and
                  Daniela Petrisan},
  editor       = {Sven Schewe and
                  Lijun Zhang},
  title        = {Up-To Techniques for Behavioural Metrics via Fibrations},
  booktitle    = {29th International Conference on Concurrency Theory, {CONCUR} 2018,
                  Beijing, China, September 4-7, 2018},
  series       = {LIPIcs},
  pages        = {17:1--17:17},
  publisher    = {Schloss Dagstuhl - Leibniz-Zentrum f{\"{u}}r Informatik},
  year         = {2018},
}

@article{Klin09,
  author       = {Bartek Klin},
  title        = {Bialgebraic methods and modal logic in structural operational semantics},
  journal      = {Inf. Comput.},
  volume       = {207},
  number       = {2},
  pages        = {237--257},
  year         = {2009},
}

@inproceedings{Klin10,
  author       = {Bartek Klin},
  OPTeditor       = {Bart Jacobs and
                  Milad Niqui and
                  Jan J. M. M. Rutten and
                  Alexandra Silva},
  title        = {Structural Operational Semantics and Modal Logic, Revisited},
  booktitle    = {Proceedings of the Tenth Workshop on Coalgebraic Methods in Computer
                  Science, CMCS@ETAPS 2010, Paphos, Cyprus, March 26-28, 2010},
  series       = {Electronic Notes in Theoretical Computer Science},
  number       = {2},
  pages        = {155--175},
  publisher    = {Elsevier},
  year         = {2010},
}

@InProceedings{glabbeek90,
author="van Glabbeek, R. J.",
OPTeditor="Baeten, J. C. M.
and Klop, J. W.",
title="The linear time - branching time spectrum",
booktitle="CONCUR '90 Theories of Concurrency: Unification and Extension",
year="1990",
publisher=springer,
pages="278--297",
}

@incollection{cj13,
  author       = {Dion Coumans and
                  Bart Jacobs},
  OPTeditor       = {Chris Heunen and
                  Mehrnoosh Sadrzadeh and
                  Edward Grefenstette},
  title        = {Scalars, Monads, and Categories},
  booktitle    = {Quantum Physics and Linguistics - {A} Compositional, Diagrammatic
                  Discourse},
  pages        = {184--216},
  publisher    = {Oxford University Press},
  year         = {2013},
  doi          = {10.1093/ACPROF:OSO/9780199646296.003.0007},
}

@ARTICLE{Lindner1979-qe,
  title     = "Affine parts of monads",
  author    = "Lindner, Harald",
  journal   = "Arch. Math.",
  publisher = "Springer Nature",
  volume    =  33,
  number    =  1,
  pages     = "437--443",
  month     =  dec,
  year      =  1979,
  language  = "en"
}

@article{bloom94,
author = {Bloom, Bard},
title = {When is partial trace equivalence adequate?},
year = {1994},
publisher = {Springer-Verlag},
address = {Berlin, Heidelberg},
volume = {6},
number = {3},
doi = {10.1007/BF01215409},
pages = {317–338},
}

@article{DeSimone1985,
  author    = {R. de Simone},
  title     = {Higher-Level Synchronising Devices in Meije-SCCS},
  journal   = {Theoretical Computer Science},
  year      = {1985},
  volume    = {37},
  number    = {3},
  pages     = {245--267}
}

@inproceedings{mulry93,
  author       = {Philip S. Mulry},
  OPTeditor       = {Stephen D. Brookes and
                  Michael G. Main and
                  Austin Melton and
                  Michael W. Mislove and
                  David A. Schmidt},
  title        = {Lifting Theorems for Kleisli Categories},
  booktitle    = {Mathematical Foundations of Programming Semantics, 9th International
                  Conference, New Orleans, LA, USA, April 7-10, 1993, Proceedings},
  series       = {Lecture Notes in Computer Science},
  volume       = {802},
  pages        = {304--319},
  publisher    = {Springer},
  year         = {1993},
  url          = {https://doi.org/10.1007/3-540-58027-1\_15},
  doi          = {10.1007/3-540-58027-1\_15},
  bibsource    = {dblp computer science bibliography, https://dblp.org}
}

@inproceedings{fmu22,
  author       = {Florian Frank and
                  Stefan Milius and
                  Henning Urbat},
  OPTeditor       = {Helle Hvid Hansen and
                  Fabio Zanasi},
  title        = {Coalgebraic Semantics for Nominal Automata},
  booktitle    = {Coalgebraic Methods in Computer Science - 16th {IFIP} {WG} 1.3 International
                  Workshop, {CMCS} 2022, Colocated with {ETAPS} 2022, Munich, Germany,
                  April 2-3, 2022, Proceedings},
  series       = {Lecture Notes in Computer Science},
  volume       = {13225},
  pages        = {45--66},
  publisher    = {Springer},
  year         = {2022},
  url          = {https://doi.org/10.1007/978-3-031-10736-8\_3},
  doi          = {10.1007/978-3-031-10736-8\_3},
  bibsource    = {dblp computer science bibliography, https://dblp.org}
}

@misc{gmstu25,
	title={Bialgebraic Reasoning on Stateful Languages}, 
	author={Sergey Goncharov and Stefan Milius and Lutz Schröder and Stelios Tsampas and Henning Urbat},
	year={2025},
	eprint={2503.10955},
	archivePrefix={arXiv},
	primaryClass={cs.PL},
  note = {To appear in Proc.~ICFP 2025}
}

@InProceedings{amm21,
  author =	{Ad\'{a}mek, Ji\v{r}{\'\i} and Milius, Stefan and Moss, Lawrence S.},
  title =	{{Initial Algebras Without Iteration}},
  booktitle =	{9th Conference on Algebra and Coalgebra in Computer Science (CALCO 2021)},
  pages =	{5:1--5:20},
  series =	{Leibniz International Proceedings in Informatics (LIPIcs)},
  year =	{2021},
  volume =	{211},
  editor =	{Gadducci, Fabio and Silva, Alexandra},
  publisher =	{Schloss Dagstuhl -- Leibniz-Zentrum f{\"u}r Informatik},
}

@inproceedings{gmstu23,
author = {Goncharov, Sergey and Milius, Stefan and Schr\"{o}der, Lutz and Tsampas, Stelios and Urbat, Henning},
title = {Towards a Higher-Order Mathematical Operational Semantics},
booktitle = {50th ACM SIGPLAN Symposium on Principles of Programming Languages (POPL 2023)},
year = {2023},
issue_date = {January 2023},
publisher = acm,
OPTaddress = {New York, NY, USA},
volume = {7},
OPTnumber = {POPL},
doi = {10.1145/3571215},
series = {Proc. ACM Program. Lang.},
OPTmonth = {Jan},
articleno = {22},
numpages = {27}
}

@InProceedings{   LevyGoncharov19,
  author        = {Paul Blain Levy and Sergey Goncharov},
  title         = {Coinductive Resumption Monads: Guarded Iterative and
                  Guarded Elgot},
  booktitle     = {Proc.~8rd international conference on Algebra and
                  coalgebra in computer science (CALCO 2019)},
  year          = {2019},
  nopages       = {34--48},
  series        = {LIPIcs},
  novolume      = {35},
  publisher     = {Schloss Dagstuhl - Leibniz-Zentrum für Informatik}
}

@Article{Barr70,
  author = 	 {Michael Barr},
  title = 	 {Coequalizers and free triples},
  journal = 	 {Math.~Z.},
  year = 	 {1970},
  OPTkey = 	 {},
  volume = 	 {116},
  OPTnumber = 	 {},
  pages = 	 {307-322},
  OPTmonth = 	 {},
  OPTnote = 	 {},
  OPTannote = 	 {}
}

@Article{Lambek68,
  author = 	 {Lambek, Joachim},
  title = 	 {A Fixpoint Theorem for Complete Categories},
  journal = 	 {Math.~Z.},
  year = 	 {1968},
  OPTkey = 	 {},
  volume = 	 {103},
  OPTnumber = 	 {},
  pages = 	 {151--161},
  OPTmonth = 	 {},
  OPTnote = 	 {},
  OPTannote = 	 {}
}

@article{DBLP:journals/tcs/Rutten00,
  author    = {Jan J. M. M. Rutten},
  title     = {Universal coalgebra: a theory of systems},
  journal   = {Theor. Comput. Sci.},
  volume    = {249},
  number    = {1},
  pages     = {3--80},
  year      = {2000},
  url       = {https://doi.org/10.1016/S0304-3975(00)00056-6},
  doi       = {10.1016/S0304-3975(00)00056-6},
  bibsource = {dblp computer science bibliography, https://dblp.org}
}

@article{DBLP:journals/jacm/BloomIM95,
  author    = {Bard Bloom and
               Sorin Istrail and
               Albert R. Meyer},
  title     = {Bisimulation Can't be Traced},
  journal   = {J. {ACM}},
  volume    = {42},
  number    = {1},
  pages     = {232--268},
  year      = {1995},
  url       = {https://doi.org/10.1145/200836.200876},
  doi       = {10.1145/200836.200876},
  bibsource = {dblp computer science bibliography, https://dblp.org}
}

@article{DBLP:journals/tcs/Klin11,
  author    = {Bartek Klin},
  title     = {Bialgebras for structural operational semantics: An introduction},
  journal   = {Theor. Comput. Sci.},
  volume    = {412},
  number    = {38},
  pages     = {5043--5069},
  year      = {2011},
  url       = {https://doi.org/10.1016/j.tcs.2011.03.023},
  doi       = {10.1016/j.tcs.2011.03.023},
  bibsource = {dblp computer science bibliography, https://dblp.org}
}

@book{DBLP:books/cu/J2016,
  author    = {Bart Jacobs},
  title     = {{Introduction to Coalgebra: Towards Mathematics of States and Observation}},
  series    = {Cambridge Tracts in Theoretical Computer Science},
  volume    = {59},
  publisher = {Cambridge University Press},
  year      = {2016},
  doi       = {10.1017/CBO9781316823187},
  isbn      = {9781316823187},
  bibsource = {dblp computer science bibliography, https://dblp.org}
}

@article{CASTIGLIONI2024100929,
title = {Back to the format: A survey on SOS for probabilistic processes},
journal = {Journal of Logical and Algebraic Methods in Programming},
volume = {137},
pages = {100929},
year = {2024},
issn = {2352-2208},
doi = {https://doi.org/10.1016/j.jlamp.2023.100929},
author = {Valentina Castiglioni and Ruggero Lanotte and Simone Tini},
}

@article{DBLP:journals/iandc/GrooteV92,
  author    = {Jan Friso Groote and
               Frits W. Vaandrager},
  title     = {Structured Operational Semantics and Bisimulation as a Congruence},
  journal   = {Inf. Comput.},
  volume    = {100},
  number    = {2},
  pages     = {202--260},
  year      = {1992},
  url       = {https://doi.org/10.1016/0890-5401(92)90013-6},
  doi       = {10.1016/0890-5401(92)90013-6},
  bibsource = {dblp computer science bibliography, https://dblp.org}
}

@inproceedings{DBLP:conf/lics/TuriP97,
  author    = {Daniele Turi and
               Gordon D. Plotkin},
  title     = {Towards a Mathematical Operational Semantics},
  booktitle = {12th Annual {IEEE} Symposium on Logic in Computer Science (LICS 1997)},
  pages     = {280--291},
  year      = {1997},
  crossref-ignore  = {DBLP:conf/lics/1997},
  doi       = {10.1109/LICS.1997.614955},
  bibsource = {dblp computer science bibliography, https://dblp.org}
}

@inproceedings{DBLP:conf/ctcs/Turi97,
  author    = {Daniele Turi},
  title     = {Categorical Modelling of Structural Operational Rules: Case Studies},
  booktitle = {Category Theory and Computer Science, 7th International Conference,
               {CTCS} '97, Santa Margherita Ligure, Italy, September 4-6, 1997, Proceedings},
  pages     = {127--146},
  year      = {1997},
  crossref-ignore  = {DBLP:conf/ctcs/1997},
  url       = {https://doi.org/10.1007/BFb0026985},
  doi       = {10.1007/BFb0026985},
  bibsource = {dblp computer science bibliography, https://dblp.org}
}

@article{hjs07,
  author    = {Ichiro Hasuo and
               Bart Jacobs and
               Ana Sokolova},
  title     = {Generic Trace Semantics via Coinduction},
  journal   = {Logical Methods in Computer Science},
  volume    = {3},
  number    = {4},
  year      = {2007},
  url       = {https://doi.org/10.2168/LMCS-3(4:11)2007},
  doi       = {10.2168/LMCS-3(4:11)2007},
  bibsource = {dblp computer science bibliography, https://dblp.org}
}

@inproceedings{DBLP:journals/entcs/Abou-SalehP11,
  author    = {Faris Abou{-}Saleh and
               Dirk Pattinson},
  editor    = {Michael W. Mislove and
               Jo{\"{e}}l Ouaknine},
  title     = {Towards Effects in Mathematical Operational Semantics},
  booktitle = {Mathematical Foundations of Programming
               Semantics, {MFPS} 2011},
  series    = entcs,
  volume    = {276},
  pages     = {81--104},
  publisher = elsevier,
  year      = {2011},
  nourl       = {https://doi.org/10.1016/j.entcs.2011.09.016},
  doi       = {10.1016/j.entcs.2011.09.016},
  bibsource = {dblp computer science bibliography, https://dblp.org}
}

@phdthesis{DBLP:phd/ethos/AbouSaleh14,
  author    = {Faris Abou{-}Saleh},
  title     = {A coalgebraic semantics for imperative programming languages},
  school    = {Imperial College London, {UK}},
  year      = {2014},
  url       = {http://hdl.handle.net/10044/1/13693},
  bibsource = {dblp computer science bibliography, https://dblp.org}
}

@article{DBLP:journals/entcs/Jacobs04a,
  author    = {Bart Jacobs},
  title     = {Trace Semantics for Coalgebras},
  journal   = {Electr. Notes Theor. Comput. Sci.},
  volume    = {106},
  pages     = {167--184},
  year      = {2004},
  url       = {https://doi.org/10.1016/j.entcs.2004.02.031},
  doi       = {10.1016/j.entcs.2004.02.031},
  bibsource = {dblp computer science bibliography, https://dblp.org}
}

@article{DBLP:journals/jlp/BrengosMP15,
  author    = {Tomasz Brengos and
               Marino Miculan and
               Marco Peressotti},
  title     = {Behavioural equivalences for coalgebras with unobservable moves},
  journal   = {J. Log. Algebraic Methods Program.},
  volume    = {84},
  number    = {6},
  pages     = {826--852},
  year      = {2015},
  url       = {https://doi.org/10.1016/j.jlamp.2015.09.002},
  doi       = {10.1016/j.jlamp.2015.09.002},
  bibsource = {dblp computer science bibliography, https://dblp.org}
}

@phdthesis{56f40c248cb44359beb3c28c3263838e,
title = "On generalised coinduction and probabilistic specification formats: Distributive laws in coalgebraic modelling",
author = "Falk Bartels",
year = "2004",
language = "English",
series = "IPA dissertation series",
number = "6",
school = "Vrije Universiteit Amsterdam",
}

@inproceedings{DBLP:conf/mfcs/0001WNDP21,
  author    = {Stelios Tsampas and
               Christian Williams and
               Andreas Nuyts and
               Dominique Devriese and
               Frank Piessens},
  editor    = {Filippo Bonchi and
               Simon J. Puglisi},
  title     = {Abstract Congruence Criteria for Weak Bisimilarity},
  booktitle = {46th International Symposium on Mathematical Foundations of Computer Science, MFCS'21},
  series    = {LIPIcs},
  volume    = {202},
  pages     = {88:1--88:23},
  publisher = {Schloss Dagstuhl - Leibniz-Zentrum f{\"{u}}r Informatik},
  year      = {2021},
  url       = {https://doi.org/10.4230/LIPIcs.MFCS.2021.88},
  doi       = {10.4230/LIPIcs.MFCS.2021.88},
  bibsource = {dblp computer science bibliography, https://dblp.org}
}

@book{mac2013categories,
  title = {Categories for the {{Working Mathematician}}},
  author = {Mac Lane, S.},
  year = {1978},
  series = {Graduate {{Texts}} in {{Mathematics}}},
  edition = {2},
  volume = {5},
  publisher = springer,
  OPTaddress = {{New York}},
  url = {http://link.springer.com/10.1007/978-1-4757-4721-8},
  isbn = {978-0-387-98403-2},
  langid = {english}
}

@book{dkv09,
  editor    = {Droste, Manfred and Kuich, Werner and Vogler, Heiko},
  title     = {Handbook of Weighted Automata},
  publisher = {Springer},
  year      = {2009},
}

@inproceedings{bartels02,
  author       = {Falk Bartels},
  editor       = {Lawrence S. Moss},
  title        = {{GSOS} for Probabilistic Transition Systems},
  booktitle    = {Coalgebraic Methods in Computer Science, {CMCS} 2002, Satellite Event
                  of {ETAPS} 2002, Grenoble, France, April 6-7, 2002},
  series       = {Electronic Notes in Theoretical Computer Science},
  volume       = {65},
  number       = {1},
  pages        = {29--53},
  publisher    = {Elsevier},
  year         = {2002},
  doi          = {10.1016/S1571-0661(04)80358-X},
}

@article{ks13,
title = {Structural operational semantics for stochastic and weighted transition systems},
journal = {Information and Computation},
volume = {227},
pages = {58-83},
year = {2013},
doi = {https://doi.org/10.1016/j.ic.2013.04.001},
}

@inproceedings{DBLP:conf/fossacs/KlinS08,
  author    = {Bartek Klin and
               Vladimiro Sassone},
  editor    = {Roberto M. Amadio},
  title     = {Structural Operational Semantics for Stochastic Process Calculi},
  booktitle = {11th International Conference Foundations of Software Science and Computational Structures, FOSSACS'08},
  series    = lncs,
  volume    = {4962},
  pages     = {428--442},
  publisher = {Springer},
  year      = {2008},
  url       = {https://doi.org/10.1007/978-3-540-78499-9\_30},
  doi       = {10.1007/978-3-540-78499-9\_30},
  bibsource = {dblp computer science bibliography, https://dblp.org}
}

@article{DBLP:journals/tcs/MiculanP16,
  author    = {Marino Miculan and
               Marco Peressotti},
  title     = {Structural operational semantics for non-deterministic processes with
               quantitative aspects},
  journal   = {Theor. Comput. Sci.},
  volume    = {655},
  pages     = {135--154},
  year      = {2016},
  url       = {https://doi.org/10.1016/j.tcs.2016.01.012},
  doi       = {10.1016/j.tcs.2016.01.012},
  bibsource = {dblp computer science bibliography, https://dblp.org}
}

@inproceedings{DBLP:conf/fossacs/Abou-SalehP13,
  author    = {Faris Abou{-}Saleh and
               Dirk Pattinson},
  editor    = {Frank Pfenning},
  title     = {Comodels and Effects in Mathematical Operational Semantics},
  booktitle = {Foundations of Software Science and Computation Structures - 16th
               International Conference, {FOSSACS} 2013, Held as Part of the European
               Joint Conferences on Theory and Practice of Software, {ETAPS} 2013,
               Rome, Italy, March 16-24, 2013. Proceedings},
  series    = {Lecture Notes in Computer Science},
  volume    = {7794},
  pages     = {129--144},
  publisher = {Springer},
  year      = {2013},
  url       = {https://doi.org/10.1007/978-3-642-37075-5\_9},
  doi       = {10.1007/978-3-642-37075-5\_9},
  bibsource = {dblp computer science bibliography, https://dblp.org}
}

@inproceedings{DBLP:conf/fscd/0001MS0U22,
  author    = {Sergey Goncharov and
               Stefan Milius and
               Lutz Schr{\"{o}}der and
               Stelios Tsampas and
               Henning Urbat},
  editor    = {Amy P. Felty},
  title     = {Stateful Structural Operational Semantics},
  booktitle = {7th International Conference on Formal Structures for Computation and Deduction, FSCD'22},
  series    = {LIPIcs},
  volume    = {228},
  pages     = {30:1--30:19},
  publisher = {Schloss Dagstuhl - Leibniz-Zentrum f{\"{u}}r Informatik},
  year      = {2022},
  url       = {https://doi.org/10.4230/LIPIcs.FSCD.2022.30},
  doi       = {10.4230/LIPIcs.FSCD.2022.30},
  bibsource = {dblp computer science bibliography, https://dblp.org}
}

@incollection{DBLP:books/el/01/AcetoFV01,
  author    = {Luca Aceto and
               Wan J. Fokkink and
               Chris Verhoef},
  editor    = {Jan A. Bergstra and
               Alban Ponse and
               Scott A. Smolka},
  title     = {Structural Operational Semantics},
  booktitle = {Handbook of Process Algebra},
  pages     = {197--292},
  publisher = {North-Holland / Elsevier},
  year      = {2001},
  url       = {https://doi.org/10.1016/b978-044482830-9/50021-7},
  doi       = {10.1016/b978-044482830-9/50021-7},
  bibsource = {dblp computer science bibliography, https://dblp.org}
}

@InProceedings{10.1007/978-3-642-32784-1_5,
author="Bacci, Giorgio
and Miculan, Marino",
editor="Pattinson, Dirk
and Schr{\"o}der, Lutz",
title="Structural Operational Semantics for Continuous State Probabilistic Processes",
booktitle="Coalgebraic Methods in Computer Science",
year="2012",
publisher="Springer Berlin Heidelberg",
address="Berlin, Heidelberg",
pages="71--89",
isbn="978-3-642-32784-1"
}

@article{10.1145/3776697,
author = {Urbat, Henning},
title = {Higher-Order Behavioural Conformances via Fibrations},
year = {2026},
issue_date = {January 2026},
publisher = {Association for Computing Machinery},
address = {New York, NY, USA},
volume = {10},
number = {POPL},
url = {https://doi.org/10.1145/3776697},
doi = {10.1145/3776697},
journal = {Proc. ACM Program. Lang.},
month = jan,
articleno = {55},
numpages = {29}
}

@article{Kock2012,
  author    = {Anders Kock},
  title     = {Commutative monads as a theory of distributions},
  journal   = {Theory and Applications of Categories},
  volume    = {26},
  number    = {4},
  year      = {2012},
  pages     = {97--131},
  url       = {http://www.tac.mta.ca/tac/volumes/26/4/26-04abs.html}
}

@article{ChoJacobs2019,
  author    = {Kenta Cho and Bart Jacobs},
  title     = {Disintegration and {B}ayesian Inversion via String Diagrams},
  journal   = {Mathematical Structures in Computer Science},
  volume    = {29},
  year      = {2019},
  doi       = {10.1017/S0960129518000488},
  note      = {arXiv:1709.00322}
}

@article{Fritz2020,
  author    = {Tobias Fritz},
  title     = {A synthetic approach to {M}arkov kernels, conditional independence and theorems on sufficient statistics},
  journal   = {Advances in Mathematics},
  volume    = {370},
  year      = {2020},
  doi       = {10.1016/j.aim.2020.107239},
  note      = {arXiv:1908.07021}
}

\clearpage
\appendix
\onecolumn

\section{Appendix}
This appendix contains omitted proofs and additional details.

\begin{rem}[Recursion]\label{rem:primitive-recursion}
Let $\Sigma\colon \C\to \C$ be an endofunctor with a free algebra $\Sigmas X$ generated by $X\in \C$. Its universal property entails a useful recursion principle (cf.~\cite[Prop.~2.4.6]{DBLP:books/cu/J2016}). For every $\Sigma$-algebra morphism $f\colon (\Sigmas X,\ini_X)\to (A,a)$, every morphism $g\colon X\to C$ and every morphism $c\colon \Sigma(A\times C)\to C$, there exists a unique morphism $h$ making the diagram below commute:
\[
\begin{tikzcd}[column sep=40]
X \ar{r}{\eta} \ar{dr}[swap]{g} & \Sigmas X \ar[dashed]{d}{h} & \Sigma\Sigmas X \ar{l}[swap]{\ini_X}  \ar{d}{\Sigma\langle f,\,h\rangle} \\
& C & \Sigma(A\times C) \ar{l}[swap]{c}
\end{tikzcd}
\]
\end{rem}

\subsection[Proof of Theorem~\ref{thm:cong}]{Proof of \Cref{thm:cong}}

The key construction for our proof is an extension of pre-GSOS laws to the free monad~$\Sigmas$:
\begin{notation}
Every pre-GSOS law \eqref{eq:pre-gsos-law} induces a family of morphisms
\begin{equation}\label{eq:pre-gsos-law-ext} \rho_X^{*}\colon \Sigmas(X\times BX)\to B\Sigmas X \qquad (X\in \C) \end{equation}
defined recursively (\Cref{rem:primitive-recursion}) as the unique morphism making the diagram below commute:
\[
\begin{tikzcd}[column sep=10]
X\times BX \ar{r}{\eta} \ar{d}[swap]{\outr} & \Sigmas(X\times BX) \ar[dashed]{d}{\rho^{*} } && \Sigma\Sigmas(X\times BX) \ar{ll}[swap]{\ini}  \ar{d}{\Sigma \langle \Sigmas \outl, \, \rho^{*}\rangle} \\
BX \ar{r}{B\eta_X} & B\Sigmas X & B\Sigmas\Sigmas X \ar{l}[swap]{B\mu} & \Sigma(\Sigmas X\times B\Sigmas X)  \ar{l}[swap]{\rho}
\end{tikzcd}
\]
\end{notation}
Informally, while the original law $\rho$ specifies the behaviour of individual operations from $\Sigma$ (corresponding to flat terms), the extension $\rho^{\star}$ extends this specification to arbitrary terms. Importantly, this extension interacts well with naturality conditions on $\rho$. This is the content of the next two lemmas:

\begin{lemma}\label{lem:rho-star-nat}
For every $f\colon X\to Y$, if the pre-GSOS law $\rho$ \eqref{eq:pre-gsos-law} is natural w.r.t.\ $\Sigmas f$, then its extension $\rho^{*}$ \eqref{eq:pre-gsos-law-ext} is natural w.r.t.\ $f$.
\end{lemma}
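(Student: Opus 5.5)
We must show $B\Sigmas f\comp \rho^{*}_X = \rho^{*}_Y\comp \Sigmas(f\times Bf)$ as morphisms $\Sigmas(X\times BX)\to B\Sigmas Y$. The plan is to use the uniqueness part of the recursion principle (\Cref{rem:primitive-recursion}). We show that both sides are the unique morphism $h$ for the data $X_0 = X\times BX$, $g = B\eta_Y\comp Bf\comp\outr\colon X\times BX\to B\Sigmas Y$, $A=\Sigmas Y$, $C = B\Sigmas Y$, the $\Sigma$-algebra morphism
\[F=\Sigmas f\comp \Sigmas\outl = \Sigmas(\outl\comp (f\times Bf))\colon (\Sigmas(X\times BX),\ini)\to (\Sigmas Y,\ini_Y),\]
and the algebra $c = B\mu_Y\comp\rho_{\Sigmas Y}\colon \Sigma(\Sigmas Y\times B\Sigmas Y)\to B\Sigmas Y$. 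Here $F$ is an algebra morphism because $\Sigmas$ of any morphism is a free-algebra morphism.

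\textbf{The right-hand side $h_2=\rho^{*}_Y\comp\Sigmas(f\times Bf)$.} This is routine, using only the definition of $\rho^{*}_Y$ and the fact that $\Sigmas(f\times Bf)$ commutes with $\eta$ and with $\ini$. On the unit, $h_2\comp\eta = \rho^{*}_Y\comp \eta\comp (f\times Bf) = B\eta_Y\comp\outr\comp(f\times Bf) = g$. On operations, $h_2\comp\ini = \rho^{*}_Y\comp\ini\comp\Sigma\Sigmas(f\times Bf) = c\comp \Sigma\langle\Sigmas\outl,\rho^{*}_Y\rangle\comp\Sigma\Sigmas(f\times Bf)$. This equals $c\comp\Sigma\langle F, h_2\rangle$, since $\Sigmas\outl\comp\Sigmas(f\times Bf)=\Sigmas f\comp\Sigmas\outl = F$.

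\textbf{The left-hand side $h_1 = B\Sigmas f\comp\rho^{*}_X$.} On the unit, $h_1\comp\eta = B\Sigmas f\comp B\eta_X\comp\outr = B\eta_Y\comp Bf\comp \outr = g$ by naturality of $\eta$. On operations, $h_1\comp \ini = B\Sigmas f\comp B\mu_X\comp\rho_{\Sigmas X}\comp\Sigma\langle\Sigmas\outl,\rho^{*}_X\rangle$. We rewrite this in three steps.
\begin{enumerate}
\item Naturality of $\mu$ gives $B\Sigmas f\comp B\mu_X = B\mu_Y\comp B\Sigmas\Sigmas f$.
\item The hypothesis that $\rho$ is natural w.r.t.\ $\Sigmas f$ gives $B\Sigmas(\Sigmas f)\comp\rho_{\Sigmas X} = \rho_{\Sigmas Y}\comp\Sigma(\Sigmas f\times B\Sigmas f)$.
\item Combining, $h_1\comp\ini = B\mu_Y\comp\rho_{\Sigmas Y}\comp\Sigma\langle \Sigmas f\comp\Sigmas\outl,\, B\Sigmas f\comp\rho^{*}_X\rangle = c\comp\Sigma\langle F,h_1\rangle$.
\end{enumerate}
Hence $h_1=h_2$ by uniqueness.

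The only nontrivial step is step 2, and the hypothesis is exactly what is needed there, at the object $\Sigmas X$ and the morphism $\Sigmas f$. The main care point is bookkeeping: the recursion principle must be instantiated with the same $F$ for both sides, which requires checking that $\Sigmas f\comp\Sigmas\outl$ and $\Sigmas\outl\comp\Sigmas(f\times Bf)$ coincide. They do, since $\outl\comp(f\times Bf) = f\comp\outl$.
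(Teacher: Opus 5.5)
Your proof is correct and is essentially the paper's proof. Both instantiate the recursion principle with the same data, namely $B\eta\comp\outr\comp(f\times Bf)$, the algebra morphism $\Sigmas(f\comp\outl)$, and $B\mu\comp\rho$, and then check that $\rho^{*}_Y\comp\Sigmas(f\times Bf)$ and $B\Sigmas f\comp\rho^{*}_X$ both satisfy the defining equations; the second check uses naturality of $\eta$ and $\mu$ together with the hypothesis that $\rho$ is natural w.r.t.\ $\Sigmas f$, exactly as you do.
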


\begin{proof}
To show that the naturality condition $B\Sigmas f\comp \rho^{*}_X=\rho^{*}_Y\comp \Sigmas(f\times Bf)$ of $\rho^{*}$ w.r.t.\ $f$ holds, it suffices by the uniqueness statement of the recursion principle to prove that the morphisms on both sides make the diagram below commute:
\[
\begin{tikzcd}[column sep=20]
X\times BX \ar{rr}{\eta} \ar{d}[swap]{f\times Bf} & &  \Sigmas(X\times BX) \ar{d}{(-) } & \Sigma\Sigmas(X\times BX) \ar{l}[swap]{\ini}  \ar{d}{\Sigma\langle \Sigmas (f\comp \outl), \,(-)\rangle} \\
Y\times BY \ar{r}{\outr} & BY \ar{r}{B\eta} &  B\Sigmas Y & \Sigma(\Sigmas Y\times B\Sigmas Y)  \ar{l}[swap]{B\mu\comp \rho}
\end{tikzcd}
\]
For the morphism $\rho^{*}_Y\comp \Sigmas(f\times Bf)$, this follows from the commutative diagram below:
\[
\begin{tikzcd}[column sep=20]
X\times BX \ar{rr}{\eta} \ar{dd}[swap]{f\times Bf} & &  \Sigmas(X\times BX) \ar{d}{\Sigmas(f\times Bf)} & \Sigma\Sigmas(X\times BX) \ar{l}[swap]{\ini}  \ar{d}{\Sigma\Sigmas(f\times Bf)} \\
&& \Sigmas(Y\times BY) \ar{d}{\rho^{*}} & \Sigma\Sigmas(Y\times BY) \ar{d}{\Sigma\langle \Sigmas\outl,\,\rho^{*}\rangle} \ar{l}[swap]{\ini} \\
Y\times BY \ar{r}{\outr} \ar{urr}{\eta} & BY \ar{r}{B\eta} &  B\Sigmas Y & \Sigma(\Sigmas Y\times B\Sigmas Y)  \ar{l}[swap]{B\mu\comp \rho}
\end{tikzcd}
\]
The two upper cells commute by naturality of $\eta$ and $\ini$, respectively, and the two lower cells by definition of $\rho^{*}$.

Similarly, for the morphism $B\Sigmas f\comp \rho^{*}_X$, we get the commutative diagram below:
\[
\begin{tikzcd}[column sep=20]
X\times BX \ar{dr}{\outr} \ar{rr}{\eta} \ar{dd}[swap]{f\times Bf} & &  \Sigmas(X\times BX) \ar{d}{\rho^{*}} & \Sigma\Sigmas(X\times BX) \ar{l}[swap]{\ini}  \ar{d}{\Sigma\langle \Sigmas\outl,\,\rho^{*}\rangle} \\
& BX \ar{d}{Bf} \ar{r}{B\eta} & B\Sigmas X \ar{d}{B\Sigmas f} & \Sigma(\Sigmas X\times B\Sigmas X) \ar{d}{\Sigma(\Sigmas f\times B\Sigmas f)} \ar{l}[swap]{B\mu\comp \rho} \\
Y\times BY \ar{r}{\outr} & BY \ar{r}{B\eta} &  B\Sigmas Y & \Sigma(\Sigmas Y\times B\Sigmas Y)  \ar{l}[swap]{B\mu\comp \rho}
\end{tikzcd}
\]
The two upper cells commute by definition of $\rho_X^{*}$, and the three remaining cells by naturality of $\outr$, naturality of $\eta$, naturality of $\mu$, and naturality of $\rho$ w.r.t.\ $\Sigmas f$.
\end{proof}

\begin{lemma}\label{lem:gamma-rho-star}
If a pre-GSOS law $\rho$ is natural w.r.t.\ $\hat\ini$, then the following diagram commutes, where $\gamma$ is the operational model of $\rho$:
\[
\begin{tikzcd}[column sep=30]
 \mS \ar{d}[swap]{\gamma} && \Sigmas(\mS) \ar{ll}[swap]{\hat\ini}  \ar{d}{\Sigmas\langle \id,\,\gamma\rangle} \\
 B(\mS) & B\Sigmas(\mS) \ar{l}[swap]{B\hat\ini} & \Sigmas(\mS\times B(\mS)) \ar{l}[swap]{\rho^{*}}
\end{tikzcd}
\]

\end{lemma}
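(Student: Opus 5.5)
The plan is to use the uniqueness part of the recursion principle (\Cref{rem:primitive-recursion}) for the free algebra $\Sigmas(\mS)$. Instantiate it with the $\Sigma$-algebra morphism $f=\hat\ini\colon(\Sigmas(\mS),\ini_{\mS})\to(\mS,\ini)$, with $g=\gamma\colon\mS\to B(\mS)$, and with
\[ c=\bigl(\Sigma(\mS\times B(\mS))\xrightarrow{\rho}B\Sigmas(\mS)\xrightarrow{B\hat\ini}B(\mS)\bigr). \]
This determines a unique $h\colon\Sigmas(\mS)\to B(\mS)$ with $h\comp\eta=\gamma$ and $h\comp\ini_{\mS}=B\hat\ini\comp\rho\comp\Sigma\langle\hat\ini,h\rangle$. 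We then show that both $\gamma\comp\hat\ini$ and $B\hat\ini\comp\rho^{*}\comp\Sigmas\langle\id,\gamma\rangle$ satisfy these two equations, so they are equal.

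\textbf{The first candidate, $\gamma\comp\hat\ini$.} The unit equation holds because $\hat\ini\comp\eta=\id$. For the algebra equation, $\hat\ini\comp\ini_{\mS}=\ini\comp\Sigma\hat\ini$, and the defining square of $\gamma$ gives
\[ \gamma\comp\ini\comp\Sigma\hat\ini=B\hat\ini\comp\rho\comp\Sigma\langle\id,\gamma\rangle\comp\Sigma\hat\ini=B\hat\ini\comp\rho\comp\Sigma\langle\hat\ini,\gamma\comp\hat\ini\rangle. \]
No naturality is needed for this candidate.

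\textbf{The second candidate, $k:=B\hat\ini\comp\rho^{*}\comp\Sigmas\langle\id,\gamma\rangle$.} For the unit equation, naturality of $\eta$ turns $\rho^{*}\comp\Sigmas\langle\id,\gamma\rangle\comp\eta$ into $B\eta\comp\outr\comp\langle\id,\gamma\rangle=B\eta\comp\gamma$. Composing with $B\hat\ini$ gives $\gamma$.

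For the algebra equation, first use naturality of $\ini$ and the definition of $\rho^{*}$ to get
\[ k\comp\ini_{\mS}=B\hat\ini\comp B\mu\comp\rho_{\Sigmas\mS}\comp\Sigma\langle\Sigmas\outl,\rho^{*}\rangle\comp\Sigma\Sigmas\langle\id,\gamma\rangle. \]
Here $\Sigmas\outl\comp\Sigmas\langle\id,\gamma\rangle=\id$, so the pair becomes $\langle\id,\rho^{*}\comp\Sigmas\langle\id,\gamma\rangle\rangle$. Next, $\hat\ini\comp\mu=\hat\ini\comp\Sigmas\hat\ini$ because $\hat\ini$ is an Eilenberg–Moore algebra. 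Therefore
\[ B\hat\ini\comp B\mu\comp\rho_{\Sigmas\mS}=B\hat\ini\comp B\Sigmas\hat\ini\comp\rho_{\Sigmas\mS}. \]

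\textbf{The main obstacle.} At this point the hypothesis is used: naturality of $\rho$ with respect to $\hat\ini$ gives $B\Sigmas\hat\ini\comp\rho_{\Sigmas\mS}=\rho_{\mS}\comp\Sigma(\hat\ini\times B\hat\ini)$. Substituting, we obtain
\[ k\comp\ini_{\mS}=B\hat\ini\comp\rho\comp\Sigma\langle\hat\ini,\,B\hat\ini\comp\rho^{*}\comp\Sigmas\langle\id,\gamma\rangle\rangle=B\hat\ini\comp\rho\comp\Sigma\langle\hat\ini,k\rangle, \]
which is exactly the required equation. This is the only place naturality is needed, and only with respect to $\hat\ini$, which matches the hypothesis of the lemma. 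The remaining work is bookkeeping: checking that the $\Sigmas$-functoriality steps and the pairing identities, such as $(f\times g)\comp\langle a,b\rangle=\langle f\comp a,g\comp b\rangle$, are applied correctly. Uniqueness of $h$ then gives $\gamma\comp\hat\ini=k$, which is the commutativity of the diagram.
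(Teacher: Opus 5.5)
Your proof is correct and follows the paper's route. Both arguments instantiate the recursion principle with $f=\hat\ini$, $g=\gamma$ and $c=B\hat\ini\comp\rho$, check that $\gamma\comp\hat\ini$ and $B\hat\ini\comp\rho^{*}\comp\Sigmas\langle\id,\gamma\rangle$ both satisfy it, and use the hypothesis only in the step $B\Sigmas\hat\ini\comp\rho_{\Sigmas\mS}=\rho_{\mS}\comp\Sigma(\hat\ini\times B\hat\ini)$, together with $\hat\ini\comp\mu=\hat\ini\comp\Sigmas\hat\ini$. Your equational derivation is the paper's commutative diagram written out linearly.
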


\begin{proof}
We show that the two morphisms $\gamma\comp \hat\ini$ and  $B\hat\ini\comp \rho^{*}_\mS\comp \Sigmas\langle \id,\gamma\rangle$ both make the diagram below commute; it then follows that they are equal by the uniqueness statement of the recursion principle.
\[
\begin{tikzcd}
\mS \ar{r}{\eta} \ar{dr}[swap]{\gamma} & \Sigmas(\mS) \ar{d}{(-)} && \Sigma\Sigmas(\mS) \ar{d}{\Sigma\langle \hat\ini, (-)\rangle} \ar{ll}[swap]{\ini} \\
& B(\mS) & B\Sigmas(\mS) \ar{l}[swap]{B\hat\ini} & \Sigma(\mS\times B(\mS)) \ar{l}[swap]{\rho}
\end{tikzcd}
\]
For $\gamma\comp \hat\ini$, we have the following diagram:
\[
\begin{tikzcd}[column sep=20]
\mS 
	\ar{r}{\eta} 
	\ar{dr}{\id} 
	\ar{ddr}[swap]{\gamma} & 
\Sigmas(\mS) 
	\ar{d}{\hat\ini} && 
\Sigma\Sigmas(\mS) 
	\ar{d}{\Sigma\hat\ini} 
	\ar{ll}[swap]{\ini} 
\\
& 
\mS 
	\ar{d}{\gamma} && 
\Sigma(\mS) 
	\ar{d}{\Sigma\langle \id,\,\gamma\rangle} 
	\ar{ll}[swap]{\ini} 
\\
& 
B(\mS) & 
B\Sigmas(\mS) 
	\ar{l}[swap]{B\hat\ini} & 
\Sigma(\mS\times B(\mS)) 
	\ar{l}[swap]{\rho}
\end{tikzcd}
\]
The two upper cells commute by definition of $\hat\ini$ as the free extension of $\id\colon \mS\to \mS$, and the lower cell by definition of $\gamma$.

For $B\hat\ini\comp \rho^{*}_\mS\comp \Sigmas\langle \id,\gamma\rangle$, we get the following commutative diagram.
\[
\begin{tikzcd}[column sep=1]
\mS 
	\ar{dr}{\langle \id,\gamma\rangle}   
	\ar{rr}{\eta} 
	\ar{ddd}[swap]{\gamma} &[-1em]&[1em] 
\Sigmas(\mS) 
	\ar{d}{\Sigmas\langle\id,\gamma\rangle} &&[1em] 
\Sigma\Sigmas(\mS) 
	\ar{d}{\Sigma\Sigmas\langle\id,\gamma\rangle} 
	\ar{ll}[swap]{\ini} 
\\
& 
\mS\times B(\mS)  
	\ar{r}{\eta} 
	\ar{ddl}[swap]{\outr} & 
\Sigmas(\mS\times B(\mS)) 
	\ar{d}{\rho^{*}} 
&& 
\Sigma\Sigmas(\mS\times B(\mS)) 
	\ar{d}{\Sigma\langle \Sigmas\outl,\,\rho^{*}\rangle} 
	\ar{ll}[swap]{\ini} 
\\
& & 
B\Sigmas(\mS) 
	\ar{d}{B\hat\ini} & 
B\Sigmas\Sigmas(\mS) 
	\ar{d}{B\Sigmas\hat\ini} 
	\ar{l}[swap]{B\mu} & 
\Sigma(\Sigmas(\mS)\times B\Sigmas(\mS)) 
	\ar{l}[swap]{\rho} 
	\ar{d}{\Sigma(\hat\ini \times B\hat \ini)} 
\\
B(\mS) 
	\ar[equals]{rr} 
	\ar{urr}{B\eta} && 
B(\mS) & 
B\Sigmas(\mS) 
	\ar{l}[swap]{B\hat\ini} & 
\Sigma(\mS\times B(\mS)) 
	\ar{l}[swap]{\rho}
\end{tikzcd}
\]
The lower right cell commutes by assumption, and all other cells either by naturality or by definition.
\end{proof}
We are ready to prove the congruence result for pre-GSOS laws:

\begin{proof}[Proof of \Cref{thm:cong}]
Every $B$-coalgebra $(X,c)$ induces a $B$-coalgebra structure $\ol{c}$ on~$\Sigmas X$ defined as follows:
\[ \ol{c} \,=\, (\,\Sigmas X \xto{\Sigmas\langle \id,c\rangle} \Sigmas(X \times BX) \xto{\rho^{*}_{X}} B\Sigmas X\,). \]
In particular, for the final coalgebra $(\nu B,{\tau})$ we obtain the $B$-coalgebra $(\Sigmas(\nu B),\ol{\tau})$. Thus, we get a $\Sigmas$-algebra structure on $\nu B$ (equivalent to a $\Sigma$-algebra structure) given by unique coalgebra morphism
\[ \alpha\colon (\Sigmas(\nu B), \ol{\tau})\to (\nu B, \tau). \] 
To prove that $\beh$ is a morphic congruence, it suffices to show that 
\[ \beh \colon (\mS,\ini)\to (\nu B,\alpha) \]
is a $\Sigma^{*}$-algebra morphism, that is, $\beh\comp \hat\ini = \alpha\comp \Sigma^{*}\beh$. The latter equality follows by finality of the coalgebra $\nu B$ once we show that all four morphisms in the square below are $B$-coalgebra morphisms:
\[
\begin{tikzcd}
(\Sigmas(\mS),\ol{\gamma}) \ar{d}[swap]{\Sigmas \beh} \ar{r}{\hat\ini} & (\mS,\gamma) \ar{d}{\beh} \\
(\Sigmas (\nu B), \ol{\tau}) \ar{r}{\alpha} & (\nu B,\tau)
\end{tikzcd}
\]
The morphisms $\beh$ and $\alpha$ are coalgebra morphisms by definition. That $\hat\ini$ is a coalgebra morphism is precisely the statement of \Cref{lem:gamma-rho-star}. Finally, $\Sigmas \beh$ is a coalgebra morphism by the commutative diagram below:
\[
\begin{tikzcd}[column sep=9ex, row sep=normal]
\Sigmas(\mS) 
	\ar[shiftarr={yshift=25}]{rr}{\ol{\gamma}} 
	\ar{r}{\Sigmas\langle \id,\gamma\rangle} 
	\ar{d}[swap]{\Sigmas \beh} & 
\Sigmas(\mS\times B(\mS)) 
	\ar{r}{\rho^{*}_{\mS}} 
	\ar{d}{\Sigmas(\beh\times B\beh)} & 
B\Sigmas(\mS) 
	\ar{d}{B\Sigmas \beh} 
\\
\Sigmas(\nu B) 
	\ar[shiftarr={yshift=-20}]{rr}[swap]{\ol{\tau}} 
	\ar{r}{\Sigmas\langle \id,\tau\rangle} & 
\Sigmas(\nu B\times B(\nu B)) 
	\ar{r}{\rho^{*}_{\nu B}} & 
B\Sigmas(\nu B)
\end{tikzcd}
\] 
Here the left hand square commutes because $\beh$ is a coalgebra morphism, and the right hand square because $\rho^{*}$ is natural w.r.t.\ $\beh$, which follows from the assumption that $\rho$ is natural w.r.t.\ $\Sigmas\beh$ and \Cref{lem:rho-star-nat}.
\end{proof}
 
\subsection[Details for Remark~\ref{rem:products}]{Details for \Cref{rem:products}}

\Cref{asm}\ref{asm1} entails that $\Kl(T)$ has binary products, which coincide with binary coproducts in $\Kl(T)$ and thus binary coproducts in $\C$~\cite[Thm.~19]{cj13}. Indeed, for $X,Y,Z\in \C$ we have
\begin{align*} \Kl(T)(Z,X+Y)&=\C(Z,T(X+Y)) \\ &\cong \C(Z,TX\times TY) \\ &\cong \C(Z,TX)\times \C(Z,TY) \\&= \Kl(T)(Z,X)\times \Kl(T)(Z,Y).  \end{align*}

\subsection[Proof of Proposition~\ref{lem:op-model-coincidence}]{Proof of \Cref{lem:op-model-coincidence}}
Clearly $\barrho$ is natural in $\C$ because all morphisms appearing in its definition are. To show that the operational models $\gamma$ of $\rho$ in $\Kl(T)$ and $\barrho$ in $\C$ coincide, let us instantiate the general definition of $\gamma$ in $\Kl(T)$:
\begin{equation*}
\begin{tikzcd}[column sep=25]
 \mS \ar[kleisliw]{d}[swap]{\gamma} && \barSigma(\mS) \ar[kleisliw]{ll}[swap]{J\ini}  \ar[kleisliw]{d}{\barSigma\langle \eta,\,\gamma\rangle} \\
 \barB(\mS) & \barB\barSigmas(\mS) \ar[kleisliw]{l}[swap]{\barB J\hat\ini} & \barSigma(\mS\times \barB(\mS)) \ar[kleisliw]{l}[swap]{\rho}
\end{tikzcd}
\end{equation*}
This expands as follows, using the concrete definitions of the involved structures:
\begin{equation*}
\begin{tikzcd}[column sep=normal, row sep=normal]
T\mS\ar[ddd,"\gamma"'] & & \mS\ar[ll, "\eta"'] & & \Sigma\mS\ar[ll,"\iota"']\dar["\Sigma\brks{\eta,\gamma}"]
\\
& & & & \Sigma(T\mS\times TB\mS)
\dar["\Sigma j"]
\\
& & & & \Sigma T(\mS+B\mS)
	\dar["\delta^{\Sigma}"]
\\
TB\mS
& & 
TB\Sigmas(\mS) 
	\ar[ll,"TB\hat\iota"']
& 
TTB\Sigmas(\mS) 
	\lar["\mu"']
& 
T\Sigma (\mS+B\mS)
\lar["T\rho"']
\end{tikzcd}
\end{equation*}
This diagram is exactly the one uniquely determining $\gamma$ as the operational 
model of the pre-GSOS law \eqref{eq:bar-rho} in $\C$. \qed

\subsection[Proof of Proposition~\ref{prop:affine-vs-natural}]{Proof of \Cref{prop:affine-vs-natural}}

By definition of the Kleisli extensions corresponding to the given distributive laws, commutativity of \eqref{eq:affine} is equivalent to commutativity of the following diagram in~$\Kl(T)$:
\begin{equation}\label{eq:affine-kleisli}
\begin{tikzcd}
 \barSigma \barB_0 P \ar[kleisliw]{r}{\barSigma \bar B_0 Jp} 
  & \barSigma\barB_0 TX \ar[kleisliw]{r}{\rho_{TX}} \ar[kleisliw]{d}[swap]{\barSigma\barB_0\id_{TX}} & \barB \barSigmas TX \ar[kleisliw]{d}{\barB\barSigmas\id_{TX}} \\
  & \barSigma\barB_0 \ar[kleisliw]{r}{\rho_X} X & \barB\barSigmas X 
\end{tikzcd}
\end{equation}
Here $J\colon \C\to \Kl(T)$ is the canonical left adjoint, and we regard the identity morphism $\id_{TX}\c TX\to TX$ of $\C$ as a morphism $\id_{TX}\colon TX\kleislito X$ of $\Kl(T)$.

Now suppose that \eqref{eq:affine} (equivalently \eqref{eq:affine-kleisli}) commutes. Consider the following diagram in $\Kl(T)$:
\begin{equation}\label{eq:natural-kleisli}
\begin{tikzcd}
\barSigma\barB_0 P 
	\ar[kleisliw]{r}{\rho_P} 
	\ar[kleisliw, shiftarr={xshift=-55}]{dd}[swap]{\barSigma\barB_0 p} 
	\ar[kleisliw]{d}[swap]{\barSigma\barB_0 Jp} & 
\barB\barSigmas P 
	\ar[kleisliw]{d}{\barB\barSigmas Jp} 
	\ar[kleisliw, shiftarr={xshift=55}]{dd}{\barB\barSigmas p}
\\
\barSigma\barB_0 T X 
	\ar[kleisliw]{r}{\rho_{TX}} 
	\ar[kleisliw]{d}[swap]{\barSigma\barB_0\id_{TX}} & 
\barB\barSigmas TX 
	\ar[kleisliw]{d}{\barB\barSigmas \id_{TX}} 
\\
\barSigma\barB_0 X 
	\ar[kleisliw]{r}{\rho_X}  & 
\barB\barSigmas X
\end{tikzcd}
\end{equation}
The left and right parts commute because
\[ p = (\begin{tikzcd} P \ar[kleisliw]{r}{Jp} & TX \ar[kleisliw]{r}{\id_{TX}} & X \end{tikzcd}) \qquad\text{in $\Kl(T)$} \]
by definition of Kleisli composition and one of the monad laws. The upper part commutes by naturality of $\rho$ in $\C$, and the lower part commutes when precomposed with $\barSigma\barB_0 Jp$ by~\eqref{eq:affine-kleisli}. It follows that the outside commutes, proving naturality of \eqref{eq:gsos-law-kleisli} w.r.t.\ $p$.

Conversely, if \eqref{eq:gsos-law-kleisli} is natural w.r.t.\ $p$, then the outside and the left, right, and upper part of \eqref{eq:natural-kleisli} commute. It follows that the lower part commutes when precomposed with $\barSigma\barB_0 Jp$. Thus \eqref{eq:affine-kleisli} (equivalently \eqref{eq:affine}) commutes.\qed

\subsection[Proof of Theorem~\ref{thm:cong-trace-positive}]{Proof of \Cref{thm:cong-trace-positive}}

We first establish a few auxiliary results. The first one follows by instantiating the refined congruence result for abstract GSOS (\Cref{thm:cong}) to the present setting:

\begin{proposition}\label{thm:cong-trace}
If the pre-De Simone law \eqref{eq:de-simone-law} is natural w.r.t.\
$\barSigmas \tr\colon \barSigmas(\mu\barSigma)\kleislito \barSigmas(\mu \barB)$, then
 $\tr\colon \mS\to T(\mu B)$ is a morphic $\Sigma$-congruence.
\end{proposition}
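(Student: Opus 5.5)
We obtain this by instantiating \Cref{thm:cong} to the pre-GSOS law \eqref{eq:gsos-law-kleisli} of $\barSigma$ over $\barB$ in $\Kl(T)$. The underlying data fit that theorem. $\Kl(T)$ has binary products by \Cref{rem:products}. The functor $\barSigma$ has initial algebra $\mu\barSigma=\mS$ with structure $J\ini$, and free algebras $\barSigmas X=\Sigmas X$ (\Cref{prop:initial-alg-lift}). The functor $\barB$ has the final coalgebra $(\mu B, J\beta^{-1})$ (\Cref{prop:initial-alg-final-coalg}), and its behaviour morphism is precisely $\tr$. The theorem then yields that $\tr$ is a morphic congruence on $(\mu\barSigma,J\ini)$ in $\Kl(T)$, provided $\rho$ is natural w.r.t.\ $\hat{\ini}^{\Kl}\colon\barSigmas(\mu\barSigma)\kleislito\mu\barSigma$ and w.r.t.\ $\barSigmas\tr$.

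The second naturality condition is exactly our hypothesis. For the first, I will identify the Eilenberg--Moore structure of the $\barSigma$-algebra $(\mS,J\ini)$. It is the unique $\barSigma$-algebra morphism $\Sigmas(\mS)\kleislito\mS$ extending the identity. Since $J$ commutes with the algebra structures (\Cref{prop:initial-alg-lift}), applying $J$ to the $\C$-level equations $\hat\ini\comp\eta=\id$ and $\hat\ini\comp\ini_{\mS}=\ini\comp\Sigma\hat\ini$ shows that $J\hat\ini$ has this property. Concretely, $\barSigma J\hat\ini=J\Sigma\hat\ini$ holds because $\barSigma$ is a Kleisli extension. Hence the morphism in question is the pure morphism $J\hat\ini$, and pre-De Simone laws are natural w.r.t.\ all pure morphisms because $\rho$ is natural in $\C$. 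To make this last point precise, I will unfold $\barSigma(X\times\barB X)=\Sigma(X+BX)$, where $\barSigma(Jf\times\barB Jf)=J\Sigma(f+Bf)$, and compare with $\barB\barSigmas Jf=J(B\Sigmas f)$.

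It remains to turn the $\Kl(T)$-statement into the $\C$-statement. We are given $\alpha\colon\barSigma(\mu B)\kleislito\mu B$, that is, $\alpha\colon\Sigma(\mu B)\to T(\mu B)$, with
\[ \tr\comp J\ini=\alpha\comp\barSigma\tr \quad\text{in }\Kl(T). \]
Unfolding Kleisli composition, the left side is $\tr\comp\ini$ in $\C$. The right side is $\mu\comp T\alpha\comp\delta^{\Sigma}\comp\Sigma\tr$, that is, $\alpha^\sharp\comp\delta^\Sigma\comp\Sigma\tr$, where $\alpha^\sharp$ is the Kleisli extension of $\alpha$. I then equip $T(\mu B)$ with the $\Sigma$-algebra structure
\[ a'=\mu\comp T\alpha\comp\delta^{\Sigma}_{\mu B}\colon\Sigma T(\mu B)\to T(\mu B). \]
With this choice the displayed equation reads $\tr\comp\ini=a'\comp\Sigma\tr$ in $\C$. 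This says that $\tr\colon(\mS,\ini)\to(T(\mu B),a')$ is a $\Sigma$-algebra morphism, i.e.\ a morphic $\Sigma$-congruence.

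The main obstacle is the bookkeeping in the first naturality condition. One must confirm that the $\hat\ini$ of \Cref{thm:cong}, computed in $\Kl(T)$ for the free monad $\barSigmas$, coincides with $J\hat\ini$. One must also check that naturality of $\rho$ in $\C$ transfers to naturality of \eqref{eq:gsos-law-kleisli} w.r.t.\ pure maps, which requires tracking the identification of products in $\Kl(T)$ with coproducts via $j$. The remainder of the argument is routine unfolding.
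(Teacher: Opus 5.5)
Your proposal is correct and follows essentially the same route as the paper. The paper also instantiates \Cref{thm:cong} in $\Kl(T)$, gets naturality w.r.t.\ $J\hat\iota$ from its purity and naturality of $\rho$ in $\C$, takes naturality w.r.t.\ $\barSigmas\tr$ as the hypothesis, and unfolds the resulting $\barSigma$-algebra morphism into the equation $\tr\comp\iota=\mu\comp T\alpha\comp\delta^{\Sigma}\comp\Sigma\tr$ in $\C$. You are in fact more explicit than the paper on three points: that the Eilenberg--Moore structure in $\Kl(T)$ is $J\hat\iota$, that naturality in $\C$ transfers to pure Kleisli maps (via $Jf\times Jg=J(f+g)$, from naturality of $j$), and that the algebra structure on $T(\mu B)$ is $\mu\comp T\alpha\comp\delta^{\Sigma}$.
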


\begin{proof}
The pre-GSOS law \eqref{eq:gsos-law-kleisli} corresponding to the given pre-De Simone law \eqref{eq:de-simone-law} is natural w.r.t.\ $J\hatini\colon \barSigmas(\mu\barSigma)\to \mu\barSigma$ because this morphism is pure and \eqref{eq:de-simone-law} is natural in $\C$. It is natural w.r.t.~$\barSigmas \tr$ by assumption. The congruence theorem for abstract GSOS (\Cref{thm:cong}) thus shows that the morphism $\tr$ is a morphic congruence on the initial algebra $\mu\barSigma$, i.e.\ there exists a $\barSigma$-algebra structure $\alpha\colon \barSigma(\mu\barB)\kleislito \mu\barB$ such that $\tr\colon (\mu\barSigma,J\ini)\kleislito (\mu\barB,\alpha)$ is a $\barSigma$-algebra morphism. This states precisely that the diagram in $\C$ below commutes:
\[
\begin{tikzcd}
\Sigma(\mu\Sigma) \ar{rrr}{\ini} \ar{d}[swap]{\Sigma\tr} &&& \mu\Sigma \ar{d}{\tr} \\
\Sigma T(\mu B) \ar{r}{\delta^{\Sigma}} & T\Sigma(\mu B) \ar{r}{T\alpha} & TT(\mu B) \ar{r}{\mu} & T(\mu B)
\end{tikzcd}
\]
Thus $\tr$ carries a $\Sigma$-algebra morphism, proving that $\tr$ is a morphic congruence on~$\mu\Sigma$.
\end{proof}

Let us fix the affine part $T_{+}$ of $T$ and
the monad morphism $i\c T_{+}\to T$ constructed in \Cref{pro:affine-part}.

\begin{proposition}\label{pro:d-rest}
There exists a functor-over-monad distributive law $\delta^{\Sigma+}\colon \Sigma T_{+}\to T_{+}\Sigma$ 
that restricts the given law $\delta^{\Sigma}$, that is, the following square commutes:
\[
\begin{tikzcd}
\Sigma T_{+} \ar{r}{\delta^{\Sigma+}} \ar{d}[swap]{\Sigma i} & T_{+}\Sigma \ar{d}{i} \\
\Sigma T \ar{r}{\delta^{\Sigma}} & T\Sigma 
\end{tikzcd}
\] 
\end{proposition}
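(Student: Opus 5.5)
The plan is to define $\delta^{\Sigma+}_X$ via the universal property of the equalizer $i_{\Sigma X}\colon T_{+}\Sigma X\to T\Sigma X$ from \Cref{pro:affine-part}. Set $d := \delta^{\Sigma}_X\comp \Sigma i_X\colon \Sigma T_{+}X\to T\Sigma X$. By \Cref{lem:positive-criterion}, $d$ factors (uniquely, since $i$ is monic) through $i_{\Sigma X}$ iff
\[ T!\comp \delta^{\Sigma}_X\comp \Sigma i_X \;=\; \eta_1\comp !. \]
Once this holds, we define $\delta^{\Sigma+}_X$ as the factorization, so the square in the statement commutes by construction.

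To verify the equation, use naturality of $\delta^{\Sigma}$ to rewrite $T!_{\Sigma X}\comp\delta^{\Sigma}_X$. Write $!_{\Sigma X} = !_{\Sigma 1}\comp \Sigma !_X$. Then
\[ T!_{\Sigma X}\comp \delta^{\Sigma}_X\comp\Sigma i_X = T!_{\Sigma 1}\comp \delta^{\Sigma}_1\comp \Sigma T!_X\comp \Sigma i_X = T!_{\Sigma 1}\comp \delta^{\Sigma}_1\comp \Sigma(\eta_1\comp !_{T_+X}), \]
using the equalizer property $T!\comp i = \eta\comp !$. By the unit axiom of a functor-over-monad distributive law, $\delta^{\Sigma}\comp\Sigma\eta=\eta\comp$, so this becomes $T!_{\Sigma 1}\comp \eta_{\Sigma 1}\comp \Sigma !_{T_+X} = \eta_1\comp !_{\Sigma 1}\comp\Sigma! = \eta_1\comp !$, as required.

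It then remains to check that $\delta^{\Sigma+}$ is a distributive law. Naturality and the unit and multiplication axioms all follow because $i$ is componentwise monic. We postcompose each required equation with $i$, then use that $i$ is a monad morphism (so $i\comp\eta^{+}=\eta$, $i\comp\mu^{+}=\mu\comp Ti\comp i$, and $i$ is natural) together with the defining square, and reduce to the corresponding axiom for $\delta^{\Sigma}$.

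The multiplication axiom is the most bookkeeping-heavy step. We must show
\[ i\comp\delta^{\Sigma+}\comp\Sigma\mu^{+} = i\comp\mu^{+}\comp T_{+}\delta^{\Sigma+}\comp\delta^{\Sigma+}T_{+}. \]
The right side rewrites to $\mu\comp Ti\comp iT_{+}\comp T_{+}\delta^{\Sigma+}\comp\dots$. Moving $i$ past $T_{+}\delta^{\Sigma+}$ by naturality of $i$ and applying the square twice turns this into $\mu\comp T\delta^{\Sigma}\comp\delta^{\Sigma}T\comp\Sigma(Ti\comp iT_{+})$. The left side becomes $\delta^{\Sigma}\comp\Sigma\mu\comp\Sigma(Ti\comp i T_{+})$. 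The two sides then agree by the multiplication axiom of $\delta^{\Sigma}$.

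I expect this step to be the main obstacle only in terms of care with whiskerings; the conceptual core is the affineness computation in the second paragraph above.
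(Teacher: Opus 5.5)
Your proposal is correct and matches the paper's proof. The paper also obtains $\delta^{\Sigma+}_X$ from the equalizer $i_{\Sigma X}$, using the same chain $T!\comp\delta^\Sigma\comp\Sigma i_X = T!\comp\delta^\Sigma\comp\Sigma(\eta\comp !) = \eta\comp !$ (via $! = !\comp\Sigma !$, naturality of $\delta^\Sigma$, the equalizer equation, and the unit axiom). Your explicit check of naturality and of the unit and multiplication axioms, by postcomposing with the monic $i$, spells out what the paper only asserts, namely that the axioms are inherited along the injection $i$.
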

\begin{proof}
Consider the diagram
\begin{equation*}
\begin{tikzcd}[column sep=12ex, row sep=normal]
\Sigma T_{+}X
	\rar["\Sigma i_X"]
	\dar[dashed,"\delta^{\Sigma+}"'] &
\Sigma
	\dar["\delta^\Sigma"] TX
\\
T_{+}\Sigma X
	\rar["i_{\Sigma X}"] &
T\Sigma X
	\ar[r,shift right=.35ex,"T!"']
	\ar[r,shift left=.75ex,"\eta\comp !"] &
T 1
\end{tikzcd}
\end{equation*}
The dashed arrow is the component of the distributive law in question. To show that 
it indeed exist, by universality of equalizers, it suffices to prove that $(\eta\comp !)\comp\delta^\Sigma\comp\Sigma i_X = (T!)\comp\delta^\Sigma\comp\Sigma i_X$.
Indeed, $(\eta\comp !)\comp\delta^\Sigma\comp\Sigma i_X = \eta\circ \bang$, and 
\begin{align*}
T!\comp\delta^\Sigma\comp\Sigma i_X 
=&\; T!\comp (T\Sigma !)\comp\delta^\Sigma\comp\Sigma i_X\\*
=&\; T!\comp \delta^\Sigma\comp(\Sigma T!)\comp\Sigma i_X\\
=&\; T!\comp \delta^\Sigma\comp\Sigma(\eta\comp !)\comp\Sigma i_X\\
=&\; T!\comp\eta\comp \Sigma !\\
=&\; \eta\comp !\comp\Sigma !\\
=&\; \eta\comp !.
\end{align*}
The third step uses the unit axiom of the distributive law $\delta^\Sigma$. 
The axioms of a distributive law hold for $\delta^{\Sigma+}$ because they are inherited from those for
$\delta^\Sigma$ along the injection $i$.
\end{proof}

Recall that the free monad $\Sigma^{*}$ in $\C$ extends to the free monad $\barSigmas$ in $\Kl(T)$.

\begin{proposition}\label{prop:sigma-star-pres-positive}
$\barSigmas$ preserves affine morphisms.
\end{proposition}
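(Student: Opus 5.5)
Concretely, we need: for an affine $p\colon P\to TX$ (i.e.\ $p=i_X\comp p'$ with $p'\colon P\to T_{+}X$), the Kleisli morphism $\barSigmas p\colon \Sigmas P\kleislito \Sigmas X$, i.e.\ the $\C$-morphism $\delta^{\Sigmas}_X\comp \Sigmas p\colon \Sigmas P\to T\Sigmas X$, is affine. The plan is to show that $\delta^{\Sigmas}$ restricts along $i$, exactly as $\delta^{\Sigma}$ does in \Cref{pro:d-rest}. Once we have a law $\delta^{\Sigmas+}\colon \Sigmas T_{+}\to T_{+}\Sigmas$ with $i\comp\delta^{\Sigmas+} = \delta^{\Sigmas}\comp \Sigmas i$, we get
\[
\delta^{\Sigmas}_X\comp \Sigmas p \;=\; \delta^{\Sigmas}_X\comp\Sigmas i_X\comp \Sigmas p' \;=\; i_{\Sigmas X}\comp \delta^{\Sigmas+}_X\comp \Sigmas p',
\]
which factors through $i_{\Sigmas X}$ and hence is affine.

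To obtain $\delta^{\Sigmas+}$, the cleanest route is to apply \Cref{prop:initial-alg-lift} to the affine monad $T_{+}$. The law $\delta^{\Sigma+}$ from \Cref{pro:d-rest} gives a Kleisli extension of $\Sigma$ to $\Kl(T_{+})$. Hence $\Sigmas$ extends to the free monad of that extension, with corresponding distributive law $\delta^{\Sigmas+}$. The monad morphism $i$ induces a functor $I\colon \Kl(T_{+})\to\Kl(T)$, identity on objects and $f\mapsto i\comp f$. This functor preserves pure morphisms, and by \Cref{pro:d-rest} it commutes with the two extensions of $\Sigma$. Since $I$ preserves $J\eta_X$ and $J\ini_X$ and commutes with the $\Sigma$-extensions, it maps free-algebra extensions in $\Kl(T_{+})$ to $\barSigma$-algebra morphisms in $\Kl(T)$ agreeing on generators. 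By uniqueness of free extensions, $I$ then commutes with the extensions of $\Sigmas$, which is exactly the identity $i\comp\delta^{\Sigmas+}=\delta^{\Sigmas}\comp\Sigmas i$.

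As a fallback that avoids Kleisli-level bookkeeping, I could argue directly via \Cref{lem:positive-criterion}: show $T!\comp\delta^{\Sigmas}\comp\Sigmas p = \eta\comp !$. The computation mirrors the proof of \Cref{pro:d-rest}. First rewrite $T!=T!\comp T\Sigmas!$ and use naturality of $\delta^{\Sigmas}$. Then use $T!\comp p=\eta\comp!$ and the unit axiom $\delta^{\Sigmas}\comp\Sigmas\eta=\eta$. This gives $T!\comp\eta\comp\Sigmas! = \eta\comp!$. This requires knowing that $\delta^{\Sigmas}$ is natural and satisfies the unit law, which it does as the distributive law corresponding to the Kleisli extension $\barSigmas$.

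The main obstacle is the bookkeeping of which distributive law axioms $\delta^{\Sigmas}$ satisfies, particularly the naturality step $T\Sigmas!\comp\delta^{\Sigmas}=\delta^{\Sigmas}\comp\Sigmas T!$. Naturality in $\C$ holds because the Kleisli extension $\barSigmas$ is a functor agreeing with $\Sigmas$ on pure maps. I would check this first, and then use the direct computation as the main proof, since it is shorter.
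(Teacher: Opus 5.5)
Your proposal is correct. The first route you sketch is precisely the paper's proof: the functor $I\colon \Kl(T_{+})\to\Kl(T)$, the extension $\widetilde{\Sigma}$ induced by the restricted law $\delta^{\Sigma+}$ of \Cref{pro:d-rest}, and $I\widetilde{\Sigma}^{*}=\barSigmas I$ by uniqueness of free extensions. You even spell out the commutation $I\widetilde{\Sigma}=\barSigma I$ that the paper leaves implicit.

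The argument you settle on as the main proof is genuinely different, and it also goes through. It is the computation from the proof of \Cref{pro:d-rest}, run for $\delta^{\Sigmas}$ instead of $\delta^{\Sigma}$ and for an arbitrary affine $p$ instead of $i_X$. This gives $T!\comp\delta^{\Sigmas}_X\comp\Sigmas p = T!\comp\delta^{\Sigmas}_1\comp\Sigmas(\eta\comp !)=T!\comp\eta\comp\Sigmas ! =\eta\comp !$. Hence $\barSigmas p=\delta^{\Sigmas}_X\comp\Sigmas p$ is affine by \Cref{lem:positive-criterion}.

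The two inputs you flag are indeed available. With $\id_{TX}\colon TX\kleislito X$ one has $\delta^{\Sigmas}_X=\barSigmas(\id_{TX})$. Naturality then follows from functoriality of $\barSigmas$ together with $\barSigmas J=J\Sigmas$, which holds because $J\Sigmas f$ is a $\barSigma$-algebra morphism extending $J(\eta\comp f)$. The unit law follows from $\barSigmas$ preserving identities.

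Your route is shorter and lighter:
\begin{itemize}
\item it uses only naturality and the unit axiom;
\item it avoids building $\widetilde{\Sigma}$ on $\Kl(T_{+})$, and hence avoids checking that $\delta^{\Sigma+}$ satisfies the distributive-law axioms so that \Cref{prop:initial-alg-lift} applies over $T_{+}$;
\item it needs no monad structure on $T_{+}$ beyond the equalizer criterion;
\item it shows in general that the Kleisli extension induced by any functor-over-monad distributive law preserves affine morphisms.
\end{itemize}
The paper's route yields more structure. It shows that $\barSigmas$ restricts along $I$ to the free monad of $\widetilde{\Sigma}$, i.e.\ a restricted law $\delta^{\Sigmas+}$ with $i\comp\delta^{\Sigmas+}=\delta^{\Sigmas}\comp\Sigmas i$. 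That is more than the statement needs.
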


\begin{proof}
The monad morphism $i\c T_{+}\to T$ induces a functor
\[ I\colon \Kl(T_{+}) \to \Kl(T) \]      
that acts as identity on objects, and is given on morphisms by
\[ (p\colon P\to T_{+}X)\quad\mapsto\quad (i_X\comp p\colon P\to TX).  \]
Note that $I$ commutes with the canonical left adjoints:
\begin{equation}\label{eq:i-commute}
\begin{tikzcd}
\Kl(T_{+}) \ar{rr}{I} && \Kl(T) \\
& \C \ar{ul}{J_{+}} \ar{ur}[swap]{J} &
\end{tikzcd}
\end{equation}
Consider the Kleisli extension
\[ \widetilde{\Sigma}\colon \Kl(T_{+})\to \Kl(T_{+})  \]
of $\Sigma$ induced by the restricted distributive law $\delta^{\Sigma+}$ from \Cref{pro:d-rest}. By \Cref{prop:initial-alg-lift}, we know that $\barSigma$ and $\widetilde{\Sigma}$ have free monads $\barSigmas$ and $\widetilde{\Sigma}^{*}$, both extending the free monad $\Sigmas$. From this and \eqref{eq:i-commute} it follows that
\[ I\widetilde{\Sigma}^{*}=\barSigmas I. \]  
Thus, for every affine morphism $p=Ip_{+}$ of $\Kl(T)$, we have
\[ \barSigmas P= \barSigmas Ip_{+} = I\widetilde{\Sigma}^{*}p_{+},  \]
which proves that $\barSigmas p$ is affine.
\end{proof}

We are prepared to prove \Cref{thm:cong-trace-positive}:

\begin{proof}[Proof of \Cref{thm:cong-trace-positive}]
Note first that a pre-De Simone law \eqref{eq:de-simone-law} is a De Simone law iff it is natural w.r.t.\ all affine morphisms. Since $\tr$ is affine by assumption, we know that $\barSigmas\tr$ is affine by \Cref{prop:sigma-star-pres-positive}. Therefore, by \Cref{thm:cong-trace}, we conclude that $\tr$ forms a morphic congruence.
\end{proof}

\subsection[Proof of Proposition~\ref{prop:tr-ineq}]{Proof of \Cref{prop:tr-ineq}}
Suppose that the morphism $j\comp\brks{\eta,\eta}\c 1\to T(1+1)$ and every component~\eqref{eq:de-simone-law} is affine. Let us first show that $T!\comp\gamma = \eta\comp!$, meaning affineness of $\gamma$ 
by \Cref{lem:positive-criterion}.
Consider the following diagram:
\begin{equation*}
\begin{tikzcd}[column sep=4em, row sep=normal]
\Sigma(\mS)
	\dar["\Sigma\brks{\id,\gamma}"']
	\ar[rr,"\iota"] 
	&[4em] &
\mS
	\dar["\brks{\id,\gamma}"]
\\
\Sigma(\mS\times TB\mS)
	\rar["\brks{\iota\comp\Sigma\fst,\bar\rho}"] 
	\dar["\Sigma(\id\times T!)"']
&
\mS\times TB\Sigmas\mS
	\rar["\id\times TB\mu"]
&
\mS\times TB\mS
	\dar["\id\times T!"]
\\
\Sigma(\mS\times T1)
	\rar["\brks{\iota\comp\Sigma\fst,\delta^\Sigma\comp \Sigma(T!\comp j\comp (\eta\times\id))}"] 
&
\mS\times T\Sigma 1
	\rar["\id\times T!"] &
\mS\times T1
\end{tikzcd}
\end{equation*}
The top cell commutes by \Cref{lem:op-model-coincidence}. The bottom cell corresponds to 
a pair of equations, one of which is trivial, and the other one amounts to
\begin{align*}
T!\comp TB\mu\comp\bar\rho = T!\comp\delta^\Sigma\comp\Sigma(T!\comp j\comp (\eta\times T!)).
\end{align*}
Using the construction~\eqref{eq:bar-rho} of $\bar\rho_X$:
\begin{align*}
T!\comp TB\mu\comp \bar\rho 
=&\; T!\comp\bar\rho\\  
=&\; T!\comp\mu\comp T\rho\comp\delta^\Sigma\comp\Sigma j\comp\Sigma(\eta\times\id)\\  
=&\; \mu\comp TT!\comp T\rho\comp\delta^\Sigma\comp\Sigma j\comp\Sigma(\eta\times\id)\\  
=&\; \mu\comp T\eta\comp T!\comp\delta^\Sigma\comp\Sigma j\comp\Sigma(\eta\times\id)\\  
=&\; T!\comp\delta^\Sigma\comp\Sigma j\comp\Sigma(\eta\times\id)\\  
=&\; T!\comp T\Sigma!\comp\delta^\Sigma\comp\Sigma j\comp\Sigma(\eta\times\id)\\  
=&\; T!\comp \delta^\Sigma\comp \Sigma T!\comp\Sigma j\comp\Sigma(\eta\times\id)\\  
=&\; T!\comp\delta^\Sigma\comp \Sigma T!\comp\Sigma j\comp\Sigma(\eta\times T!)
\end{align*}
using the assumption $T!\comp\rho_X = \eta\comp!$.
Thus, the entire diagram commutes. Since the diagram expresses the fact that 
$\brks{\id,T!\comp\gamma}$ is a unique $\Sigma$-algebra morphism, it suffices to show that the following diagram commutes to
conclude that $T!\comp\gamma = \eta\comp !$: 
\begin{equation*}
\begin{tikzcd}[column sep=3em, row sep=normal]
\Sigma\mS
	\dar["\Sigma\brks{\id,\eta\comp !}"']
	\ar[rr,"\iota"] 
	&[7em] &
\mS
	\dar["\brks{\id,\eta\comp !}"]
\\
\Sigma(\mS\times T1)
	\rar["\brks{\iota\comp\Sigma\fst,\delta^\Sigma\comp \Sigma(T!\comp j\comp (\eta\times\id))}"] 
&
\mS\times T\Sigma 1
	\rar["\id\times T!"] &
\mS\times T1
\end{tikzcd}
\end{equation*}
Again, the relevant equation to verify is
\begin{align*}
T!\comp \delta^\Sigma\comp \Sigma(T!\comp j\comp\brks{\eta,\eta}) = \eta\comp !,
\end{align*}
and indeed, 
\begin{align*}
T!\comp \delta^\Sigma\comp \Sigma(T!\comp j\comp\brks{\eta,\eta}) =&\;
T!\comp \delta^\Sigma\comp \Sigma(\eta\comp !)\\
=&\; T!\comp\eta\comp \Sigma !\\
=&\; \eta\comp !,
\end{align*}
where we used the assumption that $j\comp\brks{\eta,\eta}$ is affine, that is, $T!\comp j\comp\brks{\eta,\eta} = \eta\comp !$.

Let us proceed with showing the inequality $T!\comp\tr \sqsubseteq\eta\comp !$.
We use the fact that $\tr$ is uniquely determined by the diagram
\begin{equation*}
\begin{tikzcd}[column sep=3em, row sep=normal]
\mS
	\ar[rrr, "\tr"]
	\dar["\gamma"'] & & & T\mu B\dar["T\beta^{-1}"]
\\
TB\mS\rar["TB\tr"] & TBT\mu B\rar["T\delta^B"] & TTB\mu B\rar["\mu"] & TB\mu B                
\end{tikzcd}
\end{equation*}
This means that $\tr$ is the least fixpoint, or equivalently the least prefixpoint, of the monotone endomap on the DCPO $\Kl(T)(\mS,\mu B)$ given by 
\begin{align*}
f\quad\mapsto\quad T\beta\comp\mu\comp T(\delta^B\comp Bf)\comp\gamma.
\end{align*}
We use the following principle~\cite[Corollary 2.6]{amm21} to show that $T!\comp\tr\sqsubseteq\eta\comp !$: 

\begin{itemize}
  \item[]
\emph{Given a DCPO $P$ with $\bot$, if $\CF\c P\to P$ is monotone, then its least (pre)fixpoint $\mu\CF$
belongs to every subset $S\subseteq P$ which contains $\bot$ and is closed under $f$ 
and directed joins.}
\end{itemize}
We apply this result to 
\[P=\Kl(T)(\mS,\mu B),\qquad \CF(f) = T\beta\comp\mu\comp T(\delta^B\comp Bf)\comp\gamma,\qquad
S = \{\,f\mid T!\comp f\sqsubseteq \eta\comp !\,\}.\] Since $\C$ is $\DCPOb$-enriched and $\barB$ is locally monotone (\Cref{def:trace-setting}), we conclude that~$P$ is a DCPO
with $\bot$, and that $\CF$ is monotone. Obviously, $S$ is closed under directed joins. Moreover, $S$ contains $\bot$: we have $\bot\sqsubseteq \eta\comp !$ and therefore $T!\comp \bot \sqsubseteq T!\comp \eta \comp ! = \eta\comp !$. Let us show that $S$ is closed under $\CF$. Suppose that $f\in S$, i.e.\ $T!\comp f\sqsubseteq \eta\comp !$.
Then
\begin{align*}
T!\comp \CF(f) =& T!\comp T\beta\comp\mu\comp T(\delta^B\comp Bf)\comp\gamma \\
=&\; T!\comp\mu\comp T(\delta^B\comp Bf)\comp\gamma \\
=&\; T!\comp\mu\comp TTB!\comp T(\delta^B\comp Bf)\comp\gamma \\
=&\; T!\comp\mu\comp T(\delta^B\comp BT!\comp Bf)\comp\gamma \\
=&\; T!\comp \mu\comp T(\delta^B\comp B(T!\comp f))\comp\gamma\\
\sqsubseteq&\; T!\comp \mu\comp T(\delta^B\comp B(\eta\comp !))\comp\gamma\\
=&\; T!\comp \mu\comp T(\eta\comp B !)\comp\gamma\\
=&\; T!\comp TB !\comp\gamma\\
=&\; T!\comp \gamma\\
=&\; \eta\comp !
\end{align*}
Thus, $\CF(f)\in S$, and therefore $\tr\in S$, i.e.\ $T!\comp\tr\sqsubseteq\eta\comp !$.
\qed

\subsection[Proof of Proposition~\ref{prop:de-simone-rule-law-affine}]{Proof of \Cref{prop:de-simone-rule-law-affine}}

By \Cref{prop:affine-vs-natural}, our task is to shown for all $X\in \Set$, the two legs of the diagram
\[
\begin{tikzcd}
\Sigma(\Pow X+L\times \Pow X + 1) \ar{d}[swap]{\Sigma\delta^{B_0}} \ar{r}{\rho} & \Pow(L\times \Sigmas \Pow X +1) \ar{d}{\Pow B\delta^{\Sigmas}}  \\
\Sigma \Pow (X+L\times X + 1) \ar{d}[swap]{\delta^{\Sigma}} & \Pow (L\times \Pow \Sigmas X +1) \ar{d}{\Pow \delta^{B}} \\
\Pow \Sigma (X+L\times X + 1) \ar{r}{\Pow \rho} & \Pow \Pow (L\times \Sigmas X+1) \ar{d}{\mu} \\
& \Pow(L\times \Sigmas X + 1) 
\end{tikzcd}
\]
send every element $\f(U_1,\ldots,U_n)$ of $\Sigma(\Pow_{+} X+L\times \Pow_{+} X + 1)$ to the same element of $\Pow(L\times \Sigmas X+1)$. We consider two cases:

\medskip\noindent \underline{\emph{Case 1:}} $U_i=*$ for some $i$.

\smallskip\noindent
Then $\rho$ maps $\f(U_1,\ldots,U_n)$ to $\{*\}$, and $\delta^{B}\comp B\delta^{\Sigmas}$ maps $*$ to $\{*\}$, which means 
\[ \mu\comp \Pow\delta^{B}\comp \Pow B\delta^{\Sigmas}\comp \rho(\f(U_1,\ldots,U_n)) = \{*\}. \]
Similarly, $\delta^{\Sigma}\comp \Sigma\delta^{B_0}$ maps $\f(U_1,\ldots,U_n)$ to a non-empty subset of $\Sigma(X+L\times X+1)$ all whose elements are of the form $\f(-,\ldots,-)$ with $*$ in the $i$th component. Since $\rho$ maps any such element to $\{*\}$, we have
\[ \mu\comp P\rho\comp \delta^{\Sigma}\comp \Sigma\delta^{B_0}(\f(U_1,\ldots,U_n) = \{*\}. \] 
\underline{\emph{Case 2:}} $U_i\in \Pow X+L\times \Pow X$ for all $i$.

\smallskip\noindent For notational simplicity, let us assume that $U_i\in L\times \Pow_{+} X$ (say $U_i=(a_1,V_i)$) for $i=1,\ldots,k$, and $U_{j}\in \Pow_{+} X$ for $j=k+1,\ldots,n$. Then $\rho$ maps $\f(U_1,\ldots,U_n)$ to the set containing the element $*$ and all pairs \[(a,t[V_1/x_1,\ldots,V_k/x_k,U_{k+1}/x_{k+1},\ldots,U_n/x_n])\in L\times \Sigmas \Pow X\] such that
\begin{equation}\label{eq:de-simone-rule-example}
\frac{x_{1}\xto{a_1} y_{1} \quad\cdots\quad x_{k}\xto{a_k} y_{k}}
{\f(x_1,\ldots,x_n)\xto{a} t} 
\end{equation}
is a rule in $\R$. Each such pair is mapped by $\delta^{B}\comp B\delta^{\Sigmas}$ to the set of all pairs
$(a,t[v_1/y_1,\ldots,v_k/y_k,u_{k+1}/x_{k+1},\ldots,u_n/x_n])$ where $v_i\in V_i$ and $u_j\in U_j$. (Here affinity of $t$ is needed, see \Cref{rem:affine}.) Thus, overall,
\[  \mu\comp \Pow\delta^{B}\comp \Pow B\delta^{\Sigmas}\comp \rho(\f(U_1,\ldots,U_n)) \]     
is the subset of $L\times \Sigmas X + 1$ consisting of the following elements:
\begin{enumerate}
\item \label{pf1} the element $*$ of $1$;
\item\label{pf2} all $(a,t[v_1/y_1,\ldots,v_k/y_k,u_{k+1}/x_{k+1},\ldots,u_n/x_n])\in L\times \Sigmas X$ where \eqref{eq:de-simone-rule-example} is a rule in $\R$, $v_i\in V_i$, and $u_j\in U_j$.
\end{enumerate}

Similarly, $\f(U_1,\ldots,U_n)$ is mapped by $\delta^{\Sigma}\comp \Sigma\delta^{B_0}$ to the non-empty set of all $\f((a_1,v_1),\ldots,(a_k,v_k),u_{k+1},\ldots,u_n)$ where $v_i\in V_i$ and $u_j\in U_j$. Moreover $\rho$ maps each such pair to the set containing the element $*$ and all pairs $(a,t[v_1/y_1,\ldots,v_k/y_k,u_{k+1}/x_{k+1},\ldots,u_n/x_n])$ such that \eqref{eq:de-simone-rule-example} is a rule in $\R$. Thus the subset
\[ \mu\comp  \Pow\rho\comp \delta^{\Sigma}\comp \Sigma\delta^{B_0}(\f(U_1,\ldots,U_n)) \] 
of $L\times \Sigmas X+1$ also contains precisely the elements \ref{pf1} and \ref{pf2} described above.

\subsection[Details for Example~\ref{ex:prob-lts}]{Details for \Cref{ex:prob-lts}}

Consider a simple probabilistic process algebra with a terminating process~$\mathsf{nil}$, a prefixing operator $a.(-)$ for each $a\in L$, and a parallel composition operator $\parallel_r$ for each $r\in [0,1]$. The intention is that the process $p\parallel_r q$ progresses one of its subprocesses $p$ or $q$ with probability $r$ or $1-r$, respectively. Formally, the operational semantics are given by the a.s.t.\ probabilistic De Simone specification shown below:
  \begin{gather*}
    \inference{}{\mathsf{nil}\xto{1}*}
    \qquad
    \inference{}{a.p \xto{a/1} p }
    \qquad
    \inference{p\xto{a} p'}{p\parallel_r q \xto{a/r} p' \parallel_r q} \qquad 
    \inference{p\to *}{p\parallel_r q \xto{r} *} \\[1ex]
    \inference{q\xto{a} q'}{p\parallel_r q \xto{a/1-r} p\parallel_r q'} \qquad 
    \inference{q\to *}{p\parallel_r q \xto{1-r} *} 
  \end{gather*}
   That the operational model \eqref{eq:op-model-weighted} is indeed a probabilistic LTS is easily verified by structural induction. The only interesting case is that of parallel composition. Thus consider a term $p\parallel_r q$ and suppose that both
\[ \gamma(p) =  \sum_{a\in L}\sum_{p'\in \mS} r_{a,p'}\cdot p' + r_{*}\cdot * \qqand \gamma(q) = \sum_{a\in L}\sum_{q'\in \mS} s_{a,q'}\cdot q' + s_{*}\cdot *  \]
are probability distributions. By definition of \eqref{eq:op-model-weighted} we have that
\begin{align*} 
& \gamma(p\parallel_r q) \\
= &  \sum_{a\in L}\sum_{p'\in \mS} r\cdot r_{a,p'} \cdot (p'\parallel_r q) + r\cdot r_{*}\cdot * \\
&  + \sum_{a\in L}\sum_{q'\in \mS} (1-r)\cdot s_{a,q'} \cdot (p\parallel_r q') + (1-r)\cdot s_{*}\cdot * 
\end{align*}
The sum of all coefficients is 
\begin{align*} 
& r\cdot \big(\sum_{a\in L}\sum_{p'\in \mS} r_{a,p'} + r_{*}\big)\\
&  + (1-r)\cdot \big( \sum_{a\in L}\sum_{q'\in \mS} s_{a,q'}+ s_{*} \big)\\
=~& r\cdot 1 + (1-r)\cdot 1 \\
=~&1,
\end{align*}
which proves that $\gamma(p\parallel_r q)$ is a probability distribution. 

Moreover, the probabilistic LTS specified by the above rules is almost surely terminating. In fact, the LTS is well-founded, that is, there exists no infinite path $t=t_0\xto{a_1}t_1\xto{a_2}t_2 \cdots$ in the LTS all of whose transitions have positive probabilities. This is immediate from the rules by structural induction on $t$.

\subsection[Proof of Proposition~\ref{prop:de-simone-rule-law-affine-weighted}]{Proof of \Cref{prop:de-simone-rule-law-affine-weighted}}
Let $\R$ be a probabilistic De Simone specification. 
To prove that the corresponding pre-De Simone law \eqref{eq:de-simone-law-from-rules-weighted} is a De Simone law, by \Cref{prop:affine-vs-natural} we need to show that the two legs of the diagram below send any element $\f(U_1,\ldots,U_n)$ of $\Sigma(\D X+L\times \D X + 1)$ to the same element of $\W(L\times \Sigmas X+1)$. (Recall that $\D=T_{+}$.)
\[
\begin{tikzcd}
\Sigma(\W  X+L\times \W  X + 1) \ar{d}[swap]{\Sigma\delta^{B_0}} \ar{r}{\rho_{\W  X}} & \W (L\times \Sigmas \W  X +1) \ar{d}{\W  B\delta^{\Sigmas}}  \\
\Sigma \W  (X+L\times X + 1) \ar{d}[swap]{\delta^{\Sigma}} & \W  (L\times \W  \Sigmas X +1) \ar{d}{\W  \delta^{B}} \\
\W  \Sigma (X+L\times X + 1) \ar{r}{\W  \rho_X} & \W  \W  (L\times \Sigmas X+1) \ar{d}{\mu} \\
& \W (L\times \Sigmas X + 1) 
\end{tikzcd}
\]
By symmetry, we may assume that:
\begin{itemize}
\item $U_j\in \D X$ for $j=1,\ldots, k$, say $U_j=\sum_{i\in I_j} r_i^j\cdot u_i^j$. 
\item $U_j\in L\times \D X$ for $j=k+1,\ldots,m$, say $U_j=(a_j, \sum_{i\in I_j} r_i^j\cdot u_i^j)$.
\item $U_j=*$ for $j=m+1,\ldots,n$. 
\end{itemize}
Let $\R_1$ be the set of all rules in $\R$ of the form 
\begin{equation*}
\frac{x_{{k+1}}\xto{a_{k+1}} y_{k+1} \quad\cdots\quad x_{m}\xto{a_m} y_{m} \quad x_{m+1} \to * \quad\cdots\quad x_{n} \to *}
{\f(x_1,\ldots,x_n)\xto{a/r} t} 
\end{equation*}
Similarly, let $\R_2$ be the set of all rules in $\R$ of the form 
\begin{equation*}
\frac{x_{{k+1}}\xto{a_{k+1}} y_{k+1} \quad\cdots\quad x_{m}\xto{a_m} y_{m} \quad x_{m+1} \to * \quad\cdots\quad x_{n} \to *}
{\f(x_1,\ldots,x_n)\xto{r} *} 
\end{equation*}
We write $a_R$ and $t_R$ for the output label and output term of a rule $R\in \R_1$.

We first consider the lower leg $\mu\comp \W\rho_X\comp \delta^{\Sigma}\comp \Sigma\delta^{B_0}$ of the diagram:
\begin{enumerate}
\item The map $\delta^{\Sigma}\comp \Sigma\delta^{B_0}$ sends $\f(U_1,\ldots,U_n)$ to the element of $\W \Sigma (X+L\times X + 1)$ given by 
\[ \sum_{i_1\in I_1}\cdots\sum_{i_m\in I_m} r_{i_1}^1\cdot \cdots \cdot r_{i_m}^m\cdot \f(u_{i_1}^1,\ldots,u_{i_k}^k,(a_{k+1},u_{i_{k+1}}^{k+1}),\ldots, (a_{m},u_{i_{m}}^{m}), *,\ldots,*).  \]
\item The map $\mu\comp \W\rho_X$ sends this formal sum to the element of $\W (L\times \Sigmas X + 1)$ given by 
\scriptsize
\begin{align*} 
& \sum_{i_1\in I_1}\cdots\sum_{i_m\in I_m} r_{i_1}^1\cdot \cdots \cdot r_{i_m}^m\cdot \big(\sum_{R\in \R_1} r_{R}\cdot t_R[u_{i_1}^1/x_1,\ldots,u_{i_k}^k/x_k,u_{i_{k+1}}^{k+1}/y_{k+1},\ldots, u_{i_m}^m/y_m]\\
& +\sum_{R\in \R_2} r_R\cdot *\big)\\
=& \sum_{i_1\in I_1}\cdots\sum_{i_m\in I_m}\sum_{R\in \R_1} r_R \cdot r_{i_1}^1\cdot \cdots \cdot r_{i_m}^m\cdot t_R[u_{i_1}^1/x_1,\ldots,u_{i_k}^k/x_k,u_{i_{k+1}}^{k+1}/y_{k+1},\ldots, u_{i_m}^m/y_m]  \\
& + \sum_{i_1\in I_1}\cdots\sum_{i_m\in I_m}\sum_{R\in \R_2} r_R \cdot r_{i_1}^1\cdot \cdots \cdot r_{i_m}^m\cdot * 
\end{align*}
\normalsize
The part after `$+$' simplifies to
\scriptsize
\[ \sum_{i_1,\ldots, i_m\in I_m}\sum_{R\in \R_2} r_R \cdot r_{i_1}^1\cdot \cdots \cdot r_{i_m}^m\cdot * = \sum_{R\in \R_2}r_R\cdot (\sum_{i_1,\ldots,i_m} r_{i_1}^1\dots r_{i_m}^m)\cdot * = \sum_{R\in \R_2} r_R\cdot * \]  
\normalsize
because the given formal sums in $U_1,\ldots,U_m$ are probability distributions and $r_{i_1}^1\dots r_{i_m}^m$ correspond to the probabilities arising in their product distribution. Therefore, overall, the lower leg of the diagram maps $\f(U_1,\ldots,U_n)$ to 
\scriptsize
\begin{equation}\label{eq:proof-term-1}
  \sum_{i_1,\ldots,i_m}\sum_{R\in \R_1} r_R \cdot r_{i_1}^1\cdot \cdots \cdot r_{i_m}^m\cdot t_R[u_{i_1}^1/x_1,\ldots,u_{i_k}^k/x_k,u_{i_{k+1}}^{k+1}/y_{k+1},\ldots, u_{i_m}^m/y_m] + \sum_{R\in \R_2} r_R\cdot *.
\end{equation}
\normalsize
\end{enumerate}
Similarly, for the upper leg $\mu\comp \W\delta^{B}\comp \W B\delta^{\Sigmas}\comp \rho_{\W X}$ of the diagram, we observe:
\begin{enumerate}
\item The map $\rho_{\W X}$ sends $\f(U_1,\ldots,U_n)$ to the element of $\W(L\times \Sigmas\W X+1)$ given by 
\[ \sum_{R\in \R_1} r_{R}\cdot t_R[U_1/x_1,\ldots,U_k/x_k,U_{k+1}/y_{k+1},\ldots,U_m/y_m] + \sum_{R\in \R_2} r_R\cdot *. \]  
\item Since all the terms $t_R$ are affine, the map $\mu\comp \W\delta^{B}\comp \W B\delta^{\Sigmas}$ sends this formal sum to the element of $\W (L\times \Sigmas X + 1)$ given by
\scriptsize
\[ \sum_{R\in \R_1} r_{R}\cdot \sum_{i_1,\ldots,i_m} r_{i_1}^1\cdot \cdots \cdot r_{i_m}^m \cdot t_R[u_{i_1}^1/x_1,\ldots,u_{i_k}^k/x_k,u_{i_{k+1}}^{k+1}/y_{k+1},\ldots,u_{i_m}^m/y_m] + \sum_{R\in \R_2} r_R\cdot *,
\]
\normalsize
which is equal to \eqref{eq:proof-term-1}.\qed
\end{enumerate}

\end{document}